\documentclass[a4paper]{amsart}
\usepackage{amscd}
\usepackage{amsmath}
\usepackage{amssymb}
\usepackage{amsthm}
\usepackage{bbm}
\usepackage{stmaryrd}

\usepackage[T1]{fontenc}

\newif\ifpdf
\ifx\pdfoutput\undefined
   \pdffalse        
\else
   \pdfoutput=1     
   \pdftrue
\fi

\ifpdf
   \usepackage[pdftex]{graphicx}
   \pdfadjustspacing=1
   \pdfcompresslevel=9
\else
   \usepackage{graphicx}
\fi

\frenchspacing

\numberwithin{equation}{section} \swapnumbers

\newtheorem{satz}{Satz}[section]

\newtheorem{theorem}[satz]{Theorem}
\newtheorem{proposition}[satz]{Proposition}
\newtheorem{corollary}[satz]{Corollary}
\newtheorem{lemma}[satz]{Lemma}
\newtheorem{assumption}[satz]{Assumption}

\newtheorem{definition}[satz]{Definition}

\newtheorem{remark}[satz]{Remark}

\newtheorem{example}[satz]{Example}

\newcommand{\bbr}{\mathbb{R}}
\newcommand{\bbe}{\mathbb{E}}
\newcommand{\bbn}{\mathbb{N}}
\newcommand{\bbp}{\mathbb{P}}
\newcommand{\bbq}{\mathbb{Q}}

\newcommand{\bbc}{\mathbb{C}}

\newcommand{\calb}{\mathcal{B}}
\newcommand{\cald}{\mathcal{D}}
\newcommand{\cale}{\mathcal{E}}
\newcommand{\calf}{\mathcal{F}}

\newcommand{\call}{\mathcal{L}}
\newcommand{\calm}{\mathcal{M}}
\newcommand{\calo}{\mathcal{O}}
\newcommand{\calp}{\mathcal{P}}
\newcommand{\calv}{\mathcal{V}}
\newcommand{\calx}{\mathcal{X}}
\newcommand{\caly}{\mathcal{Y}}

\begin{document}

\title[Real-world forward rate dynamics with affine realizations]{Real-world forward rate dynamics with affine realizations}
\author{Eckhard Platen \and Stefan Tappe}
\address{University of Technology Sydney, School of Mathematical Sciences and Finance Discipline Group, PO Box 123, Broadway, NSW 2007, Australia}
\email{eckhard.platen@uts.edu.au}
\address{Leibniz Universit\"{a}t Hannover, Institut f\"{u}r Mathematische Stochastik, Welfengarten 1, 30167 Hannover, Germany}
\email{tappe@stochastik.uni-hannover.de}
\begin{abstract}
We investigate the existence of affine realizations for L\'{e}vy driven interest rate term structure models under the real-world probability measure, which so far has only been studied under an assumed risk-neutral probability measure. For models driven by Wiener processes, all results obtained under the risk-neutral approach concerning the existence of affine realizations are transferred to the general case. A similar result holds true for models driven by compound Poisson processes with finite jump size distributions. However, in the presence of jumps with infinite activity we obtain severe restrictions on the structure of the market price of risk; typically, it must even be constant.
\end{abstract}
\keywords{L\'{e}vy driven interest rate model, real-world forward rate dynamics, affine realization, market price of risk}
\subjclass[2010]{91G80, 60H15}
\maketitle

\section{Introduction}

The purpose of this paper is to investigate when a HJM (Heath-Jarrow-Morton) interest rate term structure model
\begin{align}\label{HJMM}
\left\{
\begin{array}{rcl}
dr_t & = & \big( \frac{d}{d\xi} r_t + \alpha(r_t,Y_t) \big) dt +
\sigma(r_{t})dW_t + \gamma(r_{t-})dX_t
\medskip
\\ r_0 & = & h_0 \medskip
\\ Y_0 & = & y_0
\end{array}
\right.
\end{align}
in the framework of the Benchmark Approach (see \cite{Platen}) admits an affine realization. Here $W$ is a $\bbr^d$-valued Wiener process and $X$ is a $\bbr^n$-valued pure jump L\'{e}vy process $X$ with components having the canonical representations $X^k = x * \mu^{X^k}$ for $k=1,\ldots,m$ and $X^k = x * (\mu^{X^k} - \nu^k)$ for $k=m+1,\ldots,n$, where $\nu^k$ denotes the respective compensator. Under risk-neutral pricing, we refer to \cite{HJM} for the classical HJM model driven by Wiener processes, and, e.g., to \cite{Eberlein_J}--\cite{Eberlein-Raible} for HJM models driven by L\'{e}vy processes. We study the term structure equation (\ref{HJMM}) under the real-world probability measure, and with Musiela parametrization (see \cite{Musiela}), which gives rise to a stochastic partial differential equation (SPDE) in the spirit of \cite{P-Z-book} on some appropriate Hilbert space $H$, whence we will refer to (\ref{HJMM}) as HJMM (Heath-Jarrow-Morton-Musiela) equation. The risk-neutral HJMM equation has been investigated, e.g., in \cite{fillnm, Filipovic-Tappe, Barski, P-Z-paper, Marinelli}. The process $Y$ in (\ref{HJMM}) is an external state process on some state space $\caly$, which appears in the drift term (\ref{alpha-HJM}) below. In order to ensure the absence of arbitrage in the bond market
\begin{align*}
P_t(T) = \exp \bigg( -\int_0^{T-t} r_t(\xi) d\xi \bigg)
\end{align*}
within the framework of the Benchmark Approach, benchmarked bond prices have to be local martingales, which is ensured by choosing a drift term of the form
\begin{equation}\label{alpha-HJM}
\begin{aligned}
\alpha(h,y) &= -\sum_{k=1}^d \big( \sigma^k(h) \Sigma^k(h) - \Theta^k(y) \sigma^k(h) \big) 
\\ &\quad - \sum_{k=1}^m \gamma^k(h) \int_{\mathbb{R}} x \Phi^k(y,x) e^{x \Gamma^k(h)} F^k(dx)
\\ &\quad - \sum_{k=m+1}^n \gamma^k(h) \int_{\mathbb{R}} x \big( \Phi^k(y,x) e^{x \Gamma^k(h)} - 1 \big) F^k(dx).
\end{aligned}
\end{equation}
We refer to Section \ref{sec-bond-benchmark} for a review of the Benchmark Approach, and to Section \ref{sec-HJM-benchmark} for the derivation of the drift condition (\ref{alpha-HJM}). Here we use the notations $\Sigma(h) = -\int_0^{\bullet} \sigma(h)(\xi) d\xi$ and $\Gamma(h) = -\int_0^{\bullet} \gamma(h)(\xi) d\xi$, and the $F^k$ are the L\'{e}vy measures. Furthermore, $(\theta,\psi) = (\Theta(Y),\Psi(Y))$ denotes a pair of market prices of risk, and we have set $\Phi(Y) = 1 - \Psi(Y)$. We call $(\theta,\psi)$ a pair of market prices of risk, because for each $T \in \bbr_+$ the dynamics of the bond prices are of the form
\begin{align}\label{bond-dyn-MPR}
P(T) &= P_{0}(T) \, \cale \big( ( R + a(T) ) \cdot \lambda + b(T) \cdot W + c(T) * (\mu^X - \nu) \big),
\end{align}
where $R$ denotes the short rate and $(\theta,\psi)$ is a solution of the equation
\begin{align}\label{MPR-solution}
a(T) = \langle b(T), \theta \rangle_{\bbr^d} + \langle c(T), \psi \rangle_{L^2(F)}.
\end{align}
Furthermore, the strictly positive supermartingale
\begin{align}\label{Z-candidate-intro}
Z = \cale \big( -\theta \cdot W - \psi * (\mu^X - \nu) \big)
\end{align}
defines a candidate for the density process of an equivalent local martingale measure, and it provides an equivalent local martingale measure if and only if
\begin{align}\label{uniformly}
\text{$Z$ is a uniformly integrable martingale with $\bbp(Z_{\infty} > 0) = 1$.}
\end{align}
The existence of an affine realization for the HJMM equation (\ref{HJMM}) ensures larger analytical tractability of the model, and there exists a well established literature on affine realizations for term structure models under the classical risk-neutral approach. We refer, e.g., to \cite{Bj_Sv, Bj_La, Filipovic, Tappe-Wiener} for Wiener process driven models, and to \cite{Tappe-Levy} for L\'{e}vy process driven models. In all these references, the main idea of an affine realization is that for every initial curve $h_0$ there exists a finite dimensional submanifold, on which the solution process $r$ stays. Compared to this risk-neutral definition of an affine realization, in our framework we demand that for every initial curve $h_0$ there exists a finite dimensional submanifold such that for each starting point $y_0$ of the state process $Y$ the solution $r$ to the HJMM equation (\ref{HJMM}) stays on this submanifold, which means that we are free to specify the market price of risk.

Our first goal of this paper is to derive a criterion which refers to the risk-neutral case, for which the aforementioned literature is available. Namely, our first main result (see Theorem~\ref{thm-HJMM-affine-real}) states that the HJMM equation (\ref{HJMM}) has an affine realization if and only if the following two conditions are satisfied:
\begin{enumerate}
\item[(i)] The risk-neutral HJMM equation has an affine realization.

\item[(ii)] We have $\dim U_{\Psi,\gamma} < \infty$.
\end{enumerate}
Here the risk-neutral HJMM equation corresponds to $(\Theta,\Psi) = 0$, but in our framework we do not assume the existence of an equivalent local martingale measure, and the subspace $U_{\Psi,\gamma} \subset H$ is defined as
\begin{align}\label{def-U-Psi-gamma}
U_{\Psi,\gamma} := \Big\langle \sum_{k=1}^n \int_{\bbr} \Psi^k(y,x) e^{x \Gamma^k(h)} F^k(dx) : h \in H \text{ and } y \in \caly \Big\rangle.
\end{align}
As point (i) has intensively been studied in the literature, our next goal is to have a closer look at condition (ii) to find equivalent conditions, which are easier to check. Our subsequent results (see Proposition~\ref{prop-fin-dim-suff} and Theorems~\ref{thm-dim-U-Psi}, \ref{thm-convex}) show that, under suitable assumptions, condition (ii) is equivalent to the following two conditions:
\begin{enumerate}
\item[(a)] We have $\dim U_{\Psi^k} < \infty$ for $k=1,\ldots,n$.

\item[(b)] We have $\dim U_{\gamma^k} < \infty$ for $k=1,\ldots,n$.
\end{enumerate}
Here the subspaces $U_{\Psi^k} \subset L^2(F^k)$ and $U_{\gamma^k} \subset L^2(F^k;H)$ are defined as
\begin{align}\label{def-U-Psi}
U_{\Psi^k} &:= \langle x \mapsto \Psi^k(y,x) : y \in \caly \rangle, \quad k=1,\ldots,n,
\\ \label{def-U-gamma} U_{\gamma^k} &:= \langle x \mapsto e^{x \Gamma^k(h)} : h \in H \rangle, \quad k=1,\ldots,n.
\end{align}
Conditions (ii) and (a) lead to the consequence that in the presence of jumps with infinite activity the market price of risk is subject to severe restrictions. We will see that, as a further consequence, the market price of risk must typically even be constant in this case; see Proposition~\ref{prop-cumulant} for such a result which follows from (ii), and Proposition~\ref{prop-Fc-Fd} for such a result which follows from (a).

Furthermore, Theorem~\ref{thm-convex} even shows that, under suitable assumptions, condition (ii) implies that the volatility $\gamma$ is constant. 
Therefore, we arrive at the conclusion that, under suitable assumptions, conditions (a) and (b) are equivalent to the following two conditions:
\begin{enumerate}
\item[(a')] The L\'{e}vy processes $X^k$, $k=1,\ldots,n$ are compound Poisson processes with finite jump size distributions.

\item[(b')] The volatilities $\gamma^k$, $k=1,\ldots,n$ are constant.
\end{enumerate}
After this outline regarding conditions which are equivalent to (ii), let us proceed with interpretations of condition (ii). Note that the subspace $U_{\Psi,\gamma}$ only depends on $\Psi$ and $\gamma$. Thus, for purely Wiener process driven models without jumps, the existence of an affine realization is equivalent to the existence of an affine realization in the risk-neutral case, whereas for interest rate models with jumps we need the additional condition that the subspace $U_{\Psi,\gamma}$ is finite dimensional. In Remarks~\ref{remark-geometric} and \ref{remark-geometric-2} we will provide geometric interpretations, which we shall summarize here:
\begin{itemize}
\item The first interpretation is a differential geometric interpretation:
\begin{itemize}
\item In the Wiener process driven case, condition (i) implies that for a given submanifold the required tangential conditions, which we need for stochastic invariance, are already fulfilled for each choice of $y_0 \in \caly$.

\item In contrast, if the model has jumps, then these tangential conditions are fulfilled if and only if the subspace $U_{\Psi,\gamma}$ is finite dimensional.
\end{itemize}

\item The second interpretation concerns measure changes. Every choice of $y_0 \in \caly$ gives rise to the candidate (\ref{Z-candidate-intro}) for the density process of an equivalent local martingale measure. If condition (\ref{uniformly}) is fulfilled, then the following statements are true:
\begin{itemize}
\item For Wiener process driven models without jumps, the drift term under the new probability measure coincides with the classical HJM drift condition. Therefore, changing $y_0 \in \caly$ leads to dynamics after an equivalent measure change, which does not affect stochastic invariance of a given submanifold.

\item Otherwise, in the presence of jumps, the drift term under the new probability measure does not coincide with the HJM drift term of a risk-neutral model. Therefore, changing $y_0 \in \caly$ can usually not be associated to an equivalent measure change, and hence, stochastic invariance is not preserved.
\end{itemize}
\end{itemize}
The remainder of this paper is organized as follows. In Sections \ref{sec-bond-benchmark} and \ref{sec-HJM-benchmark} we review basic ideas and concepts concerning bond market models under the Benchmark Approach. In Section~\ref{sec-affine-real-SPDE} we provide results on invariant foliations and
on affine realizations for general SPDEs driven by L\'evy processes. In Section \ref{sec-affine-real-general} we deal with affine realizations for the HJMM equation and present the indicated result regarding the subspace $U_{\Psi,\gamma}$. Then, as special cases, in Section \ref{sec-Wiener} we investigate the Wiener process driven HJMM equation, and in Section \ref{sec-Levy-cumulant} the L\'{e}vy process driven HJMM equation, where the drift term can be described by the cumulant generating function of the L\'{e}vy process. In Section \ref{sec-suff-prod} we show that conditions (a) and (b) imply condition (ii). In Sections \ref{sec-nec-Psi} and \ref{sec-nec-gamma} we deal with the converse implication of this result. Section \ref{sec-conclusion} concludes. For convenience of the reader, Appendices~\ref{app-analytic}--\ref{app-lin-ind} provide auxiliary results which we need in this paper.

\section{Bond market models under the Benchmark Approach}\label{sec-bond-benchmark}

In this section, we provide a review of the basic ideas and concepts of the Benchmark Approach, which we require in this paper for bond market models. The upcoming definitions and results are well-known and can be found in \cite{Platen}, but we provide them (with proofs) in order to keep our presentation self-contained and to introduce notation, which we will need in further sections.

From now on, let $(\Omega,\mathcal{F},(\mathcal{F}_t)_{t \geq 0},\mathbb{P})$ be a filtered probability space with right-continuous filtration. In the sequel, we will use the notation from \cite{Jacod-Shiryaev}; in particular $\delta \cdot P$ denotes the stochastic integral of a locally bounded, predictable process $\delta$ with respect to a semimartingale $P$. For each $T \in \bbr_+$ let $P(T) = (P_t(T))_{t \in [0,T]}$ be the price process of a zero coupon bond with maturity $T$, which we assume to be a nonnegative semimartingale with $P_T(T) = 1$. We recall some basic concepts from the theory of asset pricing.

\begin{definition}
For each $n \in \bbn$ and all $0 \leq T_1 < \ldots < T_n < \infty$ we call a vector $\delta = (\delta^{T_1},\ldots,\delta^{T_n})$ consisting of locally bounded, predictable processes $\delta^{T_k} = (\delta_t^{T_k})_{t \in [0,T_k]}$ a \emph{strategy}.
\end{definition}

\begin{definition}\label{def-portfolio}
For a strategy $\delta = (\delta^{T_1}, \ldots, \delta^{T_n})$ we define the \emph{portfolio} $S^{\delta} = (S_t^{\delta})_{t \in [0,T_1]}$ as the vector inner product $S^{\delta} := \delta P$, where $P = (P(T_1),\ldots,P(T_n))$.
\end{definition}

\begin{definition}
A strategy $\delta = (\delta^{T_1}, \ldots, \delta^{T_n})$ and the corresponding portfolio $S^{\delta}$ are called \emph{self-financing}, if we have
\begin{align*}
S^{\delta} = S_0^{\delta} + \delta \cdot P,
\end{align*}
where $P = (P(T_1),\ldots,P(T_n))$, and $\delta \cdot P$ denotes the vector It\^{o} integral.
\end{definition}

\begin{definition}\label{def-arbitrage}
A nonnegative self-financing portfolio $S^{\delta}$ is called an \emph{arbitrage portfolio}, if $S_0^{\delta} = 0$ and there is a stopping time $\tau \leq T_1$ such that $\mathbb{P}(S_{\tau}^{\delta} > 0) > 0$.
\end{definition}

Note that Definition \ref{def-arbitrage} is a rather weak notion of arbitrage, since we only consider nonnegative portfolios.

\begin{definition}\label{def-gop}
A strictly positive portfolio process $S^{\delta_*} = (S_t^{\delta_*})_{t \in \bbr_+}$ is called a \emph{growth optimal portfolio}, if for each nonnegative self-financing portfolio $S^{\delta}$ the \emph{benchmarked portfolio} $\hat{S}^{\delta} = (\hat{S}_t^{\delta})_{t \in [0,T_1]}$ defined as $\hat{S}^{\delta} := S^{\delta} / S^{\delta_*}$ is a local martingale.
\end{definition}

\begin{remark}
Let $S^{\delta_*}$ be a growth optimal portfolio and let $S^{\delta}$ be a nonnegative self-financing portfolio. Since $S^{\delta}$ is nonnegative and $S^{\delta_*}$ is positive, the benchmarked portfolio $\hat{S}^{\delta}$ is a nonnegative local martingale, and hence, a supermartingale.
\end{remark}

\begin{remark}\label{remark-max}
The name growth optimal portfolio comes from the result that $S^{\delta_*}$ is the portfolio which maximizes the expected log-utility; see, e.g., \cite{Platen} for further details.
\end{remark}

The importance of the growth optimal portfolio regarding arbitrage portfolios is demonstrated by the next result. For the sake of completeness, we provide its proof here.

\begin{proposition}\label{prop-no-arbitrage}
Suppose there is a growth optimal portfolio $S^{\delta_*}$. Then no arbitrage portfolio exists.
\end{proposition}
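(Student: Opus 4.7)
The plan is to argue by contradiction using the supermartingale property of benchmarked nonnegative self-financing portfolios. Assume an arbitrage portfolio $S^{\delta}$ exists, with $S_0^{\delta} = 0$ and a stopping time $\tau \leq T_1$ such that $\bbp(S_{\tau}^{\delta} > 0) > 0$. Form the benchmarked portfolio $\hat{S}^{\delta} = S^{\delta}/S^{\delta_*}$.

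First, I would note that by Definition~\ref{def-gop}, $\hat{S}^{\delta}$ is a local martingale; since $S^{\delta} \geq 0$ and $S^{\delta_*} > 0$, it is nonnegative, and a nonnegative local martingale is a supermartingale (this is essentially what is observed in the remark after Definition~\ref{def-gop}, via Fatou's lemma applied to a localizing sequence). Moreover, $\hat{S}_0^{\delta} = S_0^{\delta}/S_0^{\delta_*} = 0$.

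Next, I would invoke the optional sampling theorem for nonnegative supermartingales at the bounded stopping time $\tau \leq T_1$, which yields
\begin{align*}
0 \leq \bbe[\hat{S}_{\tau}^{\delta}] \leq \hat{S}_0^{\delta} = 0.
\end{align*}
Since $\hat{S}_{\tau}^{\delta} \geq 0$, this forces $\hat{S}_{\tau}^{\delta} = 0$ almost surely, and then $S_{\tau}^{\delta} = \hat{S}_{\tau}^{\delta} \cdot S_{\tau}^{\delta_*} = 0$ almost surely. This contradicts $\bbp(S_{\tau}^{\delta} > 0) > 0$ and completes the argument.

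The proof is essentially routine; the only subtlety, which I would mention carefully but not elaborate on, is the passage from the local martingale property to the supermartingale property via nonnegativity and Fatou's lemma, and the application of optional sampling at a bounded stopping time. No further structure of the bond market beyond the definitions is needed.
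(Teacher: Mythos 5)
Your proof is correct and follows essentially the same route as the paper: benchmarked nonnegative self-financing portfolios are nonnegative supermartingales, optional sampling at the bounded stopping time $\tau \leq T_1$ forces $\mathbb{E}[\hat{S}_{\tau}^{\delta}] = 0$, and strict positivity of $S^{\delta_*}$ then yields $S_{\tau}^{\delta} = 0$ almost surely. The only cosmetic difference is that you phrase it as a contradiction while the paper argues directly that every such portfolio satisfies $\mathbb{P}(S_{\tau}^{\delta} = 0) = 1$.
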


\begin{proof}
Let $S^{\delta}$ be a nonnegative self-financing portfolio such that $S_0^{\delta} = 0$. Furthermore, let $\tau \leq T_1$ be a stopping time. Since $\hat{S}^{\delta}$ is a nonnegative supermartingale, by Doob's optional sampling theorem we obtain
\begin{align*}
0 \leq \mathbb{E}[ \hat{S}_{\tau}^{\delta} ] \leq \mathbb{E}[ \hat{S}_{0}^{\delta} ] = 0,
\end{align*}
which yields 
\begin{align*}
\mathbb{E} [ S_{\tau}^{\delta} / S_{\tau}^{\delta_*} ] = \mathbb{E}[ \hat{S}_{\tau}^{\delta} ] = 0.
\end{align*}
Since $S^{\delta} / S^{\delta_*}$ is nonnegative, this gives us
\begin{align*}
\mathbb{P} ( S_{\tau}^{\delta} / S_{\tau}^{\delta_*} = 0 ) = 1,
\end{align*}
and, since $S^{\delta_*}$ is strictly positive, we arrive at $\mathbb{P}(S_{\tau}^{\delta} = 0) = 1$.
\end{proof}

Next, we recall how to perform real-world pricing under the Benchmark Approach.

\begin{definition}\label{def-option-pricing}
Let $S^{\delta_*}$ be a growth optimal portfolio, let $T \in \bbr_+$ be arbitrary, and let $H$ be a nonnegative $\calf_T$-measurable random variable such that $H / S_T^{\delta_*} \in \call^1(\bbp)$. We define the real-world price process $\pi^{\delta_*}(H) = (\pi_t^{\delta_*}(H))_{t \in [0,T]}$ via the real-world pricing formula
\begin{align}\label{real-world-formula}
\pi_t^{\delta_*}(H) := S_t^{\delta_*} \, \bbe_{\bbp} \bigg[ \frac{H}{S_T^{\delta_*}} \, \Big| \, \calf_t \bigg], \quad t \in [0,T].
\end{align}
\end{definition}

\begin{remark}
Note that Definition \ref{def-option-pricing} does not rely on the existence of a local martingale measure, and that the real-world price process $\pi^{\delta_*}(H)$ of a payoff profile $H$ is fair in the sense that the benchmarked real-world price process $\hat{\pi}^{\delta_*}(H) = \pi^{\delta_*}(H) / S^{\delta_*}$ is a martingale.
\end{remark}

\begin{remark}
If for a given contingent claim $H$ a self-financing portfolio $\pi^{\delta_*}(H)$ exists, satisfying the above real-world pricing formula (\ref{real-world-formula}), then this portfolio provides the least expensive hedge for $H$, see Prop.~3.3 in \cite{Du-Platen}. If one considers pricing under other pricing rules, e.g. formally applied risk-neutral pricing, then the corresponding benchmarked nonnegative, self-financing hedge portfolios are local martingales and, in general, more expensive. One can argue that in a competitive market the minimal possible price processes are the economically correct price processes, which underpins the special role of the real-world price processes of the Benchmark Approach.
\end{remark}

\begin{remark}
In Section \ref{sec-HJM-benchmark}, we will study bond markets of the form (\ref{bond-market}) with forward rates given by (\ref{HJM-model}). The form (\ref{HJM-drift}) of the drift term ensures that for each $T \in \bbr_+$ the benchmarked price processes for the zero coupon bond (\ref{bond-market}) is a local martingale. Under the real-world pricing formula
\begin{align*}
P_t(T) = S_t^{\delta_*} \, \bbe_{\bbp} \bigg[ \frac{1}{S_T^{\delta_*}} \, \Big| \, \calf_t \bigg]
\end{align*}
it is even a true martingale which represents the minimal possible bond price process for this payoff. Since any benchmarked self-financing portfolio is in our setting a local martingale, the structure of (\ref{HJM-model}) is very general and covers also other pricing rules that replicate the respective payoff in a self-financing manner by a nonnegative portfolio. These pricing rules do not need to be linked to any pricing measure; we refer to Section \ref{sec-HJM-benchmark} for further details.
\end{remark}

To link our approach more closely to the existing literature, let us illustrate the Benchmark Approach within the framework of num\'{e}raire pairs, as, for example, considered in \cite{Hunt}.

\begin{definition}
We introduce the following notions:
\begin{enumerate}
\item A pair $(N,\bbq)$ is called a \emph{num\'{e}raire pair} if $\bbq \sim \bbp$ is an equivalent probability measure, $N$ is a strictly positive semimartingale, and for each nonnegative self-financing portfolio $S^{\delta}$ the discounted portfolio $S^{\delta} / N$ is a $\bbq$-local martingale.

\item If $(N,\bbq)$ is a num\'{e}raire pair, then we call $N$ a \emph{num\'{e}raire}, and $\bbq$ a \emph{valuation measure}.
\end{enumerate}
\end{definition}

\begin{remark}
If $(S^{\delta_*},\bbp)$ is a num\'{e}raire pair, then $S^{\delta_*}$ is a growth optimal portfolio in the sense of Definition \ref{def-gop}. 
\end{remark}

In the classical framework, the bond market is called free of arbitrage, if a num\'{e}raire pair $(N,\bbq)$ exists. A typical choice for the num\'{e}raire $N$ is the savings account; see Section \ref{sec-HJM-benchmark}, where we investigate the situation in more detail for HJM models. 

The following result shows that absence of arbitrage in the classical sense implies absence of arbitrage in the spirit of the Benchmark Approach, and that, in this case, the price process of some payoff, as given by Definition \ref{def-option-pricing}, coincides with the classical risk-neutral pricing process. For the rest of this section, we suppose that $\calf = \bigvee_{t \in \bbr_+} \calf_t$.

\begin{proposition}\label{prop-num-1}
Let $(N,\bbq)$ be a num\'{e}raire pair, let $Z$ be the density process of $\bbq$ relative to $\bbp$, and let $S_0^{\delta_*}$ be a strictly positive $\calf_0$-measurable random variable. Then the following statements are true:
\begin{enumerate}
\item The process 
\begin{align}\label{GOP-num}
S^{\delta_*} = \frac{S_0^{\delta_*}}{N_0} \frac{N}{Z} 
\end{align}
is a growth optimal portfolio.

\item For each $T \in \bbr_+$ and every $\calf_T$-measurable random variable $H$ the following statements are true:
\begin{enumerate}
\item We have $H / S_T^{\delta_*} \in \call^1(\bbp)$ if and only if $H / N_T \in \call^1(\bbq)$.

\item If the equivalent conditions from (a) are fulfilled, then we have
\begin{align*}
\pi_t^{S^{\delta_*}}(H) = N_t \, \bbe_{\bbq} \bigg[ \frac{H}{N_T} \, \Big| \, \calf_t \bigg], \quad t \in [0,T].
\end{align*}
\end{enumerate}
\end{enumerate}
\end{proposition}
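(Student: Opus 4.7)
The approach rests on two classical tools: the Bayes formula for conditional expectations under a measure change, namely $\bbe_{\bbq}[X\mid\calf_t] = Z_t^{-1}\bbe_{\bbp}[X Z_T \mid \calf_t]$ for $t \leq T$ and $\calf_T$-measurable $X$, together with the characterization that a process $M$ is a $\bbq$-local martingale if and only if $MZ$ is a $\bbp$-local martingale. The entire proof will reduce to algebraic manipulations of the explicit formula (\ref{GOP-num}) combined with these two identities.

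For statement (1), I would substitute (\ref{GOP-num}) into the benchmarked portfolio to obtain
\begin{align*}
\frac{S^{\delta}}{S^{\delta_*}} = \frac{N_0}{S_0^{\delta_*}} \cdot Z \cdot \frac{S^{\delta}}{N}.
\end{align*}
Since $(N,\bbq)$ is a num\'{e}raire pair, $S^{\delta}/N$ is a $\bbq$-local martingale; the classical characterization then gives that $Z \cdot (S^{\delta}/N)$ is a $\bbp$-local martingale. Finally, multiplication by the strictly positive, $\calf_0$-measurable factor $N_0/S_0^{\delta_*}$ preserves the local martingale property via a standard localization, and strict positivity of $S^{\delta_*}$ is immediate from the positivity of $N$, $Z$, $S_0^{\delta_*}$ and $N_0$. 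This verifies the criterion of Definition~\ref{def-gop}.

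For part (2a) I would insert the identity $S_T^{\delta_*} = (S_0^{\delta_*}/N_0)(N_T/Z_T)$ to rewrite $H/S_T^{\delta_*} = (N_0/S_0^{\delta_*}) \cdot Z_T \cdot (H/N_T)$; applying $\bbe_{\bbp}[\,\cdot\, Z_T] = \bbe_{\bbq}[\,\cdot\,]$ to the nonnegative random variable $|H|/N_T$ (times the $\calf_0$-measurable prefactor, via conditioning on $\calf_0$) translates $\bbp$-integrability into $\bbq$-integrability. For part (2b), I would plug both $S_t^{\delta_*}$ and $S_T^{\delta_*}$ into the real-world pricing formula (\ref{real-world-formula}), pull the $\calf_0 \subset \calf_t$-measurable factor $N_0/S_0^{\delta_*}$ out of the conditional expectation so that it cancels against its reciprocal coming from $S_t^{\delta_*}$, and finally invoke Bayes for conditional expectations to convert $Z_t^{-1}\bbe_{\bbp}[(H/N_T) Z_T \mid \calf_t]$ into $\bbe_{\bbq}[H/N_T \mid \calf_t]$, yielding the asserted formula after multiplication by $N_t$.

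The only delicate point is the handling of the possibly non-trivial initial $\sigma$-algebra $\calf_0$ together with the $\calf_0$-measurable random variable $S_0^{\delta_*}$: one must verify that multiplication by such a factor preserves both the local martingale property in (1) and the equivalence of the integrability conditions in (2a). Both verifications are routine---the first by localizing so that $N_0/S_0^{\delta_*}$ is bounded, the second by conditioning on $\calf_0$---so once they are dispatched, the entire proposition falls out of the two Bayes-type identities noted above.
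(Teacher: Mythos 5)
Your proposal is correct and follows essentially the same route as the paper: statement (1) is obtained by rewriting the benchmarked portfolio as $\tfrac{N_0}{S_0^{\delta_*}}\, Z\, \tfrac{S^{\delta}}{N}$ and invoking the equivalence between $\bbq$-local martingales and $\bbp$-local martingales after multiplication by the density process (the paper cites Prop.~III.3.8.b of Jacod--Shiryaev for exactly this), while (2) follows from the Bayes formula for conditional expectations (formula III.3.9 there). Your extra care with the $\calf_0$-measurable prefactor and with part (2a) is a welcome but minor refinement of the same argument.
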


\begin{proof}
Let $S^{\delta}$ be a nonnegative self-financing portfolio. Since $(N,\bbq)$ is a num\'{e}raire pair, the process 
\begin{align*}
\frac{S^{\delta}}{N} = \frac{S^{\delta} Z}{N} \frac{1}{Z}
\end{align*}
is a $\bbq$-local martingale. Moreover, the process $1 / Z$ is the density process of $\bbp$ relative to $\bbq$. Therefore, by \cite[Prop.~III.3.8.b]{Jacod-Shiryaev} the benchmarked portfolio 
\begin{align*}
\hat{S}^{\delta} = \frac{S^{\delta}}{S^{\delta_*}} = \frac{N_0}{S_0^{\delta_*}} \frac{S^{\delta} Z}{N}
\end{align*}
is a $\bbp$-local martingale, proving the first statement. For the proof of the second statement, let $H$ be a nonnegative $\calf_T$-measurable random variable. Then, according to formula (III.3.9) on page 168 in \cite{Jacod-Shiryaev}, for all $t \in [0,T]$ we obtain
\begin{align*}
\pi_t^{S^{\delta_*}}(H) = S_t^{\delta_*} \, \bbe_{\bbp} \bigg[ \frac{H}{S_T^{\delta_*}} \, \Big| \, \calf_t \bigg] = \frac{N_t}{Z_t} \, \bbe_{\bbp} \bigg[ \frac{H}{N_T} Z_T \, \Big| \, \calf_t \bigg] = N_t \, \bbe_{\bbq} \bigg[ \frac{H}{N_T} \, \Big| \, \calf_t \bigg],
\end{align*}
which completes the proof.
\end{proof}

Now, we are interested in a converse statement of Proposition~\ref{prop-num-1}. In view of (\ref{GOP-num}), the natural candidate for the density process $Z$ is
\begin{align}\label{Z-candidate}
Z := \frac{S_0^{\delta_*}}{N_0}  \frac{N}{S^{\delta_*}}
\end{align}
for a given growth optimal portfolio $S^{\delta_*}$ and a strictly positive portfolio $N$.

\begin{proposition}\label{prop-num-2}
Let $S^{\delta_*}$ be a growth optimal portfolio, let $N$ be a strictly positive portfolio, and let $Z$ be the strictly positive supermartingale given by (\ref{Z-candidate}). Then the following statements are equivalent:
\begin{enumerate}
\item[(i)] There exists an equivalent probability measure $\bbq \sim \bbp$ on $(\Omega,\calf)$ such that $(N,\bbq)$ is a num\'{e}raire pair and the density process of $\bbq$ relative to $\bbp$ is given by $Z$.

\item[(ii)] $Z$ is a uniformly integrable martingale with $\bbp(Z_{\infty} > 0) = 1$.
\end{enumerate}
\end{proposition}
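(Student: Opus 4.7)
The plan is to prove the two implications separately, noting that the equivalence ultimately reduces to a single algebraic identity for $Z$ combined with the general change-of-measure formula from Jacod-Shiryaev, which is already cited in the proof of Proposition~\ref{prop-num-1}.

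For $\text{(i)} \Rightarrow \text{(ii)}$, I would argue as follows. If $\bbq \sim \bbp$ on $(\Omega,\calf)$ has density process $Z$, then by assumption $\calf = \bigvee_{t \in \bbr_+} \calf_t$, so $Z_\infty = d\bbq/d\bbp$ exists with $\bbe_\bbp[Z_\infty] = 1$, and $Z_t = \bbe_\bbp[Z_\infty \mid \calf_t]$ is automatically a uniformly integrable martingale. Strict positivity of $Z_\infty$ almost surely is immediate from $\bbq \sim \bbp$. This direction is therefore purely a consequence of general measure-theoretic facts and needs no input from the semimartingale structure.

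For $\text{(ii)} \Rightarrow \text{(i)}$, the plan is first to define $\bbq$ on $(\Omega,\calf)$ by $d\bbq/d\bbp := Z_\infty$. Since $Z$ is a uniformly integrable martingale with $\bbe_\bbp[Z_\infty] = Z_0 = 1$ (note that $Z_0 = 1$ by the definition (\ref{Z-candidate}), using $S_0^{\delta_*}/S_0^{\delta_*} = 1$ and $N_0/N_0 = 1$) and $\bbp(Z_\infty > 0) = 1$, the measure $\bbq$ is a probability measure equivalent to $\bbp$, with density process exactly $Z$. It then remains to verify the numéraire property: for every nonnegative self-financing portfolio $S^\delta$, the discounted process $S^\delta/N$ is a $\bbq$-local martingale. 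By \cite[Prop.~III.3.8]{Jacod-Shiryaev}, this is equivalent to $(S^\delta/N) Z$ being a $\bbp$-local martingale.

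The key algebraic step, and really the whole content of this direction, is the identity
\begin{align*}
\frac{S^\delta}{N} \cdot Z = \frac{S^\delta}{N} \cdot \frac{S_0^{\delta_*}}{N_0} \cdot \frac{N}{S^{\delta_*}} = \frac{S_0^{\delta_*}}{N_0} \cdot \frac{S^\delta}{S^{\delta_*}} = \frac{S_0^{\delta_*}}{N_0} \, \hat{S}^\delta,
\end{align*}
which shows that $(S^\delta/N) Z$ is a deterministic multiple of the benchmarked portfolio $\hat{S}^\delta$. By Definition~\ref{def-gop} of a growth optimal portfolio, $\hat{S}^\delta$ is a $\bbp$-local martingale, so $(S^\delta/N) Z$ is as well, which completes the argument. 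I do not anticipate a real obstacle here; the only points requiring some care are confirming $Z_0 = 1$ so that $Z$ really is a density process, and invoking the Jacod-Shiryaev change-of-measure criterion in the correct direction, both of which are mechanical.
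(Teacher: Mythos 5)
Your proof is correct and follows essentially the same route as the paper, whose proof simply cites \cite[Prop.~III.3.5]{Jacod-Shiryaev} for the equivalence between (ii) and the existence of an equivalent measure with density process $Z$; you additionally spell out the num\'{e}raire-property verification via \cite[Prop.~III.3.8]{Jacod-Shiryaev} and the identity $(S^{\delta}/N)Z = (S_0^{\delta_*}/N_0)\,\hat{S}^{\delta}$, which is exactly the computation implicit in the paper's treatment of Proposition~\ref{prop-num-1}. One cosmetic point: $S_0^{\delta_*}/N_0$ is in general only $\calf_0$-measurable rather than deterministic, but multiplying a local martingale by a finite strictly positive $\calf_0$-measurable factor still yields a local martingale, so your argument stands.
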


\begin{proof}
This is a consequence of \cite[Prop.~III.3.5]{Jacod-Shiryaev}.
\end{proof}

\begin{remark}
Note that Proposition~\ref{prop-num-2} has the following consequences.
\begin{itemize}
\item In general, an appropriate valuation measure $\bbq \sim \bbp$ does not exist. This shows that the converse of Proposition~\ref{prop-num-1} generally fails and illustrates the modeling freedom gained by the Benchmark Approach in comparison to the classical risk-neutral approach. Under the Benchmark Approach, we can choose models where the candidate (\ref{Z-candidate}) for the density process is a strict supermartingale rather than a uniformly integrable martingale. As argued in \cite{Platen}, such models reflect more realistically the long-term market evolution. This provides considerable freedom for an interest rate term structure model to admit an affine realization under the real-world probability measure.

\item If an appropriate valuation measure $\bbq \sim \bbp$ exists, then it is unique due to the specified form (\ref{Z-candidate}) of the density process. In this sense, we obtain a unique local martingale measure, yielding minimal possible prices, even if the market is incomplete.

\item Other pricing rules than real-world pricing are allowed under the Benchmark Approach, when they are yielding self-financing nonnegative portfolios that are replicating the given payoff. When benchmarked, these portfolios are local martingales and, thus, supermartingales, and they are more expensive than the real-world prices. These portfolios do not represent arbitrage portfolios under the Benchmark Approach.
\end{itemize}
\end{remark}

\section{The HJM model under the Benchmark Approach}\label{sec-HJM-benchmark}

In this section, we provide a review of bond markets arising from HJM interest rate term structure models under the Benchmark Approach. As in Section \ref{sec-bond-benchmark}, the upcoming definitions and results are well-known, and we provide them in order to keep our presentation self-contained and to introduce further notation, which we will need later. Our main references for this section are \cite{Christensen-Platen} and \cite{B-N-Platen} for HJM models under the Benchmark Approach, and \cite{Eberlein_J} for risk-neutral HJM models.

We fix nonnegative integers $d,n \in \bbn_0$ with $d+n \in \bbn$ and $m \in \{ 0,\ldots,n \}$. Let $W$ be a $\bbr^d$-valued standard Wiener processes, and let $X$ be a $\bbr^n$-valued pure jump L\'{e}vy process such that the canonical representations (see \cite[Thm.~II.2.34]{Jacod-Shiryaev}) of its components are given by $X^k = x * \mu^{X^k}$ for $k=1,\ldots,m$ and $X^k = x * (\mu^{X^k} - \nu^k)$ for $k=m+1,\ldots,n$, where $\nu^k$ denotes the compensator of $\mu^{X^k}$. We denote by $F$ the L\'{e}vy measure of $X$, and by $F^k$ the L\'{e}vy measure of $X^k$ for $k = 1,\ldots,n$. Denoting by $\nu$ the compensator of $\mu^X$, we have $\nu(dt,dx) = dt \otimes F(dx)$ and $\nu^k(dt,dx) = dt \otimes F^k(dx)$ for $k=1,\ldots,n$. We assume that the L\'{e}vy processes $X^1,\ldots,X^n$ are independent.

We fix an initial forward curve $f_0^* : \bbr_+ \to \bbr$, and volatilities $\sigma : \Omega \times \Delta \to \bbr^d$ and $\gamma : \Omega \times \Delta \to \bbr^n$, where $\Delta \subset \bbr^2$ denotes the set
\begin{align*}
\Delta := \{ (t,T) \in \bbr_+^2 : t \leq T \}.
\end{align*}
For the rest of this section, we impose the following conditions, which are typical for HJM type models.

\begin{assumption}\label{ass-HJM}
We suppose that the following conditions are satisfied:
\begin{enumerate}
\item $f_0^*$ is measurable and locally integrable.

\item The volatility $\sigma \mathbbm{1}_{\Delta}$ is $\calo \otimes \calb(\bbr_+)$-measurable and locally bounded\footnote[1]{Here, the term ``locally bounded'' means that the volatility is bounded on every bounded subset $\Gamma \subset \Delta$.}.

\item The volatility $\gamma \mathbbm{1}_{\Delta}$ is $\calp \otimes \calb(\bbr_+)$-measurable and locally bounded.
\end{enumerate}
\end{assumption}

Due to Assumption \ref{ass-HJM}, the integrated volatilities $\Sigma : \Omega \times \Delta \to \bbr^d$ and $\Gamma : \Omega \times \Delta \to \bbr^n$ defined as $\Sigma_t(T) := -\int_t^T \sigma_t(s) ds$ and $\Gamma_t(T) := -\int_t^T \gamma_t(s) ds$ are well-defined, the integrated volatility $\Sigma \mathbbm{1}_{\Delta}$ is $\calo \otimes \calb(\bbr_+)$-measurable, and the integrated volatility $\Gamma \mathbbm{1}_{\Delta}$ is $\calp \otimes \calb(\bbr_+)$-measurable. For what follows $\lambda$ denotes the Lebesgue measure on $\bbr_+$.

\begin{definition}\label{def-MPR-pair}
A pair $(\theta,\psi)$ is called a \emph{pair of market prices of risk}, if with $\phi := 1 - \psi$ the following conditions are satisfied:
\begin{enumerate}
\item $\theta : \Omega \times \bbr_+ \to \bbr^d$ is an optional process  such that $\| \theta \|_{\bbr^d}^2 \cdot \lambda \in \calv^+$.

\item $\psi : \Omega \times \bbr_+ \times \bbr \to (-\infty,1)^n$ is a predictable process such that for all $T \in \bbr_+$ we have
\begin{align}\label{int-Y-part-1}
&\big| x \phi^k(x) e^{x \Sigma^k(T)} \big| * \nu^k \in \calv^+, \quad k=1,\ldots,m,
\\ \label{int-Y-part-2} &\big| x \big( \phi^k(x) e^{x \Sigma^k(T)} - 1 \big) \big| * \nu^k \in \calv^+, \quad k = m+1,\ldots,n.
\end{align}

\item $\alpha^{(\theta,\phi)} \mathbbm{1}_{\Delta}$ is locally bounded, where $\alpha^{(\theta,\phi)} : \Omega \times \Delta \to \bbr$ is defined as
\begin{equation}\label{HJM-drift}
\begin{aligned}
\alpha_t^{(\theta,\phi)}(T) &= -\sum_{k=1}^d \sigma_t^k(T) ( \Sigma_t^k(T) - \theta_t^k )
\\ &\quad - \sum_{k=1}^m \gamma_t^k(T) \int_{\bbr} x \phi_t^k(x) e^{x \Gamma_t^k(T)} F^k(dx)
\\ &\quad - \sum_{k=m+1}^n \gamma_t^k(T) \int_{\bbr} x \big( \phi_t^k(x) e^{x \Gamma_t^k(T)} - 1 \big) F^k(dx).
\end{aligned}
\end{equation}
\end{enumerate}
\end{definition}

Now, we fix a pair of market prices of risk $(\theta,\psi)$ and define the $(0,\infty)^n$-valued process $\phi := 1 - \psi$. Furthermore, we define the drift $\alpha^{(\theta,\phi)}$ according to (\ref{HJM-drift}) and consider the HJM term structure model
\begin{align}\label{HJM-model}
f(T) = f_0^*(T) + \alpha^{(\theta,\phi)}(T) \cdot \lambda + \sigma(T) \cdot W + \gamma(T) \cdot X, \quad T \in \bbr_+.
\end{align}
For what follows, we suppose that for each $T \in \bbr_+$ the bond price process $P(T)$ given by
\begin{align}\label{bond-market}
P_t(T) = \exp \bigg( -\int_t^T f_t(s) ds \bigg), \quad t \in [0,T]
\end{align}
is a special semimartingale.

\begin{remark}
The first two points of Definition \ref{def-MPR-pair} ensure that $\alpha^{(\theta,\phi)}$ is well-defined, and that $\alpha^{(\theta,\phi)} \mathbbm{1}_{\Delta}$ is $\calo \otimes \calb(\bbr_+)$-measurable.
\end{remark}

\begin{remark}\label{remark-short-rate}
Note that Assumption \ref{ass-HJM} and Definition \ref{def-MPR-pair} imply that $f \mathbbm{1}_{\Delta}$ has a $\calo \otimes \calb(\bbr_+)$-measurable version, and that the short rate $R$ defined as $R_t := f(t,t)$, $t \in \bbr_+$ has an optional, locally integrable version.
\end{remark}

The following result shows that, subject to the additional integrability condition (\ref{cond-no-arb-portf}), the pair of market prices of risk $(\theta,\psi)$ gives rise to an arbitrage free bond market in the spirit of the Benchmark Approach.

\begin{proposition}\label{prop-GOP-HJM}
Suppose that
\begin{align}\label{cond-no-arb-portf}
\bigg[ \psi^2 + \bigg( \frac{\psi}{1 - \psi} \bigg)^2 \bigg] * \nu \in \calv^+.
\end{align}
Then the following statements are true:
\begin{enumerate}
\item There exists a growth optimal portfolio $S^{\delta_*}$.

\item No arbitrage portfolio exists.
\end{enumerate}
\end{proposition}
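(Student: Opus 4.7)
The plan is to construct an explicit growth optimal portfolio from the pair $(\theta,\psi)$ and the savings account; assertion (2) then follows immediately via Proposition~\ref{prop-no-arbitrage}. The drift condition (\ref{HJM-drift}) has been designed precisely so that, relative to this candidate benchmark, every bond price becomes a local martingale, and the content of the proof is to identify the right benchmark and to verify the drift cancellation that (\ref{HJM-drift}) produces.

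First I would derive the semimartingale decomposition of each bond $P(T)$ from the forward-rate dynamics (\ref{HJM-model}) via the defining formula (\ref{bond-market}). The stochastic Fubini theorem lets one interchange the spatial $ds$-integral with the $dW$ and $dX$ integrals, and It\^o's formula applied to the exponential in (\ref{bond-market}) then produces a representation of the form
\begin{align*}
\frac{dP_t(T)}{P_{t-}(T)} = \big(R_t + a_t(T)\big)\,dt + \langle \Sigma_t(T),dW_t\rangle + \int_{\bbr^n}\!\big(e^{\langle \Gamma_t(T),x\rangle} - 1\big)(\mu^X - \nu)(dt,dx),
\end{align*}
where the drift $a(T)$, after absorbing the It\^o correction from the continuous part and the compensator adjustments from the jump part, reduces under (\ref{HJM-drift}) to $a_t(T) = \langle \Sigma_t(T),\theta_t\rangle_{\bbr^d} + \langle e^{x\Gamma_t(T)} - 1,\psi_t\rangle_{L^2(F)}$. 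This recovers the representation (\ref{bond-dyn-MPR}) and verifies the market-price-of-risk identity (\ref{MPR-solution}).

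Next I would introduce the candidate benchmark $S^{\delta_*} := S_0^{\delta_*}\, B/Z$, where $B_t := \exp(\int_0^t R_s\,ds)$ is the savings account (well-defined and of finite variation by Remark~\ref{remark-short-rate}), $Z$ is the process from (\ref{Z-candidate-intro}), and $S_0^{\delta_*}$ is any strictly positive $\calf_0$-measurable random variable. The assumption $\|\theta\|_{\bbr^d}^2 \cdot \lambda \in \calv^+$ from Definition~\ref{def-MPR-pair} together with $\psi^2 * \nu \in \calv^+$ from (\ref{cond-no-arb-portf}) ensure that $Z$ is a well-defined strictly positive local martingale, strict positivity coming from $\psi<1$; the second summand $(\psi/(1-\psi))^2 * \nu \in \calv^+$ is exactly what is needed for $1/Z$ to be representable as a stochastic exponential with jumps $\psi/(1-\psi)$, hence for $S^{\delta_*}$ to be a bona fide strictly positive semimartingale.

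The main verification is then that each benchmarked bond $P(T)/S^{\delta_*} = Z \cdot P(T)/(S_0^{\delta_*} B)$ is a $\bbp$-local martingale. Applying Yor's product formula to the two stochastic exponentials representing $Z$ and $P(T)/B$, the drift of the product is a multiple of $a_t(T) - \langle \Sigma_t(T),\theta_t\rangle_{\bbr^d} - \langle e^{x\Gamma_t(T)} - 1,\psi_t\rangle_{L^2(F)}$, which vanishes by (\ref{MPR-solution}); the covariation of the Wiener integrands contributes precisely $-\langle \Sigma_t(T),\theta_t\rangle_{\bbr^d}$ and the compensator of the jump product contributes $-\langle e^{x\Gamma_t(T)}-1,\psi_t\rangle_{L^2(F)}$. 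Since any self-financing portfolio $S^\delta = \delta P$ is a stochastic integral of the locally bounded predictable $\delta$ against the vector local martingale $P/S^{\delta_*}$, the benchmarked portfolio $\hat{S}^\delta$ inherits the local martingale property, establishing (1); then (2) is immediate from Proposition~\ref{prop-no-arbitrage}. I expect the main obstacle to be the bookkeeping on the jump side: justifying the stochastic Fubini interchange and the Yor product formula under the measurability and integrability hypotheses of Assumption~\ref{ass-HJM} and Definition~\ref{def-MPR-pair}, and verifying that the two summands in (\ref{cond-no-arb-portf}) are individually sharp, namely that $\psi^2 * \nu$ controls $Z$ as a local martingale while $(\psi/(1-\psi))^2 * \nu$ is precisely what keeps $1/Z$ (and hence $S^{\delta_*}$) a well-defined strictly positive semimartingale.
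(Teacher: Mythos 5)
Your proposal is correct and takes essentially the same route as the paper, whose own proof is only a sketch built around the explicit candidate (\ref{SDE-GOP-W-mu}): your benchmark $S_0^{\delta_*}B/Z$ is exactly that stochastic exponential, since inverting $Z = \cale(-\theta\cdot W - \psi*(\mu^X-\nu))$ and multiplying by the savings account reproduces (\ref{SDE-GOP-W-mu}) term by term. The verification via the bond dynamics (\ref{bond-dyn-MPR}), the identity (\ref{MPR-solution}) and Yor's formula is precisely the computation the paper leaves implicit, and your reading of the two summands in (\ref{cond-no-arb-portf}) as governing $Z$ and $1/Z$ respectively is the intended one.
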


\begin{proof}[Sketch of the proof]
Defining the process $S^{\delta_*}$ as the stochastic exponential
\begin{align}\label{SDE-GOP-W-mu}
S^{\delta_*} = S_0^{\delta_*} \cale \bigg( \bigg( R + \| \theta \|_{\bbr^d}^2 + \Big \langle \frac{\psi}{1 - \psi}, \psi \Big\rangle_{L^2(F)} \bigg) \cdot \lambda + \theta \cdot W + \frac{\psi}{1 - \psi} * (\mu^X - \nu) \bigg),
\end{align}
we can verify that for every nonnegative self-financing portfolio $S^{\delta}$ the benchmarked portfolio $\hat{S}^{\delta} = S^{\delta} / S^{\delta_*}$ is a local martingale. Consequently, $S^{\delta_*}$ is a growth optimal portfolio, and by Proposition~\ref{prop-no-arbitrage} no arbitrage portfolio exists.
\end{proof}

\begin{remark}
The dynamics (\ref{SDE-GOP-W-mu}) of the growth optimal portfolio $S^{\delta_*}$ have been derived in~\cite{Christensen-Platen}.
\end{remark}

\begin{remark}\label{remark-MPR}
Performing the calculations indicated in the sketch of the proof of Proposition~\ref{prop-GOP-HJM}, we obtain that for each $T \in \bbr_+$ the bond price process $P(T)$ is of the form (\ref{bond-dyn-MPR}). As shown in \cite{Christensen-Platen}, the pair $(\theta,\psi)$ is a solution of the equation (\ref{MPR-solution}), which explains the terminology ``pair of market prices of risk''.
\end{remark}

Now, we review when the HJM interest rate model (\ref{HJM-model}) admits no arbitrage within the risk-neutral framework. Then the num\'{e}raire is the savings account $B := \exp (R \cdot \lambda)$,
and the bond market model is free of arbitrage if there exists a local martingale measure, that is, an equivalent probability measure $\bbq \sim \bbp$ such that for all $T \in \bbr_+$ the discounted bond prices $P(T)/B$ are $\bbq$-local martingales. 

\begin{proposition}\label{prop-NA}
Suppose that $\calf = \bigvee_{t \in \bbr_+} \calf_t$. Then the following statements are equivalent:
\begin{enumerate}
\item[(i)] There exists an equivalent local martingale measure $\bbq \sim \bbp$ on $(\Omega,\calf)$.

\item[(ii)] We have
\begin{align}\label{int-cond-NA-classical}
(|\psi| \wedge \psi^2) * \nu \in \calv^+, 
\end{align}
and the positive supermartingale $Z$ given by (\ref{Z-candidate-intro}) satisfies (\ref{uniformly}).
\end{enumerate}
If the previous conditions are satisfied, then $Z$ is the density process of $\bbq$ relative to $\bbp$.
\end{proposition}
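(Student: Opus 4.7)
The plan is to deduce Proposition~\ref{prop-NA} from Proposition~\ref{prop-num-2}, taking as num\'{e}raire portfolio the savings account $N = B = \exp(R \cdot \lambda)$ and identifying the generic density candidate $(1/B_0)(B/S^{\delta_*})$ from (\ref{Z-candidate}) with the specific stochastic exponential $Z$ displayed in (\ref{Z-candidate-intro}). Once this identification is in place, the equivalence of (i) and (ii) reduces to Proposition~\ref{prop-num-2}, after observing that the num\'{e}raire-pair property of $(B,\bbq)$ is the same as the ELMM property for the bond market.

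First I would verify the well-definedness of $Z$. Condition (\ref{int-cond-NA-classical}) is exactly the Jacod--Shiryaev criterion ensuring that $\psi \in G_{\mathrm{loc}}(\mu^X)$, so that $\psi * (\mu^X - \nu)$ is a well-defined purely discontinuous local martingale; see \cite[II.1.33]{Jacod-Shiryaev}. Together with the Wiener local martingale $-\theta \cdot W$, which exists because $\|\theta\|_{\bbr^d}^2 \cdot \lambda \in \calv^+$ by Definition~\ref{def-MPR-pair}, the stochastic exponential $Z$ in (\ref{Z-candidate-intro}) is then meaningful. Since $\psi < 1$ componentwise, the jumps of $-\psi * (\mu^X - \nu)$ are strictly greater than $-1$, so $Z$ is strictly positive, as required in Proposition~\ref{prop-num-2}.

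The main computational step is the identification $Z = B/S^{\delta_*}$ (for $S_0^{\delta_*} = 1$). Starting from (\ref{SDE-GOP-W-mu}) and from $B = \cale(R \cdot \lambda)$, I would apply Yor's formula for the ratio of two stochastic exponentials to write $B/S^{\delta_*}$ as a single stochastic exponential. The short-rate drifts cancel, the continuous-martingale contribution produces $-\theta \cdot W$, and the compensated jump part reduces to $-\psi * (\mu^X - \nu)$ after combining the term $(\psi/(1-\psi)) * (\mu^X - \nu)$ from (\ref{SDE-GOP-W-mu}) with the quotient corrections and using the identity $\psi/(1-\psi) - \psi = \psi^2/(1-\psi)$ against the compensator term $\langle \psi/(1-\psi),\psi\rangle_{L^2(F)} \cdot \lambda$. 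The hard part here is the bookkeeping of these jump corrections so that the resulting compensated integral is well-defined, which is exactly where condition (\ref{int-cond-NA-classical}) enters, and matches (\ref{Z-candidate-intro}) line by line.

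With this identification in hand, Proposition~\ref{prop-num-2} yields at once the equivalence of (ii) with the existence of an equivalent $\bbq \sim \bbp$ such that $(B,\bbq)$ is a num\'{e}raire pair with density process $Z$. To close the argument, I would note that, because any bond-market self-financing portfolio is a vector stochastic integral against $P$ and $B$ is continuous, $(B,\bbq)$ is a num\'{e}raire pair if and only if each discounted bond $P(T)/B$, $T \in \bbr_+$, is a $\bbq$-local martingale; hence the num\'{e}raire-pair property is precisely the ELMM property. Conversely, if (i) holds, then Proposition~\ref{prop-num-1}(i), combined with uniqueness of the growth optimal portfolio up to its initial value, forces the density process of any ELMM $\bbq$ to equal $B/S^{\delta_*} = Z$; this yields (ii) together with the final assertion on the density process.
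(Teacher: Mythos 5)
The computational core of your proposal is fine: with $S_0^{\delta_*}=B_0=1$, Yor's formula applied to $B/S^{\delta_*}$ does produce $\cale(-\theta\cdot W-\psi*(\mu^X-\nu))$ (the jump of the exponent of (\ref{SDE-GOP-W-mu}) is $\psi/(1-\psi)$, so the correction $\sum(\Delta V)^2/(1+\Delta V)=\frac{\psi^2}{1-\psi}*\mu^X$ combines with the compensator term exactly as you say). But note that the paper does not prove Proposition~\ref{prop-NA} this way at all: it simply cites \cite[Thm.~3.1]{Eberlein_J}, which works directly with Girsanov's theorem. Your detour through the Benchmark machinery has two genuine gaps. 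First, it needs the process (\ref{SDE-GOP-W-mu}) to exist and to be a growth optimal portfolio, which requires the integrability condition (\ref{cond-no-arb-portf}); that condition is strictly stronger than (\ref{int-cond-NA-classical}) (it controls $\psi^2*\nu$, not just $(|\psi|\wedge\psi^2)*\nu$) and is \emph{not} among the hypotheses of Proposition~\ref{prop-NA}. Moreover Proposition~\ref{prop-num-2} is stated for $N$ a strictly positive \emph{portfolio}, and the savings account $B$ is not a portfolio in the sense of Definition~\ref{def-portfolio}; one can repair this, but it has to be said.

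The serious gap is the direction (i) $\Rightarrow$ (ii). You assert that uniqueness of the growth optimal portfolio forces the density process of \emph{any} ELMM to equal the \emph{specific} $Z$ of (\ref{Z-candidate-intro}) built from the prescribed pair $(\theta,\psi)$. That is precisely the hard content of the theorem, and your argument does not deliver it: (a) the uniqueness of the GOP up to initial value is neither stated nor proved in the paper, and the standard ratio-of-two-GOPs argument requires both candidates to be portfolios, whereas Proposition~\ref{prop-num-1} only verifies the local-martingale property of benchmarked portfolios for $N/D$ and never shows that $N/D$ is attainable; (b) the bond market is in general incomplete, so an arbitrary ELMM has an arbitrary Girsanov kernel $(\beta,Y)$ a priori, and identifying $(\beta,Y)$ with $(-\theta,\phi)$ requires exploiting the drift condition (\ref{HJM-drift}) for \emph{all} maturities $T$ together with an injectivity/non-degeneracy argument --- this is exactly what the Girsanov-based proof in \cite{Eberlein_J} does; (c) your argument never derives the integrability condition (\ref{int-cond-NA-classical}) from (i), yet without it $Z$ is not even well defined, so the comparison you want to make cannot be set up. As it stands, your proof establishes (ii) $\Rightarrow$ (i) under the additional hypothesis (\ref{cond-no-arb-portf}), but not the converse.
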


\begin{proof}
For the proof we refer to \cite[Thm.~3.1]{Eberlein_J}. As mentioned there, it is a consequence of Girsanov's theorem (see \cite{Jacod-Shiryaev}), and it is also essentially contained in~\cite{BKR}.
\end{proof}

\begin{remark}
Now, we can compare the HJM model under real-world pricing of the Benchmark Approach and under the classical risk-neutral approach. In both approaches, we start with writing down the forward rate dynamics (\ref{HJM-model}) under the real-world probability measure $\bbp$; that is, we specify the volatilities $\sigma$ and $\gamma$, and the pair $(\theta,\psi)$ of market prices of risk. As pointed out in \cite{Becherer}, there are two dual approaches for finding num\'{e}raire pairs:
\begin{itemize}
\item Fix a process $N$ and find $\bbq \sim \bbp$ such that $(\bbq,N)$ is a num\'{e}raire pair. In the classical risk-neutral approach, this is done by choosing the savings account $B$ as candidate for the num\'{e}raire. Therefore, in case of existence, the num\'{e}raire pair is given by $(\bbq,B)$.

\item Fix an equivalent measure $\bbq \sim \bbp$ and find a process $N$ such that $(\bbq,N)$ is a num\'{e}raire pair. Under the Benchmark Approach, this is done with the real-world measure $\bbp$. In case of existence, the num\'{e}raire is a growth optimal portfolio $S^{\delta_*}$, and hence, the num\'{e}raire pair is given by $(\bbp,S^{\delta_*})$.
\end{itemize}
Apart from the slightly different integrability conditions (\ref{cond-no-arb-portf}) and (\ref{int-cond-NA-classical}), Propositions~\ref{prop-GOP-HJM} and \ref{prop-NA} show that the Benchmark Approach is more general in this respect. Under the Benchmark Approach, we can choose models where the candidate $Z$ for the density process given by (\ref{Z-candidate-intro}) is a strict supermartingale rather than a uniformly integrable martingale. As argued in \cite{Platen}, such models reflect more realistically the long-term market evolution. 
\end{remark}

Here is an example of a class of interest rate models where $Z$ is a strict supermartingale, even with terminal value $Z_{\infty} = 0$. For details concerning squared Bessel processes we refer to \cite[Sec. 8.7]{Platen} and references therein.

\begin{example}\label{example-squared-Bessel}
For simplicity, we consider a HJM term structure model of the form
\begin{align*}
f(T) = f_0^*(T) + \alpha^{\theta}(T) \cdot \lambda + \sigma(T) \cdot W, \quad T \in \bbr_+
\end{align*}
driven by a one-dimensional standard Wiener process $W$. Let $y_0 \in (0,\infty)$ be arbitrary and set $b_0 := 1 / y_0$. Then the SDE
\begin{align}
\left\{
\begin{array}{rcl}
dB_t & = & 4 dt + 2 \sqrt{B_t} dW_t \medskip
\\ B_0 & = & b_0
\end{array}
\right.
\end{align}
has a unique positive solution $B$, which is called a squared Bessel process of dimension four. Moreover, the process $Y := 1 / B$ is a solution to the SDE
\begin{align}\label{SDE-Z-Bessel}
\left\{
\begin{array}{rcl}
dY_t & = & -2 Y_t^{3/2} dW_t \medskip
\\ Y_0 & = & y_0
\end{array}
\right.
\end{align}
and it is a positive local martingale, which is a strict supermartingale, with terminal variable $Y_{\infty} = 0$. We define the market price of risk $\theta := 2 \sqrt{Y}$. Denoting by $Z := \cale(-\theta \cdot W)$ the candidate for the density process, we obtain $Z = Y$, because, denoting by $\call$ the stochastic logarithm, we have
\begin{align*}
Z &= \cale(-\theta \cdot W) = \cale(-2 \sqrt{Y} \cdot W) = \cale \bigg( \frac{1}{Y} \cdot \big( -2 Y^{3/2} \cdot W \big) \bigg) 
\\ &= \cale \bigg( \frac{1}{Y} \cdot Y \bigg) = \cale(\call(Y)) = Y.
\end{align*}
Therefore, according to Proposition~\ref{prop-NA}, no equivalent local martingale measure exists. However, by Proposition~\ref{prop-GOP-HJM} no arbitrage portfolio exists, and hence, the bond market based on real-world pricing is free of arbitrage in the spirit of the Benchmark Approach.
\end{example}

We can extend Example~\ref{example-squared-Bessel} by adding pure jump L\'{e}vy processes. The upcoming example provides such an extension with an additional driving Poisson process.

\begin{example}\label{example-squared-Bessel-2}
We consider a HJM term structure model of the form (\ref{HJM-model}) with a one-dimensional standard Wiener process $W$ and a one-dimensional standard Poisson process $X$. Denoting by $Y$ the solution to the SDE (\ref{SDE-Z-Bessel}), we define the pair of market prices of risk $(\theta,\psi)$ by $\theta := 2 \sqrt{Y}$ and $\psi(x) := -Y$ for $x \in \bbr$, and the candidate for the density process $Z := \cale(-\theta \cdot W - \psi * (\mu^X - \nu))$. Then we have
\begin{align*}
Z = Y \, \cale(- \psi * (\mu^X - \nu)),
\end{align*}
and hence $Z_{\infty} = 0$. As in Example~\ref{example-squared-Bessel}, no equivalent local martingale measure exists, but the bond market based on real-world pricing is free of arbitrage in the spirit of the Benchmark Approach.
\end{example}

Note that for the particular choice $(\theta,\phi) = (0,1)$, or equivalently $(\theta,\psi) = (0,0)$, the drift term (\ref{HJM-drift}) becomes the well-known HJM drift condition
\begin{equation}\label{HJM-drift-classical}
\begin{aligned}
\alpha_t^{(0,1)}(T) &= -\sum_{k=1}^d \sigma_t^k(T) \Sigma_t^k(T)
- \sum_{k=1}^m \gamma_t^k(T) \int_{\bbr} x e^{x \Gamma_t^k(T)} F^k(dx)
\\ &\quad - \sum_{k=m+1}^n \gamma_t^k(T) \int_{\bbr} x \big( e^{x \Gamma_t^k(T)} - 1 \big) F^k(dx).
\end{aligned}
\end{equation}
Let us review the dynamics of the HJM interest rate term structure model (\ref{HJM-model}) after an equivalent measure change.

\begin{proposition}\label{prop-NA-dynamics}
Suppose that the conditions of Proposition~\ref{prop-NA} are fulfilled. Then, under the local martingale measure $\bbq$, we have
\begin{align*}
f(T) = f_0^*(T) + \bar{\alpha}(T) \cdot \lambda + \sigma(T) \cdot \bar{W} + \gamma(T) \cdot \bar{X}, \quad T \in \bbr_+,
\end{align*}
where $\bar{W}$ is a $\bbr^d$-valued standard Wiener processes, the process $\bar{X}$ is a $\bbr^n$-valued pure jump semimartingale with compensator $\phi_t(x) F(dx) dt$, and the drift is given by
\begin{equation}\label{alpha-Q}
\begin{aligned}
\bar{\alpha}_t(T) &= -\sum_{k=1}^d \sigma_t^k(T) \Sigma_t^k(T) - \sum_{k=1}^m \gamma_t^k(T) \int_{\bbr} x Y_t^k(x) e^{x \Gamma_t^k(T)} F^k(dx)
\\ &\quad - \sum_{k=m+1}^n \gamma_t^k(T) \int_{\bbr} x Y_t^k(x) \big( e^{x \Gamma_t^k(T)} - 1 \big) F^k(dx).
\end{aligned}
\end{equation}
\end{proposition}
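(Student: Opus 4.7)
The plan is to apply Girsanov's theorem for semimartingales (see \cite[Sec.~III.3]{Jacod-Shiryaev}) to the density process $Z$ from (\ref{Z-candidate-intro}). By hypothesis, $Z$ is a uniformly integrable $\bbp$-martingale with $\bbp(Z_{\infty} > 0) = 1$, hence defines the equivalent local martingale measure $\bbq$. Reading off the stochastic logarithm of $Z$, Girsanov's theorem yields two pieces of information: first, $\bar{W} := W + \theta \cdot \lambda$ is a $d$-dimensional standard Wiener process under $\bbq$; and second, the $\bbq$-compensator of the jump measure $\mu^X$ is $\bar{\nu}(dt,dx) := \phi_t(x) F(dx) dt$, since $\phi = 1 - \psi$ is the Girsanov factor associated with the $-\psi * (\mu^X - \nu)$ component of $Z$.

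With this in hand, I would substitute $W = \bar{W} - \theta \cdot \lambda$ in (\ref{HJM-model}): this produces an extra drift $-\sum_{k=1}^d \sigma^k(T) \theta^k$, which cancels precisely the $\theta$-dependent contribution inside $\alpha^{(\theta,\phi)}$ from (\ref{HJM-drift}). Next, I would define $\bar{X}$ as the canonical $\bbq$-semimartingale associated with jump measure $\mu^X$ and compensator $\bar{\nu}$: for $k \leq m$, set $\bar{X}^k := x * \mu^{X^k}$ (of finite variation under $\bbq$ by the $\Gamma \equiv 0$ specialisation of (\ref{int-Y-part-1})); for $k > m$, set $\bar{X}^k := x * (\mu^{X^k} - \bar{\nu}^k)$, which is a purely discontinuous $\bbq$-local martingale thanks to (\ref{int-Y-part-2}) and (\ref{int-cond-NA-classical}). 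Since the $\bbp$-canonical representations are $X^k = x * \mu^{X^k}$ for $k \leq m$ and $X^k = x * (\mu^{X^k} - \nu^k)$ for $k > m$, a direct comparison of compensated versions yields $\bar{X}^k = X^k$ pathwise for $k \leq m$, and $X^k = \bar{X}^k - \bigl(\int_{\bbr} x \psi_t^k(x) F^k(dx)\bigr) \cdot \lambda$ for $k > m$; the latter contributes an additional drift $-\gamma^k(T) \int_{\bbr} x \psi^k F^k dx$ per index $k > m$ once multiplied by $\gamma^k(T)$.

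It remains to collect all drift contributions and verify that they reduce to (\ref{alpha-Q}). For $k \leq m$, the jump drift is untouched by the measure change and already equals $-\gamma^k(T) \int x \phi^k e^{x \Gamma^k(T)} F^k dx$, matching the $k \leq m$ sum in (\ref{alpha-Q}) with $Y^k = \phi^k$. For $k > m$, combining the original jump drift $-\gamma^k(T) \int x (\phi^k e^{x \Gamma^k(T)} - 1) F^k dx$ with the extra $-\gamma^k(T) \int x \psi^k F^k dx$ and using $\psi^k = 1 - \phi^k$, the bracket inside the integrand collapses to $\phi^k e^{x \Gamma^k(T)} - 1 + (1 - \phi^k) = \phi^k(e^{x \Gamma^k(T)} - 1)$, recovering the second sum in (\ref{alpha-Q}). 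Together with the surviving diffusive term $-\sum_k \sigma^k(T) \Sigma^k(T)$, this is exactly $\bar{\alpha}(T)$.

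The main obstacle is purely bookkeeping: one has to check throughout that every (stochastic) integral remains well-defined under $\bbq$, in particular that the $\bbq$-compensated representation of the jump part is admissible and that the Fubini-type interchange $\gamma^k(T) \cdot \bigl(\int x \psi^k F^k dx \cdot \lambda\bigr) = \bigl(\gamma^k(T) \int x \psi^k F^k dx\bigr) \cdot \lambda$ can be justified. The relevant integrability conditions are exactly (\ref{int-Y-part-1}), (\ref{int-Y-part-2}) and (\ref{int-cond-NA-classical}) from Definition~\ref{def-MPR-pair} and Proposition~\ref{prop-NA}, so no truly new analytic ingredient beyond Girsanov's theorem is required; the core of the argument is the identification of the compensator change and the subsequent algebraic collapse of the drift.
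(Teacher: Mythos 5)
Your argument is correct and is essentially the same one the paper relies on: the paper's proof of Proposition~\ref{prop-NA-dynamics} simply cites \cite[Thm.~3.1]{Eberlein_J} and its proof, which is exactly the Girsanov computation you carry out (identification of $\bar{W} = W + \theta \cdot \lambda$, of the new compensator $\phi_t(x) F(dx)\,dt$ so that the kernel $Y^k$ in (\ref{alpha-Q}) is $\phi^k$, and the algebraic cancellation of the $\theta$- and $\psi$-terms in the drift). Your sign bookkeeping and the collapse $\phi^k e^{x\Gamma^k} - 1 + \psi^k = \phi^k(e^{x\Gamma^k}-1)$ check out, so the only difference is that you make explicit what the paper delegates to the reference.
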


\begin{proof}
This follows from \cite[Thm.~3.1]{Eberlein_J} and its proof.
\end{proof}

\begin{remark}\label{remark-measure-change}
Let us distinguish two cases arising in Proposition~\ref{prop-NA-dynamics}:
\begin{itemize}
\item If the HJM model (\ref{HJM-model}) is only driven by Wiener processes, then we have $\bar{\alpha} = \alpha^{0}$; that is, the drift term after the measure change coincides with the classical HJM drift term. Consequently, for every choice of the market price of risk $\theta$, the dynamics of the model can, in a broad sense, be regarded as that after an equivalent measure change; of course, only up to the technical requirements from Proposition~\ref{prop-NA}, in particular condition (\ref{uniformly}) concerning the density process.

\item If the HJM model (\ref{HJM-model}) has driving jump terms, then, in general, we have $\bar{\alpha} \neq \alpha^{(0,1)}$; that is, the drift term after the measure change differs from the classical HJM drift term. The reason for this observation is that, in contrast to the Wiener processes, the L\'{e}vy processes with jumps have different characteristics after the measure change. Consequently, changing the market price of risk can no longer be interpreted as an equivalent measure change; not even in the broad sense where we disregard condition (\ref{uniformly}).
\end{itemize}
These considerations will provide geometric interpretations of our upcoming results concerning the existence of affine realizations; see Remarks~\ref{remark-geometric} and \ref{remark-geometric-2} below.
\end{remark}

\section{Affine realizations for SPDEs driven by L\'{e}vy processes}\label{sec-affine-real-SPDE}

In this section, we provide results on invariant foliations for
SPDEs driven by L\'evy processes, which we will apply to the HJMM equation (\ref{HJMM}) later on. We refer to \cite[Sec. 2 and 3]{Tappe-Wiener} and \cite[Sec. 2]{Tappe-Levy} for more details and explanations about invariant foliations.

Fix a positive integer $n \in \bbn$ and let $X$ be a $\bbr^n$-valued L\'{e}vy process with independent components. In order to avoid trivialities, we assume that $c^k + F^k(\mathbb{R}) > 0$ for $k=1,\ldots,n$, where $c^k \in \bbr_+$ denotes the Gaussian part, and $F^k$ the L\'evy measure. 

Let $\mathcal{Y}$ be a nonempty topological space and let $(Y^{y})_{y \in \mathcal{Y}}$ be a family of $\mathcal{Y}$-valued, adapted and c\`{a}dl\`{a}g processes with $Y_0^{y} = y$ for all $y \in \mathcal{Y}$. We shall deal with SPDEs of the type
\begin{align}\label{SPDE-manifold}
\left\{
\begin{array}{rcl}
dr_t & = & \big( A r_t + \alpha(r_t,Y_t) \big) dt + \sigma(r_{t-})dX_t
\medskip
\\ r_0 & = & h_0 \medskip
\\ Y_0 & = & y_0
\end{array}
\right.
\end{align}
on a separable Hilbert space $H$. In (\ref{SPDE-manifold}), the operator $A : \mathcal{D}(A) \subset H \rightarrow H$ is the infinitesimal
generator of a $C_0$-semigroup $(S_t)_{t \geq 0}$ on $H$, and $\alpha : H \times \mathcal{Y} \rightarrow H$ and $\sigma : H \rightarrow H^n$ are measurable mappings. In the spirit of \cite{P-Z-book}, for $h_0 \in H$ and $y_0 \in \mathcal{Y}$ we call an $H$-valued c\`{a}dl\`{a}g adapted process $(r_t)_{t \geq 0}$ a \textit{weak solution} to (\ref{SPDE-manifold}) with $r_0 = h_0$ and $Y_0 = y_0$, if for each $\zeta \in \mathcal{D}(A^*)$ we have
\begin{align*}
\langle \zeta,r_t \rangle_H &= \langle \zeta,h_0 \rangle_H + \int_0^t \big( \langle A^* \zeta, r_s \rangle_H + \langle \zeta, \alpha(r_s,Y_s^{y_0}) \rangle_H \big) ds 
\\ &\quad + \sum_{k=1}^n \int_0^t \langle \zeta,\sigma^k(r_{s-}) \rangle_H dX_s^k, \quad t \geq 0,
\end{align*}
where $\langle \cdot,\cdot \rangle_H$ denotes the inner product of the Hilbert space $H$. 

\begin{remark}
In the context of the HJMM equation (\ref{HJMM}), the family $(Y^{y})_{y \in \mathcal{Y}}$ represents the source providing the market price of risk processes. More precisely, in the following Section \ref{sec-affine-real-general} we will fix deterministic mappings $\Theta : \caly \to \bbr^d$ and $\Psi : \caly \times \bbr \to (-\infty,1)^n$, and we define the market price of risk as $(\theta,\psi) := (\Theta(Y^{y_0}),\Psi(Y^{y_0}))$ for any starting point $y_0 \in \caly$. As there are no restrictions on the family $(Y^{y})_{y \in \mathcal{Y}}$, this provides a general class of market price of risk processes, and the parametric form will be convenient for technical purposes, for example in Section \ref{sec-nec-Psi}, when we will prove the announced result that under certain assumptions the market price of risk must be constant.
\end{remark}

Throughout this section, we impose the following regularity conditions, which ensure existence and uniqueness of weak solutions to (\ref{SPDE-manifold}).

\begin{assumption}
We suppose that there exist constants $K,L > 0$ such that
\begin{align}\label{alpha-lin-gr}
\| \alpha(h,y) \|_H &\leq K (1 + \| h \|_H)
\end{align}
for all $h \in H$ and $y \in \mathcal{Y}$, and
\begin{align}\label{alpha-Lipschitz} 
\| \alpha(h_1,y) - \alpha(h_2,y) \|_H &\leq L \| h_1 - h_2 \|_H,
\\ \label{sigma-Lipschitz} \| \sigma^k(h_1) - \sigma^k(h_2) \|_H &\leq L \| h_1 - h_2 \|_H, \quad k=1,\ldots,n
\end{align}
for all $h_1,h_2 \in H$ and $y \in \mathcal{Y}$.
\end{assumption}

In what follows, let $V \subset H$ be a finite dimensional linear
subspace.

\begin{definition}
A family $(\mathcal{M}_t)_{t \geq 0}$ of affine subspaces
$\mathcal{M}_t \subset H$, $t \geq 0$ is called a \emph{foliation
generated by $V$}, if there exists $\psi \in C^1(\mathbb{R}_+;H)$
such that
\begin{align*}
\mathcal{M}_t = \psi(t) + V, \quad t \geq 0.
\end{align*}
The map $\psi$ is called a \emph{parametrization} of the foliation $(\mathcal{M}_t)_{t \geq 0}$.
\end{definition}

In what follows, let $(\mathcal{M}_t)_{t \geq 0}$ be a foliation
generated by the subspace $V$.

\begin{definition}\label{def-inv-foliation-pre}
Let $y_0 \in \caly$ be arbitrary. The foliation $(\mathcal{M}_t)_{t \geq 0}$ is called \emph{invariant for (\ref{SPDE-manifold}) with $Y_0 = y_0$} if for all $t_0 \in
\mathbb{R}_+$ and $h_0 \in \mathcal{M}_{t_0}$ the weak solution $r$ to (\ref{SPDE-manifold}) with $r_0 = h_0$ and $Y_0 = y_0$ satisfies
\begin{align*}
\mathbb{P}(r_t \in \mathcal{M}_{t_0 + t}) = 1 \quad \text{for all $t \in \bbr_+$.}
\end{align*}
\end{definition}

\begin{definition}\label{def-inv-foliation}
The foliation $(\mathcal{M}_t)_{t \geq 0}$ is called \emph{invariant for (\ref{SPDE-manifold})}, if for every $y_0 \in \caly$ it is invariant for (\ref{SPDE-manifold}) with $Y_0 = y_0$.
\end{definition}

The proofs of the following two results are similar to the corresponding results in \cite[Sec. 2]{Tappe-Levy} (see also \cite[Sec. 2]{Tappe-Wiener}), and are therefore omitted. In the following, the subspace $T \mathcal{M}_t := \psi'(t) + V$ denotes the tangent space of the foliation at time $t$. Its definition does not depend on the choice of the parametrization $\psi$; see \cite{Tappe-Wiener}.

\begin{theorem}\label{thm-foliation-1-pre}
Let $y_0 \in \caly$ be arbitrary, and suppose that the foliation $(\mathcal{M}_t)_{t \geq 0}$ is invariant for (\ref{SPDE-manifold}) with $Y_0 = y_0$. Then we have
\begin{align}\label{domain-pre-pre}
\mathcal{M}_t &\subset \mathcal{D}(A), \quad t \in \bbr_+,
\\ \label{nu-pre-pre} A h + \alpha ( h,y_0 ) &\in T \mathcal{M}_t, \quad t \in \bbr_+ \text{ and } h \in \mathcal{M}_t,
\\ \label{sigma-pre-pre} \sigma^k ( h ) &\in V, \quad t \in \bbr_+, h \in
\mathcal{M}_t \text{ and } k=1,\ldots,n.
\end{align}
\end{theorem}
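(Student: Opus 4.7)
Fix $t_0 \in \bbr_+$ and an arbitrary $h_0 \in \mathcal{M}_{t_0}$, and let $r$ be the weak solution to (\ref{SPDE-manifold}) with $r_0 = h_0$ and $Y_0 = y_0$. By invariance, $v_t := r_t - \psi(t_0+t) \in V$ almost surely for every $t \ge 0$, so the trajectories of $r$ remain in the finite-dimensional affine space $\psi(t_0+\cdot)+V$. I would extract (\ref{sigma-pre-pre}) first from a pathwise analysis of the jumps and the continuous martingale part of $r$, and then obtain (\ref{domain-pre-pre}) and (\ref{nu-pre-pre}) together from a right derivative at $t=0$ of the weak-solution identity tested against $\zeta \in V^{\perp} \cap \mathcal{D}(A^*)$.

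For (\ref{sigma-pre-pre}), note that the jump increment $\Delta r_t = \sum_{k=1}^n \sigma^k(r_{t-})\Delta X^k_t$ must lie in $V$ a.s. Since the components of $X$ are independent, at the first jump time $\tau^k$ of $X^k$ one has $\Delta r_{\tau^k} = \sigma^k(r_{\tau^k-})\Delta X^k_{\tau^k}$, with $\Delta X^k_{\tau^k}$ distributed as the normalized restriction of $F^k$ and independent of $\mathcal{F}_{\tau^k-}$ by the strong Markov property. Restricting to $\{\tau^k < \varepsilon\}$ and letting $\varepsilon \downarrow 0$, right-continuity of $r$ at $0$ together with Lipschitz continuity of $\sigma^k$ yields $\sigma^k(h_0)\,x \in V$ for $F^k$-a.e.\ $x$, hence $\sigma^k(h_0) \in V$ whenever $F^k \ne 0$. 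When instead $c^k > 0$, the continuous local-martingale part of $r$ contributes a Wiener summand whose matrix quadratic variation $\int_0^t c^k\,\sigma^k(r_{s-}) \otimes \sigma^k(r_{s-})\,ds$ is forced to take values in $V \otimes V$, which again gives $\sigma^k(h_0) \in V$. The standing hypothesis $c^k + F^k(\bbr) > 0$ ensures at least one alternative always applies, and varying $t_0$ and $h_0$ yields (\ref{sigma-pre-pre}).

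Once the stochastic-integral terms contribute only $V$-valued processes, they vanish against any $\zeta \in V^{\perp} \cap \mathcal{D}(A^*)$. Combined with $\langle \zeta, r_t\rangle_H = \langle \zeta, \psi(t_0+t)\rangle_H$, the weak identity collapses to
\begin{align*}
\langle \zeta, \psi(t_0+t) - h_0\rangle_H = \int_0^t \bigl( \langle A^*\zeta, r_s\rangle_H + \langle \zeta, \alpha(r_s, Y_s^{y_0})\rangle_H \bigr)\, ds.
\end{align*}
A right derivative at $t=0$, legitimate thanks to c\`adl\`ag paths and continuity of $\alpha(\cdot,y_0)$, produces
\begin{align*}
\langle A^*\zeta, h_0\rangle_H = \langle \zeta, \psi'(t_0) - \alpha(h_0, y_0)\rangle_H \quad \text{for every } \zeta \in V^{\perp} \cap \mathcal{D}(A^*).
\end{align*}
Since $A$ is densely defined and closed on the Hilbert space $H$, one has $A=(A^*)^*$, and the condition $h_0 \in \mathcal{D}(A)$ is equivalent to the $H$-continuity of $\zeta \mapsto \langle A^*\zeta, h_0\rangle_H$ on $\mathcal{D}(A^*)$. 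The displayed identity delivers this continuity on the codimension-$\dim V$ subspace $V^{\perp} \cap \mathcal{D}(A^*)$, which is dense in $V^{\perp}$; a finite-dimensional correction handling the $V$-component of $\zeta$ completes the extension, yielding (\ref{domain-pre-pre}), and Riesz representation then identifies $Ah_0 + \alpha(h_0,y_0) - \psi'(t_0)$ as an element of $V$, which is (\ref{nu-pre-pre}).

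The main obstacle will be this last extension step: a priori $V \not\subset \mathcal{D}(A^*)$, and the orthogonal projections $P_V\zeta$, $P_{V^{\perp}}\zeta$ of a general $\zeta \in \mathcal{D}(A^*)$ need not belong to $\mathcal{D}(A^*)$ individually, so one cannot simply decompose and evaluate piecewise. This is resolved by an abstract duality argument in the spirit of \cite[Sec.~2]{Tappe-Wiener} and \cite[Sec.~2]{Tappe-Levy}, exploiting the finite dimensionality of $V$ together with the density of $V^{\perp}\cap\mathcal{D}(A^*)$ in $V^{\perp}$.
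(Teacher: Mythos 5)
Your argument is sound and is essentially the proof the authors intend: the paper omits it, referring to \cite[Sec.~2]{Tappe-Levy} and \cite[Sec.~2]{Tappe-Wiener}, where precisely this strategy is carried out --- a jump/quadratic-variation analysis yielding $\sigma^k(h_0) \in V$ (using $\Delta r_t = r_t - r_{t-} \in V$ and independence of the jump size from the past, resp.\ vanishing of the continuous martingale part of the deterministic process $\langle \zeta, r_t\rangle_H$ for $\zeta \in V^{\perp}$), followed by testing the weak-solution identity against $\zeta \in V^{\perp} \cap \mathcal{D}(A^*)$ and upgrading the resulting bounded functional to $h_0 \in \mathcal{D}(A^{**}) = \mathcal{D}(A)$; your ``main obstacle'' is resolved exactly as you indicate, since $P_V(\mathcal{D}(A^*)) = V$ by density, so one can choose a finite-dimensional complement of $V^{\perp} \cap \mathcal{D}(A^*)$ inside $\mathcal{D}(A^*)$ with a bounded associated decomposition, and then $(V^\perp)^\perp = V$ gives (\ref{nu-pre-pre}). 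Two details to make explicit in a write-up: when $F^k(\bbr) = \infty$ there is no first jump time of $X^k$, so one must work with the first jump whose size lies in a Borel set of finite positive $F^k$-measure; and the right-differentiation at $t=0$ requires $\alpha(r_s, Y^{y_0}_s) \to \alpha(h_0, y_0)$ as $s \downarrow 0$, i.e.\ continuity of $\alpha$ in the $\caly$-variable along the c\`adl\`ag path of $Y^{y_0}$ --- continuity of $\alpha(\cdot,y_0)$ in the first argument alone, which is all you invoke, does not deliver this, and this regularity is only implicit in the paper's standing assumptions.
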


\begin{theorem}\label{thm-foliation-1}
The following statements are equivalent:
\begin{enumerate}
\item[(i)] The foliation $(\mathcal{M}_t)_{t \geq 0}$ is invariant
for (\ref{SPDE-manifold}).

\item[(ii)] We have
\begin{align}\label{domain-pre}
\mathcal{M}_t &\subset \mathcal{D}(A), \quad t \in \bbr_+
\\ \label{nu-pre} A h + \alpha ( h,y ) &\in T \mathcal{M}_t, \quad t \in \bbr_+ \text{ and } (h,y) \in \mathcal{M}_t \times \mathcal{Y},
\\ \label{sigma-pre} \sigma^k ( h ) &\in V, \quad t \in \bbr_+, h \in
\mathcal{M}_t \text{ and } k=1,\ldots,n.
\end{align}
\end{enumerate}
\end{theorem}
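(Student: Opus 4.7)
The plan is to reduce Theorem~\ref{thm-foliation-1} to the per-starting-point result Theorem~\ref{thm-foliation-1-pre} via the observation that both (i) and (ii) are logical conjunctions indexed by $y_0 \in \caly$. Indeed, Definition~\ref{def-inv-foliation} makes (i) the statement that Definition~\ref{def-inv-foliation-pre} holds for every $y_0 \in \caly$, while condition (\ref{nu-pre}) is the condition (\ref{nu-pre-pre}) quantified uniformly over $y \in \caly$. Conditions (\ref{domain-pre}), (\ref{sigma-pre}) and (\ref{domain-pre-pre}), (\ref{sigma-pre-pre}) do not involve $y$ at all.

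For (i)~$\Rightarrow$~(ii), fix $y_0 \in \caly$ arbitrarily. Since the foliation is invariant for (\ref{SPDE-manifold}) with $Y_0 = y_0$, Theorem~\ref{thm-foliation-1-pre} yields (\ref{domain-pre-pre}), (\ref{sigma-pre-pre}), and the pointwise tangent condition $A h + \alpha(h, y_0) \in T\mathcal{M}_t$ for all $t \in \bbr_+$ and $h \in \mathcal{M}_t$. Since $y_0$ was arbitrary, this aggregates to (\ref{domain-pre}), (\ref{sigma-pre}), and (\ref{nu-pre}).

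For (ii)~$\Rightarrow$~(i), fix $y_0 \in \caly$ arbitrarily. Restricting (\ref{nu-pre}) to $y = y_0$ gives (\ref{nu-pre-pre}), which together with (\ref{domain-pre}) and (\ref{sigma-pre}) provides exactly the hypotheses needed for the converse direction of Theorem~\ref{thm-foliation-1-pre}. The latter converse is established by the standard procedure employed in \cite[Sec. 2]{Tappe-Levy}: given a parametrization $\psi \in C^1(\bbr_+; H)$ of $(\mathcal{M}_t)_{t \geq 0}$ and the orthogonal projection $P$ onto $V^{\perp}$, one monitors the squared residual $\varphi_t := \| P(r_t - \psi(t_0 + t)) \|_H^2$ via It\^{o}'s formula applied to a Yosida approximation $A_n = nA(n - A)^{-1}$ of $A$. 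The tangential condition (\ref{nu-pre-pre}) annihilates the drift whenever $r_t \in \mathcal{M}_{t_0 + t}$, and (\ref{sigma-pre-pre}) annihilates the diffusion and jump contributions, so a Gronwall-type estimate forces $\bbe[\varphi_t] = 0$. Passing to the limit in the Yosida approximation yields invariance in the weak-solution sense of Definition~\ref{def-inv-foliation-pre}, and since $y_0$ was arbitrary, (i) follows.

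The main obstacle is the careful handling of the unbounded generator $A$ and the jump noise in the monitoring argument for the converse implication. However, since this part is structurally identical to the Wiener and pure-jump L\'{e}vy settings treated in \cite{Tappe-Wiener} and \cite{Tappe-Levy}, I would invoke those references at the corresponding step rather than redoing the Yosida-approximation and Gronwall computations in detail; the only genuinely new content is the simple quantifier bookkeeping over $y_0 \in \caly$ carried out above.
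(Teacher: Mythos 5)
Your forward implication is fine: invariance for every $y_0$ together with Theorem~\ref{thm-foliation-1-pre} applied pointwise in $y_0$ aggregates immediately to (\ref{domain-pre})--(\ref{sigma-pre}). The gap is in the converse. You restrict (\ref{nu-pre}) to $y=y_0$ and then invoke ``the converse direction of Theorem~\ref{thm-foliation-1-pre}''. No such converse is stated in the paper, and the version you formulate is false: the drift of (\ref{SPDE-manifold}) at time $t$ is $\alpha(r_t,Y_t^{y_0})$, evaluated along the \emph{moving} state process $Y^{y_0}$, not at the frozen initial value $y_0$. For $t>0$ the process $Y^{y_0}$ may visit points $y\in\caly$ at which $Ah+\alpha(h,y)$ is transversal to $T\mathcal{M}_t$, so the single tangency condition $Ah+\alpha(h,y_0)\in T\mathcal{M}_t$ does not annihilate the drift in your monitoring computation. (Take $\caly=\{y_0,y_1\}$, let $Y^{y_0}$ jump to $y_1$ at time $1$, and let $\alpha(\cdot,y_1)$ point out of $V$: all three conditions at $y=y_0$ hold, yet invariance with $Y_0=y_0$ fails after time $1$.) This is precisely why (\ref{nu-pre}) quantifies over all $y\in\caly$, why Theorem~\ref{thm-foliation-1-pre} is stated only as a necessity result, and why the paper notes explicitly that $Y^{y^*}=y^*$ is a special additional assumption rather than the generic situation.

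The repair is small but it changes the architecture of your argument: do not route the sufficiency through a per-$y_0$ equivalence. Run the Yosida/Gronwall monitoring argument you describe directly under the full hypothesis (\ref{nu-pre}): since $Y_t^{y_0}(\omega)\in\caly$ for every $t$ and $\omega$, condition (\ref{nu-pre}) guarantees $A r_t+\alpha(r_t,Y_t^{y_0})\in T\mathcal{M}_{t_0+t}$ whenever $r_t\in\mathcal{M}_{t_0+t}$, which is what kills the drift term, while (\ref{sigma-pre}) handles the Wiener and jump contributions as you say; since $y_0$ was arbitrary, invariance in the sense of Definition~\ref{def-inv-foliation} follows. With that substitution, the remainder of your sketch and the appeal to \cite{Tappe-Wiener} and \cite{Tappe-Levy} for the technical details of the approximation and the Gronwall estimate is consistent with what the paper intends, as it omits the proof and cites exactly those references.
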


\begin{definition}
Let $y_0 \in \caly$ be arbitrary. 
\begin{enumerate}
\item Let $V \subset H$ be a finite dimensional subspace. The SPDE (\ref{SPDE-manifold}) with $Y_0 = y_0$ has an \emph{affine realization generated by $V$} if for each $h_0 \in \mathcal{D}(A)$ there exists a foliation $(\mathcal{M}_t)_{t \geq 0}$ generated by $V$ with $h_0 \in
\mathcal{M}_0$, which is invariant for (\ref{SPDE-manifold}) with $Y_0 = y_0$.

\item The SPDE (\ref{SPDE-manifold}) with $Y_0 = y_0$ has an \emph{affine realization} if it has an affine realization with $Y_0 = y_0$ generated by some finite dimensional subspace $V$.
\end{enumerate}
\end{definition}

\begin{definition}\mbox{}
\begin{enumerate}
\item Let $V \subset H$ be a finite dimensional subspace. The SPDE (\ref{SPDE-manifold}) has an \emph{affine realization generated by
$V$} if for each $h_0 \in \mathcal{D}(A)$ there exists a foliation
$(\mathcal{M}_t)_{t \geq 0}$ generated by $V$ with $h_0 \in
\mathcal{M}_0$, which is invariant for (\ref{SPDE-manifold}). Note that, according to Definition~\ref{def-inv-foliation}, the latter condition means that for every $y_0 \in \caly$ the foliation $(\mathcal{M}_t)_{t \geq 0}$ it is invariant for (\ref{SPDE-manifold}) with $Y_0 = y_0$.

\item The SPDE (\ref{SPDE-manifold}) has an \emph{affine realization} if it has an affine realization generated by some finite dimensional subspace $V$.
\end{enumerate}
\end{definition}

From now on, we fix an element $y^* \in \caly$, and deal with the question when the existence of an affine realization with $Y_0 = y^*$ implies the existence of an affine realization. For this purpose, we define the subspace $U_{y^*} \subset H$ as
\begin{align}\label{def-U-star}
U_{y^*} := \langle \alpha(h,y) - \alpha(h,y^*) : h \in H \text{ and } y \in \caly \rangle.
\end{align}
Here, and in the sequel, we denote by $\langle B \rangle$ the linear space generated by some subset $B \subset H$. There is no danger of confusion with the inner product of the Hilbert space, which we denote by $\langle \cdot,\cdot \rangle_H$.

\begin{theorem}\label{thm-affine-real}
The following statements are equivalent:
\begin{enumerate}
\item[(i)] The SPDE (\ref{SPDE-manifold}) has an affine realization.

\item[(ii)] The SPDE (\ref{SPDE-manifold}) with $Y_0 = y^*$ has an affine realization, and $U_{y^*}$ is a finite dimensional subspace of $\cald(A)$.
\end{enumerate}
\end{theorem}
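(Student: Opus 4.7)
The proof splits into two implications, of which $(i)\Rightarrow(ii)$ is straightforward and $(ii)\Rightarrow(i)$ is the substantive one. For $(i)\Rightarrow(ii)$, suppose the SPDE admits an affine realization generated by a finite dimensional subspace $V\subset\cald(A)$. Restricting the invariance requirement to the single starting point $Y_0=y^*$ immediately yields an affine realization of the $y^*$-SPDE with the same generating subspace $V$. To show $U_{y^*}\subset V$, pick any $h\in\cald(A)$ and apply the tangential condition~(\ref{nu-pre}) of Theorem~\ref{thm-foliation-1} at $t=0$ to the invariant foliation through $h$: this gives $Ah+\alpha(h,y)\in\psi'(0)+V$ for every $y\in\caly$, and subtracting the corresponding relation for $y^*$ yields $\alpha(h,y)-\alpha(h,y^*)\in V$. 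Density of $\cald(A)$ in $H$, the Lipschitz property~(\ref{alpha-Lipschitz}) of $\alpha(\cdot,y)$, and closedness of the finite dimensional subspace $V$ then extend this inclusion to all $h\in H$, so that $U_{y^*}\subset V\subset\cald(A)$ is finite dimensional.

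For $(ii)\Rightarrow(i)$ I would set $V:=V_0+U_{y^*}$, where $V_0$ denotes the generating subspace of the $y^*$-affine realization; by hypothesis $V$ is finite dimensional and contained in $\cald(A)$. For each $h_0\in\cald(A)$ let $\psi$ be a parametrization of the $y^*$-invariant foliation $\calm_t^*=\psi(t)+V_0$ through $h_0$, and define $\calm_t:=\psi(t)+V$. The inclusion $\calm_t\subset\cald(A)$ is immediate, and the volatility condition $\sigma^k(h)\in V$ for $h\in\calm_t$ follows by applying~(\ref{sigma-pre-pre}) of Theorem~\ref{thm-foliation-1-pre} to the $y^*$-invariant foliations through every $h'\in\cald(A)$ and extending by density and Lipschitz continuity~(\ref{sigma-Lipschitz}) of $\sigma^k$ to all of $H$.

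The main obstacle is the drift tangential condition $Ah+\alpha(h,y)\in\psi'(t)+V$ at points $h=\psi(t)+v_0+u$ of $\calm_t$ whose $U_{y^*}$-component $u$ is nonzero, since such $h$ generally leave the original $y^*$-foliation $\calm_t^*$. Decomposing $\alpha(h,y)=\alpha(h,y^*)+[\alpha(h,y)-\alpha(h,y^*)]$, the second piece lies in $U_{y^*}\subset V$ by definition of $U_{y^*}$; applying~(\ref{nu-pre-pre}) of Theorem~\ref{thm-foliation-1-pre} to the $y^*$-foliation at the point $\psi(t)+v_0\in\calm_t^*$ reduces the remaining task to establishing the shift-invariance
\[
Au+\alpha(h'+u,y^*)-\alpha(h',y^*)\in V\quad\text{for all }h'\in H\text{ and }u\in U_{y^*}.
\]
Proving this is the technical heart of the theorem, and it is precisely where both parts of the hypothesis on $U_{y^*}$ are used: finite dimensionality controls $Au$ through finitely many basis directions, while $U_{y^*}\subset\cald(A)$ ensures that $Au$ is a well-defined element of $H$. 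The argument proceeds by exploiting the abundance of $y^*$-invariant foliations---one through each $h'\in\cald(A)$, all generated by the same $V_0$---to relate the tangent directions at shifted points via the translation invariance $Av_0+\alpha(h'+v_0,y^*)-\alpha(h',y^*)\in V_0$ for $v_0\in V_0$, and then combining the finite dimensionality of $U_{y^*}$ with density/continuity to extend the resulting inclusion from $\cald(A)$ to all of $H$. Once this shift invariance is in hand, Theorem~\ref{thm-foliation-1} yields invariance of the foliation $(\calm_t)_{t\geq 0}$ for the full SPDE, completing the construction.
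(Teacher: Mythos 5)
Your direction (i) $\Rightarrow$ (ii) coincides with the paper's argument, and in (ii) $\Rightarrow$ (i) you build exactly the same objects as the paper: the enlarged subspace $V=V_0+U_{y^*}$ (the paper's $\bar V$) and the enlarged foliation $\calm_t=\calm_t^*+V$. The gap is at the step you yourself single out. Having reduced the drift tangency at points $h=h'+u$ with $u\in U_{y^*}$ to the shift-invariance
\begin{align*}
Au+\alpha(h'+u,y^*)-\alpha(h',y^*)\in V\quad\text{for all }h'\in H\text{ and }u\in U_{y^*},
\end{align*}
you call it ``the technical heart'' and offer only a sketch, and that sketch does not work: the translation invariance $Av_0+\alpha(h'+v_0,y^*)-\alpha(h',y^*)\in V_0$ holds because $h'$ and $h'+v_0$ lie on the \emph{same} $y^*$-invariant leaf, so both tangency relations are memberships in the same coset $\psi'(0)+V_0$ and the parametrization derivative cancels upon subtraction. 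For $u\in U_{y^*}$ this fails: $U_{y^*}$ need not be contained in $V_0$, so $h'+u$ lies on a different $y^*$-leaf, and subtracting the two tangency relations leaves the uncontrolled difference $(\psi^{h'+u})'(0)-(\psi^{h'})'(0)$. Neither the finite dimensionality of $U_{y^*}$ nor the inclusion $U_{y^*}\subset\cald(A)$ gives $A\,U_{y^*}\subset V$ or any handle on that difference. As written, the hard implication is therefore not proved.

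For comparison, the paper's proof never attempts this shift-invariance. It verifies the conditions of Theorem~\ref{thm-foliation-1} for the enlarged foliation only at points $h$ of the original leaf $\calm_t^*$, where everything is a one-liner, namely $Ah+\alpha(h,y)=\big(Ah+\alpha(h,y^*)\big)+\big(\alpha(h,y)-\alpha(h,y^*)\big)\in T\calm_t^*+V=T\calm_t$, together with $\sigma^k(h)\in V_0\subset V$ and $\calm_t\subset\cald(A)$, and then concludes directly. Your stricter reading of Theorem~\ref{thm-foliation-1} --- that the tangency conditions must be checked at \emph{every} point of the enlarged leaf --- is the natural one, and it is precisely what creates the obligation your proposal cannot discharge. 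To complete your argument you must either supply an actual proof of the shift-invariance above, or justify why verifying the conditions on the sub-leaf $\calm_t^*$ alone suffices for invariance of the enlarged foliation; at present neither is done.
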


\begin{proof}
(i) $\Rightarrow$ (ii): By hypothesis, the SPDE (\ref{SPDE-manifold}) with $Y_0 = y^*$ has an affine realization. Let $V$ be a finite dimensional subspace generating the affine realization, and let $h_0 \in \cald(A)$ be arbitrary. Then there exists a foliation $(\calm_t)_{t \geq 0}$ generated by $V$ with $h_0 \in \calm_0$, which is invariant for (\ref{SPDE-manifold}). By condition (\ref{nu-pre}) of Theorem~\ref{thm-foliation-1} we obtain
\begin{align*}
A h + \alpha(h,y) \in T \calm_0 \quad \text{for all $h \in \calm_0$ and all $y \in \caly$.}
\end{align*}
In particular, for all $h \in \calm_0$ and all $y \in \caly$ we get
\begin{align*}
\alpha(h,y) - \alpha(h,y^*) = \big( A h + \alpha(h,y) \big) - \big( A h + \alpha(h,y^*) \big) \in V.
\end{align*}
Therefore, we deduce
\begin{align*}
\alpha(h,y) - \alpha(h,y^*) \in V \quad \text{for all $h \in \cald(A)$ and all $y \in \caly$.}
\end{align*}
Since $\alpha(\cdot,y)$ is continuous for each $y \in \caly$, the domain $\cald(A)$ is dense in $H$, and $V$ is closed, we conclude that
\begin{align*}
\alpha(h,y) - \alpha(h,y^*) \in V \quad \text{for all $h \in H$ and all $y \in \caly$,}
\end{align*}
which proves that the subspace $U_{y^*}$ is finite dimensional. Furthermore, by condition (\ref{domain-pre}) of Theorem~\ref{thm-foliation-1}, it is contained in $\cald(A)$. 

\noindent(ii) $\Rightarrow$ (i): There exists a finite dimensional subspace $V$ generating an affine realization for (\ref{SPDE-manifold}) with $Y_0 = y^*$. Let $h_0 \in \cald(A)$ be arbitrary. Then there exists a foliation $(\calm_t)_{t \geq 0}$ generated by $V$ with $h_0 \in \calm_0$, which is invariant for (\ref{SPDE-manifold}) with $Y_0 = y^*$. According to Theorem~\ref{thm-foliation-1-pre}, for all $t \in \bbr_+$ we have
\begin{align}\label{inv-star-1}
\calm_t &\subset \cald(A),
\\ \label{inv-star-2} A h + \alpha(h,y^*) &\in T \calm_t, \quad h \in \calm_t,
\\ \label{inv-star-3} \sigma^k(h) &\in V, \quad h \in \calm_t \text{ and } k=1,\ldots,n.
\end{align}
Since $U_{y^*}$ is a finite dimensional subspace of $\cald(A)$, by the just derived relation (\ref{inv-star-1}) the subspace
\begin{align*}
\bar{V} := V + U_{y^*}
\end{align*}
is a finite dimensional subspace of $\cald(A)$, too. We define the new foliation $(\bar{\calm}_t)_{t \geq 0}$ as $\bar{\calm}_t := \calm_t + \bar{V}$. Then, by (\ref{inv-star-1}) and (\ref{inv-star-3}), for all $t \in \bbr_+$ we obtain
\begin{align*}
\bar{\calm}_t &\subset \cald(A),
\\ \sigma^k(h) &\in \bar{V}, \quad h \in \calm_t \text{ and } k=1,\ldots,n.
\end{align*}
Moreover, by (\ref{inv-star-2}), for all $t \in \bbr_+$ and $(h,y) \in \calm_t \times \caly$ we obtain
\begin{align*}
A h + \alpha(h,y) = \big( A h + \alpha(h,y^*) \big) + \big( \alpha(h,y) - \alpha(h,y^*) \big) \in T \calm_t + \bar{V} = T \bar{\calm}_t. 
\end{align*}
Consequently, by Theorem~\ref{thm-foliation-1} the SPDE (\ref{SPDE-manifold}) has an affine realization generated by $\bar{V}$.
\end{proof}

\begin{proposition}\label{prop-sigma-nec}
Suppose that the SPDE (\ref{SPDE-manifold}) with $Y_0 = y^*$ has an affine realization. Then the subspace $U \subset H$ defined as $U := \sum_{k=1}^n \langle \sigma^k(H) \rangle$ is a finite dimensional subspace of $\cald(A)$.
\end{proposition}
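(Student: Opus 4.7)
The plan is to apply Theorem~\ref{thm-foliation-1-pre} and then pass to closures. By hypothesis, there is a finite dimensional subspace $V \subset H$ generating an affine realization of (\ref{SPDE-manifold}) with $Y_0 = y^*$, so for every $h_0 \in \cald(A)$ we obtain a foliation $(\calm_t)_{t \geq 0}$ generated by $V$ with $h_0 \in \calm_0$ that is invariant for the SPDE with $Y_0 = y^*$. Theorem~\ref{thm-foliation-1-pre} applied at $t=0$ then furnishes the two ingredients I need: $\calm_0 \subset \cald(A)$, and $\sigma^k(h_0) \in V$ for $k = 1,\dots,n$.

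First I would observe that $V$ itself lies in $\cald(A)$. Writing $\calm_0 = \psi(0) + V$ for some parametrization $\psi$, the fact that $0 \in V$ yields $\psi(0) \in \calm_0 \subset \cald(A)$; then for any $v \in V$ we have $\psi(0) + v \in \calm_0 \subset \cald(A)$, and linearity of the domain gives $v = (\psi(0)+v) - \psi(0) \in \cald(A)$. Hence $V \subset \cald(A)$.

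Next, since $h_0 \in \cald(A)$ was arbitrary, the tangential condition (\ref{sigma-pre-pre}) at $t = 0$ yields $\sigma^k(\cald(A)) \subset V$ for every $k$. I would then extend this inclusion to all of $H$ by a standard density argument: the map $\sigma^k$ is Lipschitz continuous by (\ref{sigma-Lipschitz}), the domain $\cald(A)$ is dense in $H$ (as $A$ generates a $C_0$-semigroup), and $V$ is closed in $H$ because it is finite dimensional. Passage to limits therefore gives $\sigma^k(H) \subset V$, hence
\begin{align*}
U = \sum_{k=1}^n \langle \sigma^k(H) \rangle \subset V \subset \cald(A),
\end{align*}
and $U$ inherits finite dimensionality from $V$.

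There is no real obstacle here; this is essentially a one-line consequence of Theorem~\ref{thm-foliation-1-pre} once the two preliminary observations are in place. The only mild subtlety worth flagging explicitly is the step showing that $V$ (and not merely each individual leaf $\calm_t$) is contained in $\cald(A)$, which rests on the linearity of the domain and the presence of $0 \in V$.
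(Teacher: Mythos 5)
Your proof is correct and follows essentially the same route as the paper's: apply condition (\ref{sigma-pre-pre}) of Theorem~\ref{thm-foliation-1-pre} to get $\sigma^k(h_0) \in V$ for all $h_0 \in \cald(A)$, extend to $\sigma^k(H) \subset V$ by continuity of $\sigma^k$, density of $\cald(A)$, and closedness of the finite dimensional $V$, and then use condition (\ref{domain-pre-pre}) to place $V$ inside $\cald(A)$. Your explicit justification that $\calm_0 \subset \cald(A)$ forces $V \subset \cald(A)$ (via $0 \in V$ and linearity of the domain) is a detail the paper leaves implicit, but it is the intended argument.
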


\begin{proof}
There exists a finite dimensional subspace $V$ generating the affine realization. Let $h_0 \in \cald(A)$ be arbitrary. Then there exists a foliation $(\calm_t)_{t \geq 0}$ generated by $V$ with $h_0 \in \calm_0$, which is invariant for (\ref{SPDE-manifold}). By condition (\ref{sigma-pre-pre}) of Theorem~\ref{thm-foliation-1-pre} we obtain
\begin{align*}
\sigma^k(h_0) \in V \quad \text{for all $k=1,\ldots,n$.}
\end{align*}
Since the volatilities $\sigma^k$, $k=1,\ldots,n$, are continuous, the domain $\cald(A)$ is dense in $H$, and $V$ is closed, we conclude that
\begin{align*}
\sigma^k(H) \subset V \quad \text{for all $k=1,\ldots,n$,}
\end{align*}
which proves that the subspace $U$ is finite dimensional. Moreover, by condition (\ref{domain-pre-pre}) of Theorem~\ref{thm-foliation-1-pre} we have $V \subset \cald(A)$, which proves that $U$ is contained in $\cald(A)$.
\end{proof}

\section{Affine realizations for the HJMM equation with real-world forward rate dynamics}\label{sec-affine-real-general}

In this section, we start our analysis regarding the existence of affine realizations for the HJMM equation with real-world forward rate dynamics, and present our first main result, which establishes the connection between the existence of affine realizations for the HJMM equation with real-world forward rate dynamics and for the classical HJMM equation based on risk-neutral pricing.

First, we introduce the space of forward curves.
We fix a nondecreasing $C^1$-function $w : \bbr_+ \to [1,\infty)$ such that $w^{-1/3} \in \call^1(\bbr_+)$, and denote by $H$ the space of all absolutely continuous functions $h : \mathbb{R}_+
\rightarrow \mathbb{R}$ such that
\begin{align*}
\| h \|_{H} := \bigg( |h(0)|^2 + \int_{\mathbb{R}_+} |h'(\xi)|^2
w(\xi) d\xi \bigg)^{1/2} < \infty.
\end{align*}
Spaces of this kind have been utilized in \cite{fillnm}, to which we refer for their properties. Denoting by $(S_t)_{t \geq 0}$ the translation semigroup on $H$, the HJMM equation (\ref{HJMM}) is a particular example of the SPDE (\ref{SPDE-manifold}) with infinitesimal generator $A = d/d\xi$ on the domain
\begin{align*}
\cald(d/d\xi) = \{ h \in H \cap C^1(\bbr_+) : h' \in H \}.
\end{align*}
Next, we present our standing assumptions which prevail throughout this paper. As in Section \ref{sec-HJM-benchmark}, we fix Wiener processes $W^1,\ldots,W^d$ and pure jump L\'{e}vy processes $X^1,\ldots,X^n$, and, as in Section \ref{sec-affine-real-SPDE}, let $\mathcal{Y}$ be a nonempty topological space and let $(Y^{y})_{y \in \mathcal{Y}}$ be a family of $\mathcal{Y}$-valued, adapted and c\`{a}dl\`{a}g processes with $Y_0^{y} = y$ for all $y \in \mathcal{Y}$. 
Let $\sigma : H \to H^d$, $\gamma : H \to H^n$ and $\Theta : \caly \to \bbr^d$, $\Psi : \caly \times \bbr \to (-\infty,1)^n$ be measurable mappings. We define $\Phi : \caly \times \bbr \to (0,\infty)^n$ as $\Phi := 1 - \Psi$.

\begin{assumption}\label{ass-Lipschitz-HJMM}
We suppose that the following conditions are satisfied:
\begin{enumerate}
\item $\alpha : H \times \caly \to H$ given by (\ref{alpha-HJM}) satisfies the linear growth condition (\ref{alpha-lin-gr}) and the Lipschitz condition (\ref{alpha-Lipschitz}).

\item $\sigma^1,\ldots,\sigma^d$ and $\gamma^1,\ldots,\gamma^n$ are Lipschitz continuous.

\item For each $y_0 \in \caly$ the pair $(\theta,\psi) = (\Theta(Y^{y_0}),\Psi(Y^{y_0}))$ is a pair of market prices of risk satisfying the integrability condition (\ref{cond-no-arb-portf}). 

\item There are $p,q \in [1,\infty]$ with $\frac{1}{p} + \frac{1}{q} = 1$ such that $\Psi^k(y,\cdot) \in \call^p(F^k)$ and $x \mapsto e^{x \Gamma^k(h)}$ belongs to $\call^q(F^k;H)$ for all $k=1,\ldots,n$.

\item For all $(h,y) \in H \times \caly$ and $k = 1,\ldots,n$ the mapping
\begin{align*}
\int_{\bbr} \Psi^k(y,x) e^{x \Gamma^k(h)} F^k(dx)
\end{align*}
belongs to $\cald((d/d\xi)^2)$ with derivative
\begin{align*}
\frac{d}{d \xi} \int_{\bbr} \Psi^k(y,x) e^{x \Gamma^k(h)} F^k(dx) = -\gamma^k(h) \int_{\bbr} x \Psi^k(y,x) e^{x \Gamma^k(h)} F^k(dx).
\end{align*}
\end{enumerate}
\end{assumption}

After extending the state space $\caly$, if necessary, we may assume that there exists an element $y^* \in \caly$ such that $(\Theta(y^*),\Psi(y^*,\cdot)) = 0$ and $Y^{y^*} = y^*$.

\begin{remark}
Note that for all $h \in H$ we have
\begin{equation}\label{HJM-clasical}
\begin{aligned}
\alpha(h,y^*) &= - \sum_{k=1}^d \sigma^k(h) \Sigma^k(h) - \sum_{k=1}^m \gamma^k(h) \int_{\bbr} x e^{x \Gamma^k(h)} F^k(dx)
\\ &\quad - \sum_{k=m+1}^n \gamma^k(h) \int_{\bbr} x \big( e^{x \Gamma^k(h)} - 1 \big) F^k(dx).
\end{aligned}
\end{equation}
Within the Benchmark Approach, we always work with the num\'{e}raire pair $(S^{\delta_*},\bbp)$, but point out that (\ref{HJM-clasical}) is just the classical HJM drift term, which also occurs for num\'{e}raire pairs $(B,\bbq)$, where $B$ is the savings account and $\bbq \sim \bbp$ is a risk-neutral measure. In this sense, the existence of an affine realization with $Y_0 = y^*$ corresponds to the existence of an affine realization for classical HJM interest rate models, and for this situation, many results have been established in the literature.
\end{remark}

It is clear that the existence of an affine realization implies the existence of an affine realization with $Y_0 = y^*$. In order to provide a closer connection between these two types of realizations, we introduce the subspace $U_{\Psi,\gamma} \subset H$ by (\ref{def-U-Psi-gamma}).

\begin{theorem}\label{thm-HJMM-affine-real}
The following statements are equivalent:
\begin{enumerate}
\item[(i)] The HJMM equation (\ref{HJMM}) has an affine realization.

\item[(ii)] The HJMM equation (\ref{HJMM}) with $Y_0 = y^*$ has an affine realization, and we have $\dim U_{\Psi,\gamma} < \infty$.
\end{enumerate}
\end{theorem}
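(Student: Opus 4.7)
My plan is to invoke the abstract criterion in Theorem~\ref{thm-affine-real}, with the HJMM equation (\ref{HJMM}) viewed as an instance of the SPDE (\ref{SPDE-manifold}) on $H$: generator $A = d/d\xi$, combined driving noise $(W,X)$, and drift $\alpha$ given by (\ref{alpha-HJM}). Once this reduction is in place, the work amounts to relating the abstract subspace $U_{y^*}$ from (\ref{def-U-star}) to the concrete subspace $U_{\Psi,\gamma}$ from (\ref{def-U-Psi-gamma}).

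The first step is to subtract (\ref{HJM-clasical}) from (\ref{alpha-HJM}). Since $(\Theta(y^*),\Psi(y^*,\cdot)) = 0$, the $\Sigma^k$-contributions cancel, and so do the constants $-1$ appearing in the L\'evy integrals for $k=m+1,\ldots,n$; writing $\Phi = 1-\Psi$ gives
\begin{align*}
\alpha(h,y)-\alpha(h,y^*) = \sum_{k=1}^d \Theta^k(y)\sigma^k(h) + \sum_{k=1}^n \gamma^k(h)\int_{\bbr} x\,\Psi^k(y,x)\, e^{x\Gamma^k(h)} F^k(dx).
\end{align*}
Assumption~\ref{ass-Lipschitz-HJMM}(5) then rewrites the second sum as the $\xi$-derivative of $-\sum_{k=1}^n \int_{\bbr} \Psi^k(y,x) e^{x\Gamma^k(h)} F^k(dx)$, which is precisely minus a generator of $U_{\Psi,\gamma}$. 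Hence
\begin{align*}
U_{y^*} \subset \sum_{k=1}^d \langle \sigma^k(H) \rangle + \tfrac{d}{d\xi}\bigl(U_{\Psi,\gamma}\bigr), \qquad \tfrac{d}{d\xi}\bigl(U_{\Psi,\gamma}\bigr) \subset U_{y^*} + \sum_{k=1}^d \langle \sigma^k(H) \rangle.
\end{align*}

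Both directions now follow from these two inclusions, the key ingredient being Proposition~\ref{prop-sigma-nec}: whenever the HJMM equation with $Y_0 = y^*$ has an affine realization, $\sum_{k=1}^d \langle \sigma^k(H)\rangle$ (and similarly $\sum_{k=1}^n \langle \gamma^k(H)\rangle$) is a finite-dimensional subspace of $\cald(d/d\xi)$. In direction (ii)$\Rightarrow$(i), combining this with $\dim U_{\Psi,\gamma} < \infty$ and Assumption~\ref{ass-Lipschitz-HJMM}(5) (which places $U_{\Psi,\gamma}\subset \cald((d/d\xi)^2)$, so that differentiating keeps us inside $\cald(d/d\xi)$), the first inclusion shows $U_{y^*}$ is a finite-dimensional subspace of $\cald(d/d\xi)$, and Theorem~\ref{thm-affine-real} yields (i). For (i)$\Rightarrow$(ii), Theorem~\ref{thm-affine-real} hands us the affine realization with $Y_0 = y^*$ together with $\dim U_{y^*} < \infty$, and the second inclusion forces $\dim \tfrac{d}{d\xi}(U_{\Psi,\gamma}) < \infty$.

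The only substantive remaining step, and where I expect the main obstacle, is lifting finite-dimensionality from $\tfrac{d}{d\xi}(U_{\Psi,\gamma})$ back to $U_{\Psi,\gamma}$ itself in direction (i)$\Rightarrow$(ii). Here I would exploit that $U_{\Psi,\gamma}\subset \cald(d/d\xi)$ (again by Assumption~\ref{ass-Lipschitz-HJMM}(5)) and that the kernel of $d/d\xi$ on $\cald(d/d\xi)\subset H$ consists only of the constant functions, which form a one-dimensional subspace of $H$; the rank-nullity theorem applied to the restriction of $d/d\xi$ to $U_{\Psi,\gamma}$ then gives $\dim U_{\Psi,\gamma} \leq \dim \tfrac{d}{d\xi}(U_{\Psi,\gamma}) + 1 < \infty$, closing the argument.
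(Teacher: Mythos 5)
Your proof is correct and follows essentially the same route as the paper: reduce to Theorem~\ref{thm-affine-real} via Proposition~\ref{prop-sigma-nec}, compute $\alpha(h,y)-\alpha(h,y^*)$, and use Assumption~\ref{ass-Lipschitz-HJMM}(5) to identify the jump contribution with $\tfrac{d}{d\xi}\big(U_{\Psi,\gamma}\big)$. The only difference is that you make explicit, via rank--nullity and the one-dimensional kernel of $d/d\xi$ on $\cald(d/d\xi)$, the passage from finite-dimensionality of $\tfrac{d}{d\xi}\big(U_{\Psi,\gamma}\big)$ back to that of $U_{\Psi,\gamma}$, a step the paper leaves implicit in its final sentence.
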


\begin{proof}
Suppose that the HJMM equation (\ref{HJMM}) with $Y_0 = y^*$ has an affine realization. Then, by Proposition~\ref{prop-sigma-nec} the subspace $\sum_{k=1}^d \langle \sigma^k(h) \rangle$ is a finite dimensional subspace of $\cald(d/d\xi)$. Moreover, by (\ref{alpha-HJM}) and (\ref{HJM-clasical}), for all $(h,y) \in H \times \caly$ we have
\begin{align*}
\alpha(h,y) - \alpha(h,y^*) &= \sum_{k=1}^d \Theta^k(y) \sigma^k(h) - \sum_{k=1}^n \gamma^k(h) \int_{\bbr} x ( \Phi^k(y,x) - 1 ) e^{\Gamma^k(h)} F^k(dx)
\\ &= \sum_{k=1}^d \Theta^k(y) \sigma^k(h) + \sum_{k=1}^n \gamma^k(h) \int_{\bbr} x \Psi^k(y,x) e^{\Gamma^k(h)} F^k(dx).
\end{align*}
Therefore, and since $\sum_{k=1}^d \langle \sigma^k(h) \rangle$ is a finite dimensional subspace of $\cald(d/d\xi)$, the subspace $U_{y^*}$ defined in (\ref{def-U-star}) is a finite dimensional subspace of $\cald(d/d\xi)$ if and only if the subspace
\begin{align*}
\Big\langle \sum_{k=1}^n \gamma^k(h) \int_{\bbr} x \Psi^k(y,x) e^{\Gamma^k(h)} F^k(dx) : h \in H \text{ and } y \in \caly \Big\rangle
\end{align*}
is a finite dimensional subspace of $\cald(d/d\xi)$. By virtue of Assumption \ref{ass-Lipschitz-HJMM}, this is fulfilled if and only if $U_{\Psi,\gamma}$ is a finite dimensional subspace of $\cald((d/d\xi)^2)$. 
\end{proof}

Theorem~\ref{thm-HJMM-affine-real} demonstrates the difference between the existence of affine realizations for classical HJM models and for HJM models with real-world forward rate dynamics. The crucial point is the subspace $U_{\Psi,\gamma}$, which has to be finite dimensional. Note that this condition only concerns the volatility $\gamma$ and the market price of risk $\Psi$ of the discontinuous part, but neither the volatility $\sigma$ nor the market price of risk $\Theta$ of the continuous part.

In view of Theorem~\ref{thm-HJMM-affine-real}, it will be useful to provide a result which gives sufficient conditions for the existence of an affine realization for the HJMM equation (\ref{HJMM}) with $Y_0 = y^*$. For this purpose, we recall that a mapping $\sigma : H \to \mathcal{D}((d/d\xi)^{\infty})$ is called \emph{quasi-exponential}, if
\begin{align*}
\dim \langle (d/d\xi)^m \sigma(h) : h \in H \text{ and } m \in \bbn_0 \rangle < \infty.
\end{align*}

\begin{proposition}\label{prop-suff-risk-neutral}
Suppose that the following conditions are satisfied:
\begin{enumerate}
\item $\sigma^1,\ldots,\sigma^d$ are quasi-exponential.

\item $\gamma^1,\ldots,\gamma^n$ are constant and quasi-exponential.
\end{enumerate}
Then the HJMM equation (\ref{HJMM}) with $Y_0 = y^*$ has an affine realization.
\end{proposition}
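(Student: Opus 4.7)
The plan is to observe that setting $Y_0 = y^*$ reduces the HJMM equation to the classical (risk-neutral) HJMM equation, whose affine realization theory under quasi-exponential volatilities is well-established in the literature, and then to invoke those results.

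First I would exploit that $(\Theta(y^*),\Psi(y^*,\cdot)) = (0,0)$, so that the drift (\ref{alpha-HJM}) collapses to the classical HJM drift (\ref{HJM-clasical}). Because each $\gamma^k$ is constant in $h$, the integrated volatility $\Gamma^k$ is also $h$-independent, and hence the L\'{e}vy integrals in (\ref{HJM-clasical}) reduce to scalars $a_k \in \bbr$ (finite by Assumption~\ref{ass-Lipschitz-HJMM}). Consequently,
\begin{equation*}
\alpha(h,y^*) = -\sum_{k=1}^d \sigma^k(h)\Sigma^k(h) - \sum_{k=1}^n a_k \gamma^k,
\end{equation*}
so the SPDE with $Y_0 = y^*$ is a classical HJMM equation with quasi-exponential Wiener volatility $\sigma$ and constant quasi-exponential L\'{e}vy volatility $\gamma$.

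Next I would construct the realizing subspace. Let $V \subset \cald((d/d\xi)^\infty)$ be a finite-dimensional $(d/d\xi)$-invariant subspace, closed under pointwise products and under primitive integration, and containing $\bigcup_{k,h}\{\sigma^k(h)\} \cup \bigcup_\ell\{\gamma^\ell\}$; finite-dimensionality follows from the quasi-exponential hypothesis, since the quasi-exponential functions with a fixed finite set of exponents form a finite-dimensional algebra closed under $d/d\xi$ and integration. In particular $V$ contains the nonlinear drift elements $\sigma^k(h)\Sigma^k(h)$, since $\Sigma^k(h) = -\int_0^\bullet \sigma^k(h)(\xi)\,d\xi$ belongs to $V$. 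For any $h_0 \in \cald(d/d\xi)$, define $\psi \in C^1(\bbr_+;H)$ via variation of constants by $\psi(0) = h_0$ and $\psi'(t) - A\psi(t) = -\sum_{\ell=1}^n a_\ell \gamma^\ell$, so that the foliation $\calm_t := \psi(t) + V$ absorbs the $h$-independent drift contribution into its parametrization.

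Finally, I would verify the sufficient conditions for invariance: $\calm_t \subset \cald(A)$ and the volatility inclusions $\sigma^k(h),\gamma^\ell \in V$ hold by construction, and for every $h \in \calm_t$,
\begin{equation*}
Ah + \alpha(h,y^*) = \psi'(t) + A(h - \psi(t)) - \sum_{k=1}^d \sigma^k(h)\Sigma^k(h) \in \psi'(t) + V = T\calm_t,
\end{equation*}
using the $(d/d\xi)$-invariance and product-closure of $V$. Sufficiency of these tangent-space conditions for the fixed-$y_0$ notion of invariance (Definition~\ref{def-inv-foliation-pre}) is the argument behind the implication (ii)~$\Rightarrow$~(i) in Theorem~\ref{thm-foliation-1}, specialized to the single choice $y = y^*$. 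The main obstacle is precisely the finite-dimensionality of a $(d/d\xi)$-invariant subspace that simultaneously accommodates the quadratic nonlinearity $\sigma^k(h)\Sigma^k(h)$; handling it is where the quasi-exponentiality of $\sigma$ is essential, and is exactly the content of the classical Bj\"{o}rk--Svensson--Filipovi\'{c} theory imported from \cite{Bj_Sv, Filipovic, Tappe-Wiener, Tappe-Levy}. Once that is invoked, the constant L\'{e}vy contribution is a trivial additive correction absorbed by $\psi$.
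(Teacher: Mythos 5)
Your overall route is the paper's: at $y^*$ the drift collapses to the classical HJM drift (\ref{HJM-clasical}), the constant L\'{e}vy contribution is absorbed into the parametrization, and the quadratic Wiener term $\sigma^k(h)\Sigma^k(h)$ is handled by the established risk-neutral theory. The paper's proof is exactly this, only terser: it sets $V := \sum_i \langle (d/d\xi)^m\sigma^i(h)\rangle + \sum_j \langle (d/d\xi)^m\gamma^j(h)\rangle$, notes finite-dimensionality from quasi-exponentiality, and cites \cite[Prop.~6.2]{Tappe-Wiener} and \cite[Thm.~5.1]{Tappe-Levy} for the invariance verification.

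However, one step of your fleshed-out construction fails as stated: there is no finite-dimensional subspace of $H$ that is closed under pointwise products and contains a nonconstant quasi-exponential function. If $f = p(\xi)e^{\lambda\xi}$ with $\lambda \neq 0$ lies in a product-closed subspace $V$, then all powers $f^m$ lie in $V$; their leading exponents $m\lambda$ are pairwise distinct, so these powers are linearly independent and $V$ cannot be finite-dimensional. Equivalently, a set of exponents that contains a nonzero element and is closed under addition is infinite, so ``the quasi-exponential functions with a fixed finite set of exponents form a finite-dimensional algebra closed under products'' is false. What the argument actually needs is much weaker and is true: only the finitely many specific quadratic expressions $\sigma^k(h)\Sigma^k(h) = -\tfrac{1}{2}\,\frac{d}{d\xi}\big(\Sigma^k(h)\big)^2$ must, together with $V$, span a finite-dimensional space. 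Since all $\sigma^k(h)$ lie in a fixed finite-dimensional $(d/d\xi)$-invariant space, their exponents lie in a fixed finite set $\Lambda$ with bounded polynomial degrees, so the products have exponents in the finite sumset $\Lambda + (\Lambda\cup\{0\})$ and again span a finite-dimensional space; this is precisely what the cited risk-neutral results exploit. So your proof is salvageable by replacing ``closed under pointwise products'' with ``containing the span of the $\sigma^k(h)\Sigma^k(h)$ and their derivatives'', but as written the object $V$ you posit does not exist.
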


\begin{proof}
Since $\sigma^1,\ldots,\sigma^d$ and $\gamma^1,\ldots,\gamma^n$ are quasi-exponential, the subspace
\begin{align*}
V &:= \sum_{i=1}^d \langle (d/d\xi)^m \sigma^i(h) : h \in H \text{ and } m \in \bbn_0 \rangle 
\\ &\quad + \sum_{j=1}^n \langle (d/d\xi)^m \gamma^j(h) : h \in H \text{ and } m \in \bbn_0 \rangle
\end{align*}
is finite dimensional. Therefore, combining the arguments from \cite[Prop.~6.2]{Tappe-Wiener} and \cite[Thm.~5.1]{Tappe-Levy} shows that the HJMM equation (\ref{HJMM}) with $Y_0 = y^*$ has an affine realization generated by $V$.
\end{proof}

\section{Affine realizations for the HJMM equation driven by Wiener processes}\label{sec-Wiener}

In this section, we study the existence of affine realizations for the HJMM equation with real-world forward rate dynamics driven by Wiener processes. Then the corresponding HJMM equation (\ref{HJMM}) is of the particular form
\begin{align}\label{HJMM-Wiener}
\left\{
\begin{array}{rcl}
dr_t & = & \big( \frac{d}{d\xi} r_t + \alpha(r_t,Y_t) \big) dt + \sigma(r_{t})dW_t
\medskip
\\ r_0 & = & h_0 \medskip
\\ Y_0 & = & y_0
\end{array}
\right.
\end{align}
with a $\bbr^d$-valued standard Wiener process $W$.

\begin{theorem}\label{thm-HJMM-Wiener}
The following statements are equivalent:
\begin{enumerate}
\item[(i)] The HJMM equation (\ref{HJMM-Wiener}) has an affine realization.

\item[(ii)] The HJMM equation (\ref{HJMM-Wiener}) with $Y_0 = y^*$ has an affine realization.
\end{enumerate}
\end{theorem}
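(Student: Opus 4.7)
The plan is to deduce this Wiener-only equivalence as a direct corollary of the general result Theorem~\ref{thm-HJMM-affine-real}. The implication (i) $\Rightarrow$ (ii) is immediate from the definition of affine realization, since having an affine realization means having one for every starting point $y_0 \in \caly$, in particular for $y_0 = y^*$.

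For the converse (ii) $\Rightarrow$ (i), I would invoke Theorem~\ref{thm-HJMM-affine-real} and simply check that the extra condition $\dim U_{\Psi,\gamma} < \infty$ is automatically satisfied in the Wiener-only setting. Indeed, in this situation we have $n=0$, so the sum defining $U_{\Psi,\gamma}$ in (\ref{def-U-Psi-gamma}) is an empty sum, giving $U_{\Psi,\gamma} = \{0\}$, which is trivially finite-dimensional (and contained in $\cald((d/d\xi)^2)$).

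If a more self-contained argument is preferred (following the same template as the proof of Theorem~\ref{thm-affine-real} rather than citing the jump-case result), the key step is as follows. Given an affine realization for (\ref{HJMM-Wiener}) with $Y_0 = y^*$ generated by some finite dimensional subspace $V$, Proposition~\ref{prop-sigma-nec} yields $\sigma^k(H) \subset V \subset \cald(d/d\xi)$ for $k=1,\ldots,d$. Since in the purely Wiener driven case the drift (\ref{alpha-HJM}) reduces to
\begin{align*}
\alpha(h,y) - \alpha(h,y^*) = \sum_{k=1}^d \Theta^k(y) \sigma^k(h),
\end{align*}
the subspace $U_{y^*}$ from (\ref{def-U-star}) satisfies $U_{y^*} \subset \sum_{k=1}^d \langle \sigma^k(H) \rangle \subset V$, so it is a finite dimensional subspace of $\cald(d/d\xi)$. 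Theorem~\ref{thm-affine-real} then produces an affine realization for (\ref{HJMM-Wiener}) (generated by $V + U_{y^*} = V$).

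There is really no main obstacle here; the work has been done upstream in Theorem~\ref{thm-HJMM-affine-real} and Theorem~\ref{thm-affine-real}. The only substantive observation is the one highlighted in the introduction: in the Wiener-only case, $\Theta$ enters $\alpha$ linearly through the known volatilities $\sigma^k$, so no new directions are introduced when $y_0$ varies, and the tangential conditions (\ref{nu-pre}) needed for invariance are automatically inherited from the $y^* = 0$ case.
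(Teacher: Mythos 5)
Your proposal is correct and matches the paper's proof, which simply cites Theorem~\ref{thm-HJMM-affine-real} as a direct consequence; your observation that $U_{\Psi,\gamma}$ reduces to the trivial subspace when $n=0$ is exactly the reason that citation works. The additional self-contained argument via Proposition~\ref{prop-sigma-nec} and Theorem~\ref{thm-affine-real} is a sound unwinding of the same mechanism rather than a different route.
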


\begin{proof}
This is a direct consequence of Theorem~\ref{thm-HJMM-affine-real}.
\end{proof}

Theorem~\ref{thm-HJMM-Wiener} allows us to draw the following important conclusion: For Wiener process driven term structure models, all known results concerning the existence of affine realizations for HJM interest rate models under an assumed risk-neutral probability measure, see e.g. \cite{Bj_Sv,Bj_La,Filipovic,Tappe-Wiener}, transfer to interest rate models with real-world forward rate dynamics. In particular, we have the following result.

\begin{corollary}\label{cor-HJMM-Wiener}
Suppose that $\sigma^1,\ldots,\sigma^d$ are quasi-exponential. Then the HJMM equation (\ref{HJMM-Wiener}) has an affine realization.
\end{corollary}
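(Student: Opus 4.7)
The strategy is a direct combination of the two preceding results. First, I would invoke Theorem~\ref{thm-HJMM-Wiener} to reduce the statement to showing that the HJMM equation (\ref{HJMM-Wiener}) with $Y_0 = y^*$ admits an affine realization. This reduction works precisely because in the purely Wiener-driven case ($n = 0$) the subspace $U_{\Psi,\gamma}$ appearing in Theorem~\ref{thm-HJMM-affine-real} is trivial, so the extra jump-related finite-dimensionality obstruction disappears.

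Next, I would apply Proposition~\ref{prop-suff-risk-neutral} to the reduced problem. In the Wiener-only setting there are no L\'{e}vy components, so the hypotheses on $\gamma^1,\ldots,\gamma^n$ are vacuously satisfied, and the only remaining requirement is that $\sigma^1,\ldots,\sigma^d$ be quasi-exponential --- which is exactly the hypothesis of the corollary. This delivers an affine realization for (\ref{HJMM-Wiener}) with $Y_0 = y^*$, generated by the finite dimensional subspace
\begin{align*}
V := \sum_{i=1}^d \big\langle (d/d\xi)^m \sigma^i(h) : h \in H \text{ and } m \in \bbn_0 \big\rangle,
\end{align*}
which, by the quasi-exponential property, is contained in $\cald((d/d\xi)^\infty)$ and in particular in $\cald(d/d\xi)$.

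There is essentially no obstacle to overcome here: the conceptual work has already been carried out in Theorem~\ref{thm-HJMM-Wiener} (which transfers the affine-realization question from the real-world to the risk-neutral setting in the Wiener case) and in Proposition~\ref{prop-suff-risk-neutral} (the classical sufficient criterion). The only subtlety worth spelling out is that the quasi-exponential condition on the $\sigma^i$ guarantees simultaneously the finite dimensionality of $V$ and its invariance under $d/d\xi$, both of which are needed to verify the tangential conditions (\ref{domain-pre})--(\ref{sigma-pre}) of Theorem~\ref{thm-foliation-1}. The assertion of the corollary then follows.
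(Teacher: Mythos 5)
Your proposal is correct and follows exactly the paper's own argument: combine Theorem~\ref{thm-HJMM-Wiener} (which in the purely Wiener-driven case removes the $U_{\Psi,\gamma}$ obstruction, that subspace being trivial when $n=0$) with Proposition~\ref{prop-suff-risk-neutral} (whose hypotheses on $\gamma^1,\ldots,\gamma^n$ are vacuous here). No gaps.
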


\begin{proof}
This follows from Theorem~\ref{thm-HJMM-Wiener} and Proposition~\ref{prop-suff-risk-neutral}.
\end{proof}

\begin{remark}\label{remark-geometric}
Let us present two geometric interpretations of Theorem~\ref{thm-HJMM-Wiener}. For this purpose, let $(\calm_t)_{t \geq 0}$ be a foliation generated by some finite dimensional subspace $V$, and suppose that this foliation is invariant for the HJMM equation (\ref{HJMM-Wiener}) with $y = y^*$.
\begin{enumerate}
\item By the tangential conditions (\ref{nu-pre-pre}), (\ref{sigma-pre-pre}) of Theorem~\ref{thm-foliation-1-pre}, for each $t \in \bbr_+$ we have
\begin{align*}
\frac{d}{d \xi} h - \sum_{k=1}^d \sigma^k(h) \Sigma^k(h) &\in T \calm_t, \quad h \in \calm_t,
\\ \sigma^k(h) &\in V, \quad h \in \calm_t \text{ and } k=1,\ldots,d.
\end{align*}
These two conditions imply that for every $y \in \caly$ we have
\begin{align*}
\frac{d}{d \xi} h - \sum_{k=1}^d \sigma^k(h) \Sigma^k(h) + \sum_{k=1}^d \Theta^k(y) \sigma^k(h) &\in T \calm_t, \quad h \in \calm_t,
\end{align*}
that is, the required tangential condition (\ref{nu-pre}) of Theorem~\ref{thm-foliation-1} regarding the drift term is fulfilled.

\item Let $y \in \caly$ be such that $\theta = \Theta(Y^y)$ satisfies the conditions of Proposition~\ref{prop-NA}. Then, by Propositions~\ref{prop-NA} and \ref{prop-NA-dynamics}, the dynamics of (\ref{HJMM-Wiener}) with drift $\alpha(\cdot,y)$ are obtained from the dynamics of (\ref{HJMM-Wiener}) with drift $\alpha(\cdot,y^*)$ after an equivalent change of measure. Consequently, the foliation $(\calm_t)_{t \geq 0}$ is also invariant for the HJMM equation (\ref{HJMM-Wiener}) with drift term $\alpha(\cdot,y)$.
\end{enumerate}
\end{remark}

\begin{remark}\label{remark-geometric-2}
We have seen in Theorem~\ref{thm-HJMM-affine-real} that the situation becomes more involved if, instead of the Wiener process driven HJMM equation (\ref{HJMM-Wiener}), we consider the general HJMM equation (\ref{HJMM}) with jumps. As soon as we have jumps, the two geometric interpretations from Remark~\ref{remark-geometric} fail, as we shall briefly explain:
\begin{enumerate}
\item The condition
\begin{align*}
\frac{d}{d\xi} h + \alpha(h,y^*) \in T \calm_t, \quad h \in \calm_t
\end{align*}
does generally not imply
\begin{align*}
\frac{d}{d\xi} h + \alpha(h,y) \in T \calm_t, \quad h \in \calm_t \text{ and } y \in \caly,
\end{align*}
because the drift term (\ref{alpha-HJM}) becomes too involved.

\item As we have seen in Remark~\ref{remark-measure-change}, for the general HJMM equation (\ref{HJMM}) with jumps, a change of the market price of risk can no longer (not even in a broad sense) be interpreted as an equivalent measure change.
\end{enumerate}
\end{remark}

\begin{example}
We consider the HJMM equation (\ref{HJMM-Wiener}) with a one-dimensional Wiener process $W$ and a quasi-exponential volatility $\sigma : H \to H$, and we choose the state space $\caly = \bbr_+$ with $y^* = 0$. For $y_0 \in (0,\infty)$ we denote by $Y^{y_0}$ the solution to the SDE (\ref{SDE-Z-Bessel}) provided in Example~\ref{example-squared-Bessel}. Furthermore, we define the mapping $\Theta : \caly \to \bbr$ as $\Theta(y) := 2 \sqrt{y}$. Then, according to Proposition~\ref{prop-suff-risk-neutral} and Corollary~\ref{cor-HJMM-Wiener}, the HJMM equation (\ref{HJMM-Wiener}) has an affine realization. Moreover, as seen in Example~\ref{example-squared-Bessel}, for no choice of the initial value $y_0 \in (0,\infty)$ the interest rate model admits an equivalent local martingale measure.
\end{example}

\section{Necessary conditions for the existence of an affine realization for the HJMM equation driven by a L\'{e}vy process and drift term described by its cumulant generating function}\label{sec-Levy-cumulant}

In this section, we derive necessary conditions for the existence of an affine realization for the HJMM equation, for which we assume a particular structure of the market price of risk regarding the jump part. For simplicity, we assume that the L\'{e}vy process $X$ in (\ref{HJMM}) is one-dimensional. We denote its L\'{e}vy measure by $F$, and suppose that the mapping $\Phi : \caly \times \bbr \to (0,\infty)$ is of the form
\begin{align*}
\Phi(y,x) = \exp(x \cdot \vartheta(y))
\end{align*}
with a continuous mapping $\vartheta : \caly \to \bbr$. Furthermore, we suppose that the topological space $\caly$ is connected, and that there exists $\epsilon > 0$ such that
\begin{align*}
\int_{\{|x| > 1\}} e^{zx} F(dx) < \infty \quad \text{for all $z \in (-\epsilon,\epsilon)$.}
\end{align*}
We define the cumulant generating function $\kappa : (-\epsilon,\epsilon) \to \bbr$ as follows. If $X$ is of type $X = x * \mu^X$, then we set
\begin{align*}
\kappa(z) := \int_{\bbr} (e^{zx} - 1) F(dx), \quad z \in (-\epsilon,\epsilon),
\end{align*}
and if $X$ is of type $X = x * (\mu^X - \nu)$, then we set
\begin{align*}
\kappa(z) := \int_{\bbr} (e^{zx} - 1 - zx) F(dx) \quad z \in (-\epsilon,\epsilon).
\end{align*}
The cumulant generating function $\kappa$ is real analytic on $(-\epsilon,\epsilon)$. We suppose that
\begin{align*}
\Gamma(h)(\xi) &\in (-\epsilon,\epsilon) \quad \text{for all $h \in H$ and $\xi \in \bbr_+$,}
\\ \vartheta(y) + \Gamma(h)(\xi) &\in (-\epsilon,\epsilon) \quad \text{for all $(y,h) \in \caly \times H$ and $\xi \in \bbr_+$.}
\end{align*}

\begin{proposition}\label{prop-cumulant}
We suppose that the HJMM equation (\ref{HJMM}) has an affine realization. Furthermore, we suppose that $\gamma \not\equiv 0$ and that
\begin{align}\label{kappa-infty}
\langle \kappa^{(m)} : m \in \bbn_0 \rangle = \infty.
\end{align}
Then $\vartheta$ is constant.
\end{proposition}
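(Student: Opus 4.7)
The plan is to apply Theorem~\ref{thm-HJMM-affine-real}, which reduces the existence of an affine realization to the finiteness $\dim U_{\Psi,\gamma}<\infty$, and then to exploit the parametric form $\Phi(y,x)=e^{x\vartheta(y)}$ together with the real-analyticity of $\kappa$ to produce a non-trivial linear relation among the derivatives $\kappa^{(m)}$, contradicting (\ref{kappa-infty}).

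First I would identify the generating elements of $U_{\Psi,\gamma}$. Writing $\Psi(y,x)=1-e^{x\vartheta(y)}$ and using the definition of $\kappa$, a direct computation yields
\[\int_{\bbr}\Psi(y,x)\,e^{x\Gamma(h)(\xi)}\,F(dx)=\kappa(\Gamma(h)(\xi))-\kappa(\Gamma(h)(\xi)+\vartheta(y))\]
in the case $X=x*\mu^X$; in the case $X=x*(\mu^X-\nu)$ the same calculation produces an additional summand linear in $\vartheta(y)$ (coming from the compensation term $-x\vartheta(y)$), which will be annihilated by passing to second-order $v$-derivatives below. Arguing by contradiction, suppose $\vartheta$ is not constant. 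Then $J:=\vartheta(\caly)$ is a non-degenerate interval, since $\caly$ is connected and $\vartheta$ is continuous. Because $\gamma\not\equiv 0$, fix $h_0\in H$ with $\gamma(h_0)\neq 0$; the primitive $u:=\Gamma(h_0)$ is then continuous and non-constant, so $u(\bbr_+)$ contains an open interval $I\subset(-\epsilon,\epsilon)$. For every $v\in J$ the element $g_v(\xi):=\kappa(u(\xi))-\kappa(u(\xi)+v)$ lies in the closed (because finite-dimensional) subspace $U_{\Psi,\gamma}\subset H$.

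Next I would differentiate the curve $v\mapsto g_v$ inside $H$. Using real-analyticity of $\kappa$ on $(-\epsilon,\epsilon)$, Cauchy estimates for $\kappa^{(m)}$ on closed subintervals, and $u,u'\in H$ (with $u'=-\gamma(h_0)$), one verifies that $v\mapsto g_v$ is a smooth $H$-valued curve on a neighborhood of any $v_0$ in the interior of $J$. Because $U_{\Psi,\gamma}$ is closed and finite-dimensional, every $v$-derivative at $v_0$ remains inside $U_{\Psi,\gamma}$. Since $\partial_v^m g_v|_{v=v_0}(\xi)=-\kappa^{(m)}(u(\xi)+v_0)$ for $m\geq 2$ (and already for $m\geq 1$ in the compound-Poisson case), the family $\{\xi\mapsto\kappa^{(m)}(u(\xi)+v_0):m\geq 2\}$ is contained in the finite-dimensional space $U_{\Psi,\gamma}$, and hence satisfies a non-trivial linear relation $\sum_{m=2}^{N}c_m\kappa^{(m)}(u(\xi)+v_0)=0$ for all $\xi\in\bbr_+$. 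Setting $P:=\sum_{m=2}^{N}c_m\kappa^{(m)}$, the real-analytic function $P$ on $(-\epsilon,\epsilon)$ vanishes on the open interval $I+v_0$, so $P\equiv 0$; combining this with the two possibly missing functions $\kappa^{(0)},\kappa^{(1)}$ yields $\dim\langle\kappa^{(m)}:m\in\bbn_0\rangle\leq N<\infty$, contradicting (\ref{kappa-infty}).

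The main obstacle is the $H$-valued smoothness of $v\mapsto g_v$: the Filipovi\'c norm $\|\cdot\|_H$ involves $\int_{\bbr_+}|(\cdot)'(\xi)|^2w(\xi)\,d\xi$, and the chain rule produces terms $\kappa^{(m+1)}(u(\xi)+v_0)\,u'(\xi)$ whose $v$-difference quotients must be controlled in $L^2(w\,d\xi)$ uniformly near $v_0$. A cleaner alternative that avoids explicit $H$-differentiation is to choose $\dim U_{\Psi,\gamma}+1$ distinct values $v_1,\ldots,v_{N+1}\in J$, extract a linear relation $\sum c_i g_{v_i}=0$ in $H$, pass by real analyticity to the functional equation $\sum c_i\kappa(\cdot+v_i)=\bigl(\sum c_i\bigr)\kappa(\cdot)$ on $(-\epsilon,\epsilon)$, and then derive the finite-dimensionality of $\langle\kappa^{(m)}\rangle$ from the translation invariance of this equation.
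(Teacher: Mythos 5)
Your reduction via Theorem~\ref{thm-HJMM-affine-real} to the finite dimensionality of $U_{\Psi,\gamma}$, and your identification of the generators as $\kappa(\Gamma(h))-\kappa(\vartheta(y)+\Gamma(h))$, agree with the paper. The gap lies in the final step, in both of your routes. In the ``cleaner alternative'', the implication you invoke is false: a single relation $\sum_i c_i\kappa(\cdot+v_i)=\bigl(\sum_i c_i\bigr)\kappa(\cdot)$ on an interval does \emph{not} force $\dim\langle\kappa^{(m)}:m\in\bbn_0\rangle<\infty$. For instance, a real-analytic $v_1$-periodic function such as $f(w)=\sum_{k\geq 1}e^{-k}\cos(2\pi kw/v_1)$ satisfies $f(\cdot+v_1)=f(\cdot)$, yet its derivatives span an infinite-dimensional space (a combination $\sum_m d_m f^{(2m)}$ has $k$-th Fourier coefficient $e^{-k}P(k^2)$ for a polynomial $P$, which vanishes for all $k$ only if all $d_m=0$). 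What you actually have is stronger --- a nontrivial relation for \emph{every} choice of $N+1$ points, i.e.\ the whole family $\{g_v:v\in[a,b]\}$ spans at most $N$ dimensions --- and the passage from that to a relation among the $\kappa^{(m)}$ is precisely what is missing. Your primary route does exploit the full family (by differentiating in $v$), but there you yourself flag the unresolved obstacle: the difference quotients must converge in the norm of $H$, which involves $\int_{\bbr_+}|(\cdot)'|^2w\,d\xi$ and produces terms $[\kappa'(u+v+s)-\kappa'(u+v)-s\kappa''(u+v)]\gamma(h_0)$ whose control in $L^2(w\,d\xi)$ is delicate ($\gamma(h_0)\in H$ does not give $\gamma(h_0)\in L^2(w\,d\xi)$, and $\sup|\kappa'''|$ may blow up if $\Gamma(h_0)(\bbr_+)$ accumulates at $\pm\epsilon$). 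A repair exists --- $U_{\Psi,\gamma}$ is a finite-dimensional space of continuous functions, hence determined by finitely many point evaluations, so \emph{pointwise} convergence of the divided differences already keeps the limits $\xi\mapsto\kappa^{(m)}(u(\xi)+v_0)$ inside $U_{\Psi,\gamma}$ --- but you do not make this observation, and as written the step is not justified.

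The paper sidesteps all of this by proving the contrapositive directly (Proposition~\ref{prop-lin-ind-of-span-inf}): assuming (\ref{kappa-infty}), for each $m$ it uses Lemma~\ref{lemma-lin-ind} to choose $\theta_1,\dots,\theta_m\in[a,b]$ with $\det\bigl(\kappa^{(k)}(\theta_i)\bigr)\neq 0$, and then shows that $\kappa(\theta_1+\Gamma(h)),\dots,\kappa(\theta_m+\Gamma(h))$ are linearly independent by expanding $\kappa(\theta_i+z)$ in powers of $z$ and applying the identity theorem for power series along a sequence $z_n\in\Gamma(h)(\bbr_+)\setminus\{0\}$ with $z_n\to 0$ (here the fact that $\Gamma(h)(0)=0$ is used). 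This requires no differentiation in the parameter and no $H$-valued calculus. You should either adopt that argument or complete your primary route with the pointwise-convergence observation above.
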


\begin{proof}
Suppose, on the contrary, that $\vartheta$ is not constant. Since $\caly$ is connected and $\vartheta$ is continuous, there exist $a,b \in \bbr$ with $a < b$ such that $[a,b] \subset \vartheta(\caly)$.
By the continuity of $\gamma$ there exists $h \in H$ such that $\gamma(h) \neq 0$, which implies $\Gamma(h) \neq 0$. For each $y \in \caly$ we have
\begin{align*}
&\int_{\bbr} \Psi(y,x) e^{x \Gamma(h)} F(dx) = \int_{\bbr} ( 1 - \Phi(y,x)) e^{x \Gamma(h)} F(dx)
\\ &= \int_{\bbr} \big( 1 - e^{x \vartheta(y)} \big) e^{x \Gamma(h)} F(dx) = \kappa(\Gamma(h)) - \kappa(\vartheta(y) + \Gamma(h)).
\end{align*}
Since $\Gamma(h) \neq 0$, by Proposition~\ref{prop-lin-ind-of-span-inf} and (\ref{kappa-infty}) we deduce that $U_{\Psi,\gamma}$ is infinite dimensional, which contradicts Theorem~\ref{thm-HJMM-affine-real}.
\end{proof}

Condition (\ref{kappa-infty}) means that for no $m \in \bbn$ the function $\kappa$ satisfies a linear ordinary differential equation of $m$th-order $\kappa^{(m)} = \sum_{k=0}^{m-1} c_k \kappa^{(k)}$. This condition is typically satisfied for pure jump L\'{e}vy processes with infinite activity, for example bilateral Gamma processes (see \cite{Kuechler-Tappe}), which cover the popular class of Variance Gamma processes, or for tempered stable processes (see \cite{Kuechler-Tappe-TS} and references therein), which cover the well-studied class of CGMY processes. Condition (\ref{kappa-infty}) is typically also satisfied for compound Poisson processes with infinite jump size distribution, but it is not satisfied for compound Poisson processes with finite jump size distribution, because then the cumulant generating function is of the form
\begin{align*}
\kappa(z) = c \bigg( \sum_{x \in \calx} \pi(x) e^{zx} - 1 \bigg)
\end{align*}
for some constant $c > 0$, a finite set $\calx$ and a stochastic vector $\pi : \calx \to (0,1]$. This is in accordance with our upcoming results; see, for example, Proposition~\ref{prop-Fc-Fd} below.

\section{Sufficient conditions for the existence of an affine realization for the HJMM equation}\label{sec-suff-prod}

In this section, we provide sufficient conditions for the existence of an affine realization of the HJMM equation (\ref{HJMM}). As we have seen in Theorem~\ref{thm-HJMM-affine-real}, we need that the subspace $U_{\Psi,\gamma}$ defined in (\ref{def-U-Psi-gamma}) is finite dimensional. We will establish two other conditions, which are easier to check, and which imply the finite dimensionality of $U_{\Psi,\gamma}$. 
For this purpose, note that, due to Assumption \ref{ass-Lipschitz-HJMM}, for each $k = 1,\ldots,n$ we can regard $\Psi^k$ and $x \mapsto e^{x \Gamma^k}$ as mappings
\begin{align}\label{Psi-regarded}
\Psi^k : \caly \to L^p(F^k) \quad \text{and} \quad e^{\Gamma^k} : H \to L^q(F^k;H).
\end{align}
We introduce the subspaces $U_{\Psi^k} \subset L^p(F^k)$ and $U_{\gamma^k} \subset L^q(F^k;H)$ for $k=1,\ldots,n$ by (\ref{def-U-Psi}) and (\ref{def-U-gamma}).

\begin{proposition}\label{prop-fin-dim-suff}
Suppose that the subspaces $U_{\Psi^1},\ldots,U_{\Psi^n}$ and $U_{\gamma^1},\ldots,U_{\gamma^n}$ are finite dimensional. Then the following statements are equivalent:
\begin{enumerate}
\item[(i)] The HJMM equation (\ref{HJMM}) has an affine realization.

\item[(ii)] The HJMM equation (\ref{HJMM}) for $Y_0 = y^*$ has an affine realization.
\end{enumerate}
\end{proposition}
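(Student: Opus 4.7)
The implication (i) $\Rightarrow$ (ii) is immediate from the definition, since any foliation generated by a finite dimensional subspace $V$ that is invariant for (\ref{HJMM}) is, in particular, invariant for (\ref{HJMM}) with $Y_0 = y^*$. For the substantive direction (ii) $\Rightarrow$ (i), Theorem~\ref{thm-HJMM-affine-real} tells us that it suffices to verify the finite dimensionality of the subspace $U_{\Psi,\gamma}$ defined in (\ref{def-U-Psi-gamma}).

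The plan is to exploit the hypothesized finite dimensionality of $U_{\Psi^k}$ and $U_{\gamma^k}$ by expanding both factors in the defining integrals in finite bases. Fix $k \in \{1,\ldots,n\}$. Under Assumption~\ref{ass-Lipschitz-HJMM}~(iv), we may regard $\Psi^k$ and $e^{\cdot\,\Gamma^k}$ as the mappings in (\ref{Psi-regarded}). Choose bases $\phi_1^k,\ldots,\phi_{p_k}^k$ of $U_{\Psi^k} \subset L^p(F^k)$ and $g_1^k,\ldots,g_{q_k}^k$ of $U_{\gamma^k} \subset L^q(F^k;H)$. Then for each $y \in \caly$ and each $h \in H$ there exist scalars $c_{i}^k(y)$ and $d_{j}^k(h)$ such that
\begin{align*}
\Psi^k(y,\cdot) = \sum_{i=1}^{p_k} c_{i}^k(y)\,\phi_i^k \quad \text{in } L^p(F^k),
\qquad
e^{\cdot\,\Gamma^k(h)} = \sum_{j=1}^{q_k} d_{j}^k(h)\,g_j^k \quad \text{in } L^q(F^k;H).
\end{align*}

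The key step is then to substitute these expansions into the integrand defining $U_{\Psi,\gamma}$ and interchange the (finite) sums with the integral. Since $1/p+1/q = 1$, Hölder's inequality in Bochner spaces ensures that the product $\phi_i^k \cdot g_j^k$ lies in $L^1(F^k;H)$, so that the Bochner integrals
\begin{align*}
v_{i,j}^k := \int_{\bbr} \phi_i^k(x)\,g_j^k(x)\,F^k(dx) \in H
\end{align*}
are well-defined fixed elements of $H$. By bilinearity of the pairing we obtain
\begin{align*}
\int_{\bbr} \Psi^k(y,x)\,e^{x\,\Gamma^k(h)}\,F^k(dx) = \sum_{i=1}^{p_k}\sum_{j=1}^{q_k} c_{i}^k(y)\,d_{j}^k(h)\,v_{i,j}^k,
\end{align*}
which lies in the finite dimensional subspace $\langle v_{i,j}^k : 1 \leq i \leq p_k,\ 1 \leq j \leq q_k\rangle$ of $H$.

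Summing over $k=1,\ldots,n$ we conclude
\begin{align*}
U_{\Psi,\gamma} \subset \Big\langle v_{i,j}^k : 1 \leq k \leq n,\ 1 \leq i \leq p_k,\ 1 \leq j \leq q_k \Big\rangle,
\end{align*}
which is finite dimensional with $\dim U_{\Psi,\gamma} \leq \sum_{k=1}^n p_k q_k$. Theorem~\ref{thm-HJMM-affine-real} now yields (i). I do not expect serious obstacles: the main care point is to work consistently in the Bochner spaces $L^p(F^k)$ and $L^q(F^k;H)$ and to invoke Hölder's inequality to justify that $v_{i,j}^k \in H$, so that the finite linear combinations actually produce a finite dimensional subspace of $H$ rather than merely of some ambient space of $H$-valued functions.
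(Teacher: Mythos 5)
Your proof is correct and follows essentially the same route as the paper's: the paper bounds $\dim U_{\Psi,\gamma}$ by introducing the product subspaces $U_k = \langle \Psi^k(y,\cdot)\, e^{\cdot\,\Gamma^k(h)}\rangle \subset L^1(F^k;H)$, noting $\dim U_k \le \dim U_{\Psi^k}\cdot\dim U_{\gamma^k}$, and applying the linear integral operators $T^k$, which is exactly your basis-expansion argument written more compactly. Your explicit invocation of H\"older's inequality to justify that the products lie in $L^1(F^k;H)$ is a welcome clarification of a point the paper leaves implicit via Assumption~\ref{ass-Lipschitz-HJMM}.
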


\begin{proof}
We define the subspaces $U_k \subset L^1(F^k;H)$ as
\begin{align*}
U_k := \langle x \mapsto \Psi^k(x,y) e^{x \Gamma^k(h)} : h \in H \text{ and } y \in \caly \rangle, \quad k=1,\ldots,n.
\end{align*}
By assumption, we have
\begin{align*}
\dim U_k \leq \dim U_{\Psi^k} \cdot \dim U_{\gamma^k} < \infty, \quad k=1,\ldots,n.
\end{align*}
Denoting by $T^k : L^1(F^k;H) \to H$ the integral operator
\begin{align*}
T^k \psi := \int_{\bbr} \psi(x) F^k(dx), \quad k=1,\ldots,n,
\end{align*}
by the definition (\ref{def-U-Psi-gamma}) of the subspace $U_{\Psi,\gamma}$ we obtain
\begin{align*}
\dim U_{\Psi,\gamma} \leq \sum_{k=1}^n \dim T^k(U_k) \leq \sum_{k=1}^n \dim U_k < \infty,
\end{align*}
and hence, the claimed equivalence follows from Theorem~\ref{thm-HJMM-affine-real}.
\end{proof}

\begin{corollary}\label{cor-fin-dim-suff}
Suppose that $X^1,\ldots,X^n$ are compound Poisson processes with finite jump size distributions, and that $\gamma^1,\ldots,\gamma^n$ are constant.  Then the following statements are equivalent:
\begin{enumerate}
\item[(i)] The HJMM equation (\ref{HJMM}) has an affine realization.

\item[(ii)] The HJMM equation (\ref{HJMM}) for $Y_0 = y^*$ has an affine realization.
\end{enumerate}
\end{corollary}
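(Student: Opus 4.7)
My plan is to verify the two finite-dimensionality hypotheses of Proposition~\ref{prop-fin-dim-suff}, after which the claimed equivalence follows immediately.

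First I would handle $U_{\gamma^k}$. Since $\gamma^k$ is assumed constant, say $\gamma^k \equiv g^k \in H$ independent of $h$, the integrated volatility $\Gamma^k(h) = -\int_0^{\bullet} g^k(\eta)\,d\eta$ is a fixed element of $H$ that does not depend on $h$. Consequently the map $x \mapsto e^{x \Gamma^k(h)}$ in $L^q(F^k;H)$ is a single fixed function, and the span $U_{\gamma^k} = \langle x \mapsto e^{x\Gamma^k(h)} : h \in H \rangle$ is at most one-dimensional.

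Next I would handle $U_{\Psi^k}$. By assumption $X^k$ is compound Poisson with a finite jump size distribution, so the L\'{e}vy measure $F^k$ is a finite positive measure concentrated on a finite set $\calx^k \subset \bbr$. Hence $L^p(F^k)$ is itself finite dimensional, canonically isomorphic to $\bbr^{|\calx^k|}$, and therefore the subspace $U_{\Psi^k} \subset L^p(F^k)$ is automatically finite dimensional, regardless of the specific shape of the mapping $\Psi^k$.

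With both families of subspaces finite dimensional, I would invoke Proposition~\ref{prop-fin-dim-suff} directly, which yields the equivalence of (i) and (ii). There is no essential obstacle here; the only point to be slightly careful about is checking that the regarded mappings in (\ref{Psi-regarded}) are well defined under Assumption~\ref{ass-Lipschitz-HJMM}, but this is already part of the setup.
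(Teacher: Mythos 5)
Your proposal is correct and follows the same route as the paper, which simply cites Proposition~\ref{prop-fin-dim-suff}; you merely make explicit the routine verifications that $U_{\gamma^k}$ is at most one-dimensional when $\gamma^k$ is constant and that $U_{\Psi^k}$ is finite dimensional because $L^p(F^k)$ itself is finite dimensional when $F^k$ is supported on a finite set. Both observations are accurate, so the argument is complete.
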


\begin{proof}
This is a direct consequence of Proposition~\ref{prop-fin-dim-suff}.
\end{proof}

\begin{corollary}\label{cor-suff-FDR}
Suppose that the following conditions are satisfied:
\begin{enumerate}
\item $X^1,\ldots,X^n$ are compound Poisson processes with finite jump size distributions.

\item $\sigma^1,\ldots,\sigma^d$ are quasi-exponential.

\item $\gamma^1,\ldots,\gamma^n$ are constant and quasi-exponential.
\end{enumerate}
Then the HJMM equation (\ref{HJMM}) has an affine realization.
\end{corollary}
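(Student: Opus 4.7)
The plan is to combine the two building blocks that have already been assembled in the preceding parts of the section. The hypotheses of Corollary~\ref{cor-suff-FDR} are precisely set up to be the union of the hypotheses of Proposition~\ref{prop-suff-risk-neutral} and of Corollary~\ref{cor-fin-dim-suff}. So my strategy is: first invoke the risk-neutral-side sufficient condition to produce an affine realization for the case $Y_0 = y^*$, then use the jump-side finite-dimensionality result to upgrade this to an affine realization for arbitrary starting point $y_0 \in \caly$.

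Concretely, I would proceed as follows. Since, by hypothesis, $\sigma^1,\ldots,\sigma^d$ are quasi-exponential and $\gamma^1,\ldots,\gamma^n$ are constant and quasi-exponential, Proposition~\ref{prop-suff-risk-neutral} applies and yields that the HJMM equation~(\ref{HJMM}) with $Y_0 = y^*$ admits an affine realization; the generating subspace is the finite-dimensional $V$ built from the iterated derivatives of the $\sigma^i$ and $\gamma^j$. Next, since $X^1,\ldots,X^n$ are compound Poisson processes with finite jump size distributions and $\gamma^1,\ldots,\gamma^n$ are constant, the hypotheses of Corollary~\ref{cor-fin-dim-suff} are met; hence the existence of an affine realization for the full HJMM equation~(\ref{HJMM}) is equivalent to the existence of one for $Y_0 = y^*$. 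Combining these two facts gives the conclusion.

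There is essentially no obstacle here: the corollary is structured as a direct concatenation of the two preceding results, and the verification amounts to checking that the hypotheses match. The only point worth stating carefully in the write-up is that the quasi-exponential assumption on $\gamma^k$ is not used by Corollary~\ref{cor-fin-dim-suff} (which only needs the $\gamma^k$ to be constant); it is needed to build the finite-dimensional subspace $V$ in the appeal to Proposition~\ref{prop-suff-risk-neutral}. So the proof is a two-line citation argument, and I would present it as such rather than unfolding any computations.
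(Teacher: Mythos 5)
Your argument is exactly the paper's proof: invoke Proposition~\ref{prop-suff-risk-neutral} to get the affine realization with $Y_0 = y^*$, then use Corollary~\ref{cor-fin-dim-suff} to pass to arbitrary $y_0 \in \caly$. The hypotheses match as you checked, and your remark about which assumptions feed which result is a correct (if optional) clarification.
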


\begin{proof}
This follows by combining Proposition~\ref{prop-suff-risk-neutral} and Corollary~\ref{cor-fin-dim-suff}.
\end{proof}

\begin{example}
We consider the HJMM equation (\ref{HJMM}) with a one-dimensional Wiener process $W$ and a one-dimensional standard Poisson process $X$. Let $\sigma : H \to H$ be quasi-exponential, and let $\gamma : H \to H$ be constant and quasi-exponential. We choose the state space $\caly = \bbr_+$ with $y^* = 0$. For $y_0 \in (0,\infty)$ we denote by $Y^{y_0}$ the solution to the SDE (\ref{SDE-Z-Bessel}) provided in Example~\ref{example-squared-Bessel}. Furthermore, we define the mapping $\Theta : \caly \to \bbr$ as $\Theta(y) := 2 \sqrt{y}$, and we define the mapping $\Psi : \caly \times \bbr \to(-\infty,1)$ as $\Psi(y,x) := -y$. Then, according to Corollary~\ref{cor-suff-FDR}, the HJMM equation (\ref{HJMM}) has an affine realization. Moreover, as seen in Example~\ref{example-squared-Bessel-2}, for no choice of the initial value $y_0 \in (0,\infty)$ the interest rate model admits an equivalent local martingale measure.
\end{example}

Now, it arises the question whether the finite dimensionality of the  subspaces $U_{\Psi^1},\ldots,U_{\Psi^n}$ and $U_{\gamma^1},\ldots,U_{\gamma^n}$ is also necessary for the existence of an affine realization. We will deal with this question in the upcoming two sections.

\section{Necessary conditions on the market price of risk for the existence of an affine realization}\label{sec-nec-Psi}

In this section, we investigate the necessity of the first assumption from Proposition~\ref{prop-fin-dim-suff}. More precisely, we investigate whether the subspaces generated by the market price of risk must necessarily be finite dimensional for the existence of an affine realization.

For simplicity, we assume that the L\'{e}vy process $X$ in (\ref{HJMM}) is one-dimensional, and denote its L\'{e}vy measure by $F$. In addition to the assumptions from Section \ref{sec-affine-real-general}, we suppose that the market price of risk can even be regarded as a mapping
\begin{align*}
\Psi : \caly \to L^1(F) \cap L^p(F),
\end{align*}
that is, it should not only map into $L^p(F)$, as stated in (\ref{Psi-regarded}), but also into $L^1(F)$,

\begin{theorem}\label{thm-dim-U-Psi}
Suppose that the HJMM equation (\ref{HJMM}) has an affine realization, and that one of the following conditions is satisfied:
\begin{enumerate}
\item[($\call$)] ${\rm supp}(F) \subset \bbr_+$ and $\bbr_- \subset  \Gamma(h)(\bbr_+)$ for some $h \in H$.

\item[($\call^{\epsilon}$)] There exists $\epsilon > 0$ such that $(-\epsilon,\epsilon) \subset \Gamma(h)(\bbr_+)$ for some $h \in H$. 
\end{enumerate}
Then the subspace $U_{\Psi}$ is finite dimensional.
\end{theorem}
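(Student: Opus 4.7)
The plan is to leverage Theorem~\ref{thm-HJMM-affine-real}, which under the standing hypothesis yields $\dim U_{\Psi,\gamma} < \infty$, and to transfer finite dimensionality from $U_{\Psi,\gamma}$ down to $U_\Psi$ by means of a Laplace-transform injectivity argument. Concretely, let $h_0 \in H$ be the element furnished by hypothesis ($\call$) or ($\call^\epsilon$) and, for each $y \in \caly$, define the forward curve
\[
\varphi_y(\xi) \;:=\; \int_{\bbr} \Psi(y,x)\, e^{x \Gamma(h_0)(\xi)}\, F(dx), \qquad \xi \in \bbr_+.
\]
Each $\varphi_y$ lies in $U_{\Psi,\gamma}$, so the subspace $V := \langle \varphi_y : y \in \caly \rangle$ is finite dimensional. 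The assignment $\Psi(y,\cdot)\mapsto \varphi_y$ is well defined on the generators of $U_\Psi$ and extends linearly to a map $U_\Psi \to V$; if I can show this linear map is injective, then $\dim U_\Psi \le \dim V < \infty$ and the theorem follows.

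The injectivity claim amounts to the following: whenever $\Psi^* := \sum_{i=1}^N c_i\,\Psi(y_i,\cdot) \in L^1(F)\cap L^p(F)$ satisfies $\sum_i c_i\varphi_{y_i} \equiv 0$ in $H$, then $\Psi^* = 0$ in $L^1(F)$. Since every element of $H$ is continuous on $\bbr_+$, pointwise vanishing in $\xi$ transfers, through the range assumption on $\Gamma(h_0)$, to
\[
\int_{\bbr} \Psi^*(x)\,e^{x z}\,F(dx) \;=\; 0 \qquad \text{for every } z \in \Gamma(h_0)(\bbr_+).
\]
Setting $\mu := \Psi^* F$, a finite signed measure on $\bbr$, the hypothesis on the range of $\Gamma(h_0)$ translates into the vanishing of a one- or two-sided Laplace transform of $\mu$ on a substantial set.

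In case ($\call$), $\mu$ is supported in $\bbr_+$ and its Laplace transform vanishes on the entire negative half-line; classical uniqueness for the Laplace transform of a finite signed measure on $\bbr_+$ (for instance via Stone--Weierstrass density of $\{e^{-\lambda \cdot}:\lambda\ge 0\}$ in $C_0(\bbr_+)$) forces $\mu=0$. In case ($\call^\epsilon$), Assumption~\ref{ass-Lipschitz-HJMM}(4) combined with H\"older's inequality shows that $z\mapsto \int_{\bbr} e^{xz}\,d\mu(x)$ is defined and holomorphic on a complex strip containing the real interval $(-\epsilon,\epsilon)$; its vanishing on that interval propagates by the identity theorem to the whole strip, and its restriction to the imaginary axis gives $\widehat \mu \equiv 0$, so again $\mu=0$. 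In either case $\Psi^*=0$ $F$-a.e., proving injectivity.

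The main obstacle I anticipate is the careful verification of integrability and analyticity in case ($\call^\epsilon$): the $L^p(F)$-membership of $\Psi^*$ and the $L^q(F;H)$-type control on exponentials supplied by Assumption~\ref{ass-Lipschitz-HJMM}(4) must be combined to produce an integrable majorant for $x\mapsto \Psi^*(x)e^{xz}$ uniform over a complex neighborhood of $(-\epsilon,\epsilon)$, so that differentiation under the integral sign legitimizes the analyticity claim. Once this step is secured, the remainder is a standard Laplace-inversion recipe, and the injectivity of $\Psi(y,\cdot)\mapsto \varphi_y$ together with the finite dimensionality of $V$ closes the argument.
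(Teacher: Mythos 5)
Your proposal is correct and follows essentially the same route as the paper: invoke Theorem~\ref{thm-HJMM-affine-real} to get $\dim U_{\Psi,\gamma}<\infty$, and show that the integral operator $\psi\mapsto\int_{\bbr}\psi(x)e^{x\Gamma(h)}F(dx)$ is injective on $L^1(F)\cap L^p(F)$ via uniqueness of the one-sided Laplace transform in case ($\call$) and of the two-sided transform (holomorphic extension to a strip plus the identity theorem and Fourier uniqueness) in case ($\call^{\epsilon}$). The only cosmetic difference is that the paper passes through the Jordan decomposition $\mu^{+}-\mu^{-}$ so as to apply uniqueness theorems stated for positive finite measures, whereas you work with the signed measure $\Psi^{*}F$ directly.
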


\begin{proof}
We choose $h \in H$ such that condition ($\call$) or ($\call^{\epsilon}$) is fulfilled. We define the integral operator
\begin{align*}
T_h : L^1(F) \cap L^p(F) \to H, \quad T_h \psi := \int_{\bbr} \psi(x) e^{x \Gamma(h)} F(dx),
\end{align*}
and claim that ${\rm ker}(T_h) = \{ 0 \}$. Indeed, let $\psi \in L^1(F) \cap L^p(F)$ be such that $T_h \psi = 0$. Then we have
\begin{align}\label{T-y-plus-minus}
T_h \psi^+ = T_h \psi^-.
\end{align}
Since $\psi \in L^1(F)$, we can define the finite signed measure $\mu$ on $(\bbr,\calb(\bbr))$ by $\frac{d \mu}{d F} := \psi$. Its Jordan decomposition $\mu = \mu^+ - \mu^-$ is given by $\frac{d \mu^+}{d F} = y^+$ and $\frac{d \mu^-}{d F} = y^-$. Now, we distinguish the two cases from our hypothesis of the theorem:
\begin{enumerate}
\item[($\call$)] Since, ${\rm supp}(F) \subset \bbr_+$, the measures $\mu^+$ and $\mu^-$ are finite measures on $(\bbr_+,\calb(\bbr_+))$. We denote by $\call_{\mu^+},\call_{\mu^-} : \bbr_+ \to \bbr_+$ their Laplace transforms. Let $\lambda \in \bbr_+$ be arbitrary. By assumption, there exists $\xi \in \bbr_+$ such that $\Gamma(h)(\xi) = -\lambda$. Therefore, we have
\begin{align*}
\call_{\mu^+}(\lambda) &= \int_{\bbr_+} e^{- \lambda x} \mu^+(dx) = \int_{\bbr_+} e^{x \Gamma(h)(\xi)} \mu^+(dx) 
\\ &= \int_{\bbr_+} \psi^+(x) e^{x \Gamma(h)(\xi)} F(dx) = T_h \psi^+,
\end{align*}
and an analogous calculation shows that
\begin{align*}
\call_{\mu^-}(\lambda) = T_h \psi^-.
\end{align*}
In view of (\ref{T-y-plus-minus}), we deduce that
\begin{align*}
\call_{\mu^+}(\lambda) = \call_{\mu^-}(\lambda) \quad \text{for all $\lambda \in \bbr_+$.}
\end{align*}
By the first uniqueness theorem for one-sided Laplace transforms (Theorem~\ref{thm-Laplace-plus}), we deduce that $\mu^+ = \mu^-$.

\item[($\call^{\epsilon}$)] We denote by $\call_{\mu^+}^{\epsilon},\call_{\mu^-}^{\epsilon} : (-\epsilon,\epsilon) \to \bbr_+$ the Laplace transforms of $\mu^+$ and $\mu^-$. Let $\lambda \in (-\epsilon,\epsilon)$ be arbitrary. By assumption, there exists $\xi \in \bbr_+$ such that $\Gamma(h)(\xi) = -\lambda$. Therefore, an analogous calculation as in (i) shows that
\begin{align*}
\call_{\mu^+}^{\epsilon}(\lambda) = \call_{\mu^-}^{\epsilon}(\lambda) \quad \text{for all $\lambda \in (-\epsilon,\epsilon)$.}
\end{align*}
By the second uniqueness theorem for two-sided Laplace transforms (Theorem~\ref{thm-Laplace-epsilon}), we deduce that $\mu^+ = \mu^-$.  
\end{enumerate}
Consequently, in both cases, we deduce that $\psi^+ = \psi^-$ almost surely with respect to $F$, which implies $\psi = 0$ almost surely with respect to $F$. This proves ${\rm ker}(T_h) = \{ 0 \}$.
By Theorem~\ref{thm-HJMM-affine-real}, the range $T_h(U_{\Psi}) \subset U_{\Psi,\gamma}$ is finite dimensional, and hence, we deduce that $U_{\Psi}$ is finite dimensional, too.
\end{proof}

For the rest of this section we suppose that $\Phi : \caly \times \bbr \to (0,\infty)$ is of the form
\begin{align}\label{exp-prod}
\Phi(y,x) = \exp(\vartheta(y) \xi(x))
\end{align}
with a continuous mapping $\vartheta : \caly \to \bbr$ and a measurable mapping $\xi : \bbr \to \bbr$. We suppose that $\caly$ is connected. In the sequel, we denote by $F^d$ the discrete part of the L\'{e}vy measure $F$, and by $F^c$ its absolutely continuous part. 

\begin{proposition}\label{prop-Fc-Fd}
Suppose that the HJMM equation (\ref{HJMM}) has an affine realization, and that one of conditions ($\call$) or ($\call^{\epsilon}$) from Theorem~\ref{thm-dim-U-Psi} is fulfilled. Furthermore, suppose that one of the following conditions is satisfied:
\begin{enumerate}
\item[($F^d$)] $\xi({\rm supp}(F^d))$ is infinite.

\item[($F^c$)] There are $c,d \in \bbr$ with $c < d$ such that $\xi$ is continuous on $[c,d]$ with $\xi(c) \neq \xi(d)$ and $\frac{dF^c}{d\lambda}(x) > 0$ for $F^c$-almost all $x \in [c,d]$.
\end{enumerate}
Then $\vartheta$ is constant.
\end{proposition}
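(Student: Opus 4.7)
The plan is to argue by contradiction: suppose $\vartheta$ is not constant. By Theorem~\ref{thm-dim-U-Psi}, applied under hypothesis $(\call)$ or $(\call^\epsilon)$, the subspace $U_\Psi \subset L^p(F)$ is finite dimensional; set $D := \dim U_\Psi$. Continuity of $\vartheta$ and connectedness of $\caly$ make $\vartheta(\caly)$ an interval of positive length, so we may choose $[a,b] \subset \vartheta(\caly)$ with $a < b$ and, after shrinking if needed, $0 \notin [a,b]$. Picking $N := D + 1$ pairwise distinct values $t_1, \ldots, t_N \in [a,b]$ and $y_j \in \caly$ with $\vartheta(y_j) = t_j$, the $N$ vectors $\Psi(y_j, \cdot) = 1 - e^{t_j \xi(\cdot)}$ lie in the $D$-dimensional space $U_\Psi$, so there exist scalars $\alpha_1, \ldots, \alpha_N$, not all zero, with
\[
\sum_{j=1}^N \alpha_j \bigl( 1 - e^{t_j \xi(x)} \bigr) = 0 \quad \text{for $F$-a.e.\ } x.
\]

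Setting $c := \sum_j \alpha_j$ and $g(z) := c - \sum_{j=1}^N \alpha_j e^{t_j z}$, the above identity reads $g(\xi(x)) = 0$ for $F$-a.e.\ $x$. Because $0, t_1, \ldots, t_N$ are pairwise distinct (thanks to $0 \notin [a,b]$), linear independence of distinct real exponentials shows that $g \equiv 0$ would force $c = 0$ and every $\alpha_j = 0$, which is impossible. Hence $g$ is a nonzero real exponential polynomial, and the classical Chebyshev/Rolle bound then implies that $g$ has only finitely many real zeros. It therefore suffices to exhibit infinitely many distinct real points at which $g$ vanishes in each of the two cases.

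Under $(F^d)$, the hypothesis that $\xi(\mathrm{supp}(F^d))$ is infinite, together with the fact that every point of $\mathrm{supp}(F^d)$ is an atom of $F$, produces infinitely many distinct real values $\xi(x_k)$ at which $g$ vanishes, a contradiction. Under $(F^c)$, the identity $g \circ \xi = 0$ holds $F^c$-a.e.\ on $[c,d]$, and by the positivity assumption on $dF^c/d\lambda$ there it holds Lebesgue-a.e., hence everywhere on $[c,d]$ by continuity of $\xi$ and $g$. The intermediate value theorem then shows that $g$ vanishes on the non-degenerate interval between $\xi(c)$ and $\xi(d)$, and real-analyticity of $g$ forces $g \equiv 0$, again a contradiction. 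In either case $\vartheta$ must be constant.

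The delicate point I anticipate is the bookkeeping around the constant term $c$: one needs $0$ to be an honestly distinct exponent so that both the linear-independence argument and the exponential-polynomial zero bound apply cleanly to $g$ — this is precisely why we arrange $0 \notin [a,b]$ at the outset. Once that is in place, each case is closed by a standard uniqueness-type argument (infinitely many real zeros of a nonzero exponential polynomial), tailored respectively to the atoms of $F^d$ and to the density of $F^c$.
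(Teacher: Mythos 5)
Your proof is correct. The skeleton coincides with the paper's --- argue by contradiction, use Theorem~\ref{thm-dim-U-Psi} to get $\dim U_{\Psi} < \infty$, and contradict this via the linear independence of the functions $1 - e^{t\,\xi(\cdot)}$ for distinct $t$ --- but your mechanism for the independence step is genuinely different. The paper restricts $U_{\Psi}$ to ${\rm supp}(F^d)$ (resp.\ to $[c,d]$ via the unique continuous representative of Lemma~\ref{lemma-one-repr}) and invokes Proposition~\ref{prop-lin-ind-exp}, whose proof runs through Vandermonde matrices evaluated at points where $\xi$ takes pairwise distinct values. You instead extract a single nontrivial dependence among $\dim U_{\Psi}+1$ elements, package it as one exponential polynomial $g(z) = \sum_j \alpha_j - \sum_j \alpha_j e^{t_j z}$ that vanishes at $F$-a.e.\ value of $\xi$, and contradict the classical bound that a nonzero real exponential polynomial with $M$ distinct exponents has at most $M-1$ real zeros (case $(F^d)$), resp.\ real-analyticity plus the intermediate value theorem (case $(F^c)$). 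Your device of shrinking $[a,b]$ so that $0 \notin [a,b]$ is exactly the right way to handle the constant term in $\Psi = 1 - e^{\vartheta(y)\xi}$; the paper absorbs that term instead into part (2) of Proposition~\ref{prop-lin-ind-exp} (the functions $\exp(h)-1$). Your route is somewhat more self-contained for this particular proposition, at the cost of importing the exponential-polynomial zero bound; the only caveat --- shared with the paper's own proof, whose restriction operator $\Pi$ is well defined on $L^p(F)$-classes only under the same reading --- is that case $(F^d)$ tacitly interprets ${\rm supp}(F^d)$ as the set of atoms of $F$, so that the $F$-a.e.\ identity holds at every point of it.
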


\begin{proof}
Suppose that $\vartheta$ is not constant. Since $\caly$ is connected and $\vartheta$ is continuous, there exist $a,b \in \bbr$ with $a < b$ such that $[a,b] \subset \vartheta(\caly)$. Now, we distinguish the two cases from our hypothesis of the theorem:
\begin{itemize}
\item[($F^d$)] We set $\calx := {\rm supp}(F^d)$ and introduce the linear operator
\begin{align*}
\Pi : U_{\Psi} \to \ell(\calx), \quad \psi \mapsto \psi|_{\calx},
\end{align*}
where we note that for all $\psi,\phi \in \call^p(F)$ with the same equivalence class $[\psi] = [\phi]$ we have $\psi|_{\calx} = \phi|_{\calx}$. Since 
\begin{align*}
[\![ a \cdot \psi|_{\calx},b \cdot \psi|_{\calx} ]\!] \subset \{ \vartheta(y) \cdot \psi|_{\calx} : y \in \caly \}
\end{align*}
and the set $\xi(\calx)$ is infinite, by Proposition~\ref{prop-lin-ind-exp} the subspace
\begin{align*}
\Pi(U_{\Psi}) = \langle \exp(\vartheta(y) \cdot \xi|_{\calx}) - 1 : y \in \caly \rangle 
\end{align*}
is an infinite dimensional subspace of $\ell(\calx)$.

\item[($F^c$)] We define the interval $I = [c,d]$ and the linear operator
\begin{align*}
\Pi : U_{\Psi} \to \ell(I), \quad \psi \mapsto \psi|_I,
\end{align*}
where for $\psi \in L^p(F)$ we define $\psi|_I$ as the unique continuous representative of $\psi$ on $I$ according to Lemma~\ref{lemma-one-repr}. Since 
\begin{align*}
[\![ a \cdot \psi|_{I},b \cdot \psi|_{I} ]\!] \subset \{ \vartheta(y) \cdot \psi|_{I} : y \in \caly \}
\end{align*}
the set $\xi(I)$ is infinite, by Proposition~\ref{prop-lin-ind-exp} the subspace
\begin{align*}
\Pi(U_{\Psi}) = \langle \exp(\vartheta(y) \cdot \xi|_{I}) - 1 : y \in \caly \rangle 
\end{align*}
is an infinite dimensional subspace of $\ell(I)$.
\end{itemize}
Consequently, in both cases we deduce that $U_{\Psi}$ is infinite dimensional, which contradicts Theorem~\ref{thm-dim-U-Psi}.
\end{proof}

Note that the hypotheses of Proposition~\ref{prop-Fc-Fd} regarding the L\'{e}vy measure $F$ and the function $\xi$ are, in particular, satisfied in the following situations:
\begin{itemize}
\item $X$ is a compound Poisson process with infinite jump size distribution, and $\xi$ is one-to-one on the support of $F$.

\item The L\'{e}vy measure of $X$ has a strictly positive density on some subinterval of positive length, and $\xi$ is continuous and nontrivial on this subinterval. Most of the infinite activity L\'{e}vy processes, which are considered in the literature, have a strictly positive density, for example bilateral Gamma processes or tempered stable processes, which we have mentioned in Section \ref{sec-Levy-cumulant}.
\end{itemize}
In this sense, Proposition~\ref{prop-Fc-Fd} generalizes Proposition~\ref{prop-cumulant}, where we have considered the identity mapping $\xi(x) = x$. On the other hand, for Proposition~\ref{prop-Fc-Fd} we have imposed that one of the conditions ($\call$) or ($\call^{\epsilon}$) is fulfilled.

\section{Necessary conditions on the volatility for the existence of an affine realization}\label{sec-nec-gamma}

In this section, we investigate the necessity of the second assumption from Proposition~\ref{prop-fin-dim-suff}. More precisely, we investigate whether the subspaces generated by the volatility must necessarily be finite dimensional for the existence of an affine realization.

For simplicity, we assume that the L\'{e}vy process $X$ in (\ref{HJMM}) is a compound Poisson process with finite jump size distribution. Then the first condition from Proposition~\ref{prop-fin-dim-suff} is fulfilled; see also Corollary~\ref{cor-fin-dim-suff}.

\begin{theorem}\label{thm-convex}
Suppose that the HJMM equation (\ref{HJMM}) has an affine realization. Furthermore, suppose that $\Psi \not\equiv 0$ and that $[\![ 0,g ]\!] \subset \gamma(H)$ for some $g \in H$, where the line segment $[\![ 0,g ]\!]$ is defined as
\begin{align*}
[\![ 0,g ]\!] := \{ t g : t \in [0,1] \}.
\end{align*}
Then $\gamma$ is constant, and in particular $U_{\gamma}$ is finite dimensional
\end{theorem}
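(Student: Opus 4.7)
The plan is to apply Theorem~\ref{thm-HJMM-affine-real}, which forces $\dim U_{\Psi,\gamma}<\infty$, and then use the line segment hypothesis to produce inside $U_{\Psi,\gamma}$ an infinite family of linearly independent vectors, yielding a contradiction unless $g=0$.

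First I would exploit that $X$ is compound Poisson with finite jump size distribution to write $F = \sum_{i=1}^N p_i \delta_{x_i}$ with the $x_i \in \bbr$ distinct and $p_i > 0$, so the integrals in (\ref{def-U-Psi-gamma}) reduce to $\sum_i p_i \Psi(y,x_i) e^{x_i \Gamma(h)}$. Since $\Psi \not\equiv 0$ there is some $y^\ast \in \caly$ making $c_i := p_i \Psi(y^\ast, x_i)$ not all zero, and after discarding any term with $x_i = 0$ (which contributes only an $h$-independent constant in $H$) we may assume $c_{i_0} \neq 0$ for some $x_{i_0} \neq 0$. For each $t \in [0,1]$ the hypothesis supplies $h_t \in H$ with $\gamma(h_t) = tg$, so $\Gamma(h_t) = tG$ where $G := -\int_0^\bullet g(\eta)\,d\eta$, and
\[
\varphi(t) := \sum_{i=1}^N c_i e^{x_i t G} \in U_{\Psi,\gamma} \quad \text{for every } t \in [0,1].
\]

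The core step is to Taylor-expand $\varphi$ in $t$. Using the regularity built into Assumption~\ref{ass-Lipschitz-HJMM}, $\varphi\colon [0,1]\to H$ is real-analytic in the Hilbert space norm, and every derivative $\varphi^{(k)}(0)$ lies in the closed span of $\{\varphi(t):t\in[0,1]\}\subset U_{\Psi,\gamma}$. A term-by-term computation gives $\varphi^{(k)}(0) = \alpha_k G^k$, where $\alpha_k := \sum_i c_i x_i^k$ and $G^k$ denotes the $k$-fold pointwise product. Two independence facts now collide: first, $z\mapsto\sum_i c_i e^{x_i z}$ is a nontrivial exponential sum with at least one non-zero exponent and is thus not a polynomial, so $\alpha_k \neq 0$ for infinitely many $k$; second, if $g \neq 0$ then $G$ is non-constant, so any relation $\sum_{k=0}^K \beta_k G^k = 0$ would make the polynomial $\sum_k \beta_k X^k$ vanish on the infinite set $G(\bbr_+)$, forcing $\beta_k = 0$ and establishing linear independence of $\{G^k\}_{k\in\bbn_0}$ in $H$. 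Combining them, $\{\alpha_k G^k\}_k$ is an infinite linearly independent family inside the finite-dimensional $U_{\Psi,\gamma}$, a contradiction; hence $g = 0$.

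To upgrade this to ``$\gamma$ is constant'' I would repeat the argument with the segment re-anchored at an arbitrary base point $\gamma(h_0)$: for any $g \in H$ with $[\![\gamma(h_0),\gamma(h_0)+g]\!] \subset \gamma(H)$, the translated exponentials $e^{x_i \Gamma(h_0)} e^{x_i tG}$ generate the same $G^k$-structure in $U_{\Psi,\gamma}$ and force $g = 0$ by the identical argument, ruling out any non-trivial directional variation in $\gamma(H)$. Combined with the finite-dimensionality of $\langle \gamma(H)\rangle$ supplied by Proposition~\ref{prop-sigma-nec} and the continuity of $\gamma$, this collapses $\gamma(H)$ to a single point. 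I expect the principal obstacle to be the Hilbert-space analyticity step: one has to justify, using Assumption~\ref{ass-Lipschitz-HJMM}, that the formal term-by-term differentiation of $t\mapsto e^{x_i t G}$ converges in the $H$-norm, that the pointwise powers $G^k$ genuinely belong to $H$, and that the derivatives at $0$ take the clean form $\alpha_k G^k$ asserted above.
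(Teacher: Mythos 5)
Your skeleton coincides with the paper's up to the decisive step: both proofs reduce $F$ to a finite support $\calx=\{x_1,\ldots,x_N\}$, fix $y$ with $\Psi(y,\cdot)$ not $F$-a.e.\ zero, and must show that the curve $t\mapsto\varphi(t)=\sum_i p_i\Psi(y,x_i)e^{x_i tG}$, $t\in[0,1]$, spans an infinite-dimensional subspace of $U_{\Psi,\gamma}$. Where you genuinely diverge is in how this is done. The paper never differentiates: for arbitrary $m$ it chooses $t_1,\ldots,t_m\in[0,1]$ so that all products $x\cdot t_i$, $x\in\calx$, are pairwise distinct, and then invokes Proposition~\ref{prop-lin-ind-exp-ps} (a power-series/Vandermonde argument exploiting that the integrated volatility takes nonzero values accumulating at $0$) to get linear independence of the $m$ sampled points $\varphi(t_i)$ directly. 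Your Taylor-coefficient route is a legitimate alternative in principle, but the analyticity step you flag is a genuine gap, not a formality: Assumption~\ref{ass-Lipschitz-HJMM} gives $e^{x\Gamma(h)}\in H$, not $G\in H$, let alone $G^k\in H$. If, say, ${\rm supp}(F)\subset(0,\infty)$, then $e^{xG}\in H$ only bounds $G$ from above, and the pointwise limit $\alpha_1G$ of the difference quotients need not belong to $H$, so ``$\varphi$ is real-analytic in the $H$-norm'' cannot be read off the standing assumptions. The gap is repairable under the contradiction hypothesis: the span $U$ of $\{\varphi(t):t\in[0,1]\}$ is finite dimensional, hence closed, and on $U$ norm convergence is equivalent to convergence under finitely many point evaluations, so the difference quotients (which lie in $U$) do converge in $U$ to an element agreeing pointwise with $\alpha_1G$; iterating gives $\alpha_kG^k\in U$. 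But this argument must be supplied; as written the key claim is asserted. The remaining ingredients ($\alpha_k\neq0$ infinitely often because a nontrivial exponential sum is no polynomial; $\{G^k\}$ independent because $G(\bbr_+)$ is infinite) are correct.

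The endgame is where your proposal actually fails. The theorem hypothesizes a single segment $[\![0,g]\!]\subset\gamma(H)$, so the ``re-anchoring at an arbitrary base point $\gamma(h_0)$'' uses an assumption you do not have; and even granting segments at every base point, ``$\gamma(H)$ contains no nondegenerate segment'' together with $\dim\langle\gamma(H)\rangle<\infty$ and continuity does not collapse $\gamma(H)$ to a point (a curved arc in a two-dimensional subspace contains no segment). What your argument establishes is that $g=0$, i.e.\ that $\gamma(H)$ contains no nondegenerate line segment through the origin. You should be aware that the paper's own proof lands in the same place: it derives its contradiction from the segment $[\![0,f]\!]$ and tacitly needs $f\not\equiv0$ for Proposition~\ref{prop-lin-ind-exp-ps} to apply, so it too really excludes $g\neq0$ rather than non-constancy per se. So your core contribution matches the paper's in substance, but do not present the ``collapse to a single point'' deduction as a valid inference.
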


\begin{proof}
Suppose, on the contrary, that $\gamma$ is not constant. Then we have $[\![ 0,f ]\!] \subset \Gamma(H)$, where $f := \int_0^{\bullet} g(\eta) d\eta$. Since $X$ is a compound Poisson process with finite jump size distribution, there exist a finite set $\calx \subset \bbr \setminus \{ 0 \}$ and a mapping $\rho : \calx \to (0,\infty)$ such that the L\'{e}vy measure $F$ of $X$ is given by
\begin{align*}
F(B) = \sum_{x \in \calx \cap B} \rho(x) \quad \text{for all $B \in \calb(\bbr)$.} 
\end{align*}
Since $\Psi \not\equiv 0$, there exists $y \in \caly$ such that $\Psi(y,x) \neq 0$ for some $x \in \calx$. By Theorem~\ref{thm-HJMM-affine-real} the subspace $U \subset U_{\Psi,\gamma}$ given by
\begin{align*}
U := \Big\langle \sum_{x \in \calx} \rho(x) \Psi(y,x) e^{x \Gamma(h)} : h \in H \Big\rangle,
\end{align*}
is finite dimensional. Now, let $m \in \bbn$ be arbitrary. First, we show that there exist $t_1,\ldots,t_m \in [0,1]$ such that the elements
\begin{align*}
x \cdot t_i, \quad x \in \calx \text{ and } i=1,\ldots,m
\end{align*}
are pairwise different. Indeed, by induction we prove that for each $d = 1,\ldots,m$ there are $t_1,\ldots,t_d \in [0,1]$ such that the elements
\begin{align*}
x \cdot t_i, \quad x \in \calx \text{ and } i=1,\ldots,d
\end{align*}
are pairwise different. For $d=1$ we can choose $t_1 := 1$. For the induction step $d \to d+1$ we choose $t_{d+1} \in [0,1]$ such that the finitely many conditions
\begin{align*}
t_{d+1} \neq \frac{x_2 \cdot t_j}{x_1} \quad \text{for all $x_1,x_2 \in \calx$ and $j = 1,\ldots,d$,}
\end{align*}
are fulfilled. Then we have
\begin{align*}
x_1 \cdot t_{d+1} \neq x_2 \cdot t_j \quad \text{for all $x_1,x_2 \in \calx$ and $j = 1,\ldots,d$,}
\end{align*}
and hence the elements
\begin{align*}
x \cdot t_j, \quad x \in \calx \text{ and } j=1,\ldots,d+1
\end{align*}
are pairwise different. Now, let $c_1,\ldots,c_m \in \bbr$ be such that
\begin{align*}
\sum_{i=1}^m c_i \sum_{x \in \calx} \rho(x) \Psi(y,x) \exp(x \cdot t_i f) = 0.
\end{align*}
Then we have
\begin{align*}
\sum_{i=1}^m \sum_{x \in \calx} \big( c_i \rho(x) \Psi(y,x) \big) \exp(x \cdot t_i f) = 0.
\end{align*}
By Proposition~\ref{prop-lin-ind-exp-ps} we deduce that
\begin{align*}
c_i \rho(x) \Psi(y,x) = 0 \quad \text{for all $i=1,\ldots,m$ and all $x \in \calx$.}
\end{align*}
Since $\rho(x) > 0$ for all $x \in \calx$ and $\Psi(y,x) \neq 0$ for some $x \in \calx$, we deduce that $c_1 = \ldots = c_m = 0$. Since $m \in \bbn$ was arbitrary, and we have $[\![ 0,f ]\!] \subset \Gamma(H)$, we arrive at the contradiction that the subspace $U_{\Psi,\gamma}$ is infinite dimensional.
\end{proof}

\section{Conclusion}\label{sec-conclusion}

In this paper, we have investigated the existence of affine realization for the L\'{e}vy process driven HJMM equation (\ref{HJMM}) with real-world forward rate dynamics. To sum up our findings of the previous sections, we have seen that, under suitable conditions, the HJMM equation (\ref{HJMM}) has an affine realization if and only if the following three conditions are satisfied:
\begin{enumerate}
\item[(i)] The risk-neutral HJMM equation has an affine realization.

\item[(ii)] The subspaces $U_{\Psi^1},\ldots,U_{\Psi^n}$ are finite dimensional.

\item[(iii)] The subspaces $U_{\gamma^1},\ldots,U_{\gamma^n}$ are finite dimensional.
\end{enumerate}
Thus, if one is interested in the existence of an affine realization, this suggests to choose models where $\gamma^1,\ldots,\gamma^n$ are constant and the pure jump L\'{e}vy processes $X^1,\ldots,X^n$ are compound Poisson processes with finite jump size distributions. In this case, a sufficient condition for the existence of an affine realization is that all volatilities $\sigma^1,\ldots,\sigma^d$ and $\gamma^1,\ldots,\gamma^n$ are quasi-exponential.

Consequently, when considering the full picture of the term structure dynamics under their real-world constraints, in the case of infinite activity L\'{e}vy process driven dynamics only rather restricted term structure models remain possible. This has obvious consequences for realistic term structure modeling.

\section*{Acknowledgement}

We are grateful to an anonymous referee for the careful study of our paper and the valuable comments and suggestions.

\begin{appendix}

\section{Results about real analytic functions}\label{app-analytic}

In this appendix we provide results about real analytic functions, which we require in this article. In the following, we denote by $J \subset I \subset \mathbb{R}$ two arbitrary nonempty, open intervals.

\begin{lemma}\label{lemma-lin-ind}
Let $f_1,\ldots,f_m : I \rightarrow \mathbb{R}$ be linearly independent, real analytic functions for some $m \in \mathbb{N}$. Then, there exist elements $\theta_1,\ldots,\theta_m \in J$ such that
\begin{align*}
\det \left(
\begin{array}{ccc}
f_1(\theta_1) & \cdots & f_1(\theta_m)
\\ \vdots & \ddots & \vdots
\\ f_m(\theta_1) & \cdots & f_m(\theta_m)
\end{array}
\right) \neq 0.
\end{align*}
\end{lemma}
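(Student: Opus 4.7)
The plan is to prove this by induction on $m$, with the identity theorem for real analytic functions doing the main work at each step.

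For the base case $m=1$, the function $f_1$ is nonzero (by linear independence of a one-element family) and real analytic on $I$. Since $I$ is a connected open interval, the identity theorem tells us that the zero set of $f_1$ in $I$ is discrete; in particular $f_1$ is not identically zero on the subinterval $J$, so some $\theta_1 \in J$ satisfies $f_1(\theta_1) \neq 0$.

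For the inductive step, assume the result for $m-1$. Apply the inductive hypothesis to $f_1,\ldots,f_{m-1}$ (which are linearly independent since $f_1,\ldots,f_m$ are) to obtain $\theta_1,\ldots,\theta_{m-1} \in J$ with
\begin{align*}
D := \det \begin{pmatrix} f_1(\theta_1) & \cdots & f_1(\theta_{m-1}) \\ \vdots & \ddots & \vdots \\ f_{m-1}(\theta_1) & \cdots & f_{m-1}(\theta_{m-1}) \end{pmatrix} \neq 0.
\end{align*}
Now define, for $\theta \in I$,
\begin{align*}
g(\theta) := \det \begin{pmatrix} f_1(\theta_1) & \cdots & f_1(\theta_{m-1}) & f_1(\theta) \\ \vdots & \ddots & \vdots & \vdots \\ f_m(\theta_1) & \cdots & f_m(\theta_{m-1}) & f_m(\theta) \end{pmatrix}.
\end{align*}
Expanding along the last column gives $g(\theta) = \sum_{i=1}^m (-1)^{m+i} M_i f_i(\theta)$, where $M_i$ is the $(m-1)\times(m-1)$ minor obtained by removing row $i$ and the last column. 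The coefficient of $f_m$ is $M_m = D \neq 0$, so $g$ is a nontrivial linear combination of $f_1,\ldots,f_m$.

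By the linear independence hypothesis, $g$ is not identically zero on $I$. Being a finite linear combination of real analytic functions, $g$ is itself real analytic on $I$, hence by the identity theorem its zero set in the connected open set $I$ is discrete, so $g$ cannot vanish identically on the open subinterval $J$. Therefore there exists $\theta_m \in J$ with $g(\theta_m) \neq 0$, and this is precisely the determinant we needed to make nonzero, completing the induction. The only delicate point is the appeal to the identity theorem to pass from "not identically zero on $I$" to "not identically zero on $J$", which rests on $I$ being a connected open set containing the open set $J$; no further obstacle is expected.
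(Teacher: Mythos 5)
Your proof is correct and follows essentially the same route as the paper's: induction on $m$, with the bordered determinant $g(\theta)$ expanded along its last column to exhibit a nontrivial linear combination of $f_1,\ldots,f_m$ (the coefficient of $f_m$ being the nonzero determinant from the inductive hypothesis), and the identity theorem for real analytic functions supplying a point of $J$ where $g$ does not vanish. No gaps.
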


\begin{proof}
First, we suppose that $m=1$. Since $f_1 \not \equiv 0$ on $I$ by the assumed linear independence, according to the identity theorem for analytic functions there exists $\theta_1 \in J$ with $f_1(\theta_1) \neq 0$. 
For $m \geq 2$ we proceed by induction and suppose there exist $\theta_1,\ldots,\theta_{m-1} \in J$ such that
\begin{align*}
\det \left(
\begin{array}{ccc}
f_1(\theta_1) & \cdots & f_1(\theta_{m-1})
\\ \vdots & \ddots & \vdots
\\ f_{m-1}(\theta_1) & \cdots & f_{m-1}(\theta_{m-1})
\end{array}
\right) \neq 0.
\end{align*}
Then the function $g : I \rightarrow \mathbb{R}$ given by
\begin{align*}
g(\theta) = \det \left(
\begin{array}{cccc}
f_1(\theta_1) & \cdots & f_1(\theta_{m-1}) & f_1(\theta)
\\ \vdots & \ddots & \vdots & \vdots
\\ f_{m-1}(\theta_1) & \cdots & f_{m-1}(\theta_{m-1}) & f_{m-1}(\theta)
\\ f_m(\theta_1) & \cdots & f_m(\theta_{m-1}) & f_m(\theta)
\end{array}
\right)
\end{align*}
is also real analytic, and it is of the form
\begin{align*}
g(\theta) = \sum_{i=1}^m \xi_i f_i(\theta), \quad \theta \in I
\end{align*}
with $\xi_1,\ldots,\xi_m \in \mathbb{R}$ and $\xi_m \neq 0$. Since $f_1,\ldots,f_m$ are linearly independent, we have $g \not \equiv 0$ on $I$. Since $g$ is real analytic, by the identity theorem for analytic functions there exists $\theta_m \in J$ with $g(\theta_m) \neq 0$, which finishes the proof.
\end{proof}

\begin{remark}
We refer to \cite[Prop.~5.5]{BKR0} for a result which has similarities to Lemma~\ref{lemma-lin-ind}.
\end{remark}

\begin{proposition}\label{prop-lin-ind-of-span-inf}
Let $f : I \rightarrow \mathbb{R}$ be a real analytic function and let $\Lambda : \mathbb{R}_+ \rightarrow \mathbb{R}$ be a continuous, non-constant function with $\Lambda(x) = 0$ for some $x \in \mathbb{R}_+$ such that
\begin{align*}
\theta + \Lambda(x) \in I \quad \text{for all $\theta \in J$ and $x \in \mathbb{R}_+$.}
\end{align*}
If we have
\begin{align}\label{dim-f-infinite}
\dim \langle f^{(n)} : n \in \mathbb{N}_0 \rangle = \infty,
\end{align}
then we also have
\begin{align}\label{dim-f-Lambda-infinite}
\dim \langle f(\theta + \Lambda) : \theta \in J \rangle = \infty.
\end{align}
\end{proposition}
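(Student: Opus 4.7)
The plan is to argue by contradiction. Assume that $m := \dim \langle f(\theta + \Lambda) : \theta \in J \rangle < \infty$. Then there exist $\theta_1, \dots, \theta_m \in J$ such that $f(\theta_1 + \Lambda), \dots, f(\theta_m + \Lambda)$ form a basis of this finite dimensional space, and for every $\theta \in J$ there exist (unique) coefficients $c_1(\theta), \dots, c_m(\theta) \in \mathbb{R}$ with
\begin{equation*}
f(\theta + \Lambda(x)) = \sum_{i=1}^m c_i(\theta) f(\theta_i + \Lambda(x)) \quad \text{for all } x \in \mathbb{R}_+.
\end{equation*}

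Next, I would exploit the hypotheses on $\Lambda$. Since $\Lambda$ is continuous and non-constant with $\Lambda(x_0) = 0$ for some $x_0 \in \mathbb{R}_+$, the intermediate value theorem ensures that $\Lambda(\mathbb{R}_+)$ contains a non-degenerate interval $K$ having $0$ as an endpoint (or interior point). Reparametrising $x$ by choosing preimages under $\Lambda$, the identity above translates into
\begin{equation*}
f(\theta + y) = \sum_{i=1}^m c_i(\theta) f(\theta_i + y) \quad \text{for all } y \in K.
\end{equation*}
Both sides are real analytic in $y$ on the open interval $\{y \in \mathbb{R} : \theta + y \in I \text{ and } \theta_i + y \in I \text{ for } i=1,\dots,m\}$, which contains $K$ and in particular $0$. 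Since $K$ has positive length, the identity theorem for real analytic functions extends the equality to this open interval. Differentiating $n$ times with respect to $y$ and evaluating at $y = 0$ yields
\begin{equation*}
f^{(n)}(\theta) = \sum_{i=1}^m c_i(\theta) f^{(n)}(\theta_i) \quad \text{for all } \theta \in J \text{ and all } n \in \mathbb{N}_0.
\end{equation*}

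This last identity shows that each restriction $f^{(n)}|_J$ lies in the linear span of $c_1, \dots, c_m$ (with coefficients $f^{(n)}(\theta_i)$), so
\begin{equation*}
\dim \langle f^{(n)}|_J : n \in \mathbb{N}_0 \rangle \leq m.
\end{equation*}
The final step is to transport this bound back to $I$: if some finite subfamily $f^{(n_1)}, \dots, f^{(n_k)}$ on $I$ satisfied a linear relation after restricting to $J$, then the identity theorem (applied to the real analytic function given by the linear combination) forces the relation to hold on all of $I$. Hence $\dim \langle f^{(n)} : n \in \mathbb{N}_0 \rangle \leq m < \infty$, contradicting the hypothesis (\ref{dim-f-infinite}).

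The main technical obstacle is the passage from the pointwise identity on $K$ (which a priori only holds on the image $\Lambda(\mathbb{R}_+)$, possibly a half-interval touching $0$ at an endpoint) to an identity on an open neighborhood of $0$ where differentiation in $y$ is legitimate; this is resolved by the identity theorem for real analytic functions applied on the common analyticity domain, which is the key reason the analyticity of $f$ is essential rather than mere smoothness.
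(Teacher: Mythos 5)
Your argument is correct, and it runs in the opposite logical direction from the paper's. The paper argues directly: for each $m$ it uses the auxiliary determinant result (Lemma~\ref{lemma-lin-ind}) to pick $\theta_1,\ldots,\theta_m \in J$ with $\det\big(f^{(k)}(\theta_i)\big) \neq 0$, expands $f$ in a power series around each $\theta_i$, and uses the identity theorem for \emph{power series} along a sequence $z_n \in \Lambda(\bbr_+)$ with $z_n \to 0$, $z_n \neq 0$, to show that $f(\theta_1+\Lambda),\ldots,f(\theta_m+\Lambda)$ are linearly independent. You instead assume $\dim \langle f(\theta+\Lambda) : \theta \in J\rangle = m < \infty$, extract a basis, upgrade the resulting pointwise identity on the interval $K \subset \Lambda(\bbr_+)$ (obtained from the intermediate value theorem) to an identity of real analytic functions of $y$ via the identity theorem, and differentiate at $y=0$ to conclude that every $f^{(n)}|_J$ lies in the span of the $m$ coefficient functions $c_1,\ldots,c_m$; a final application of the identity theorem transports the bound from $J$ back to $I$. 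Your route dispenses with Lemma~\ref{lemma-lin-ind} and the explicit Vandermonde-type argument, which is a genuine simplification of the linear algebra; the price is the extra restriction-to-$J$ step at the end and the need for a full nondegenerate interval in $\Lambda(\bbr_+)$ rather than merely an accumulating sequence, although both facts follow from the same hypotheses on $\Lambda$. All the steps you flag as delicate (that $K$ lies in the common analyticity domain, that $0$ is an interior point of that domain, and that the relation extends from $J$ to $I$ by connectedness of $I$) do go through as you describe.
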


\begin{proof}
The proof has a certain similarity to the one in \cite[Thm.~7.1]{Tappe-Levy}. Let $m \in \mathbb{N}$ be arbitrary. By (\ref{dim-f-infinite}), the functions $f, f', \ldots, f^{(m-1)}$ are linearly independent.
Hence, by Lemma~\ref{lemma-lin-ind} there exist elements $\theta_1,\ldots,\theta_m \in J$ such that $\det B \neq 0$, where $B \in \mathbb{R}^{m \times m}$ denotes the matrix with $B_{ki} = f^{(k)}(\theta_i)$ for $k=0,\ldots,m-1$ and $i=1,\ldots,m$. We will show that
\begin{align}\label{dim-m}
\dim \langle f(\theta_i + \Lambda) : i=1,\ldots,m \rangle = m
\end{align}
Indeed, let $\xi_1,\ldots,\xi_m \in \mathbb{R}$ be such that
\begin{align*}
\sum_{i=1}^m \xi_i f(\theta_i + \Lambda) = 0.
\end{align*}
Since $f$ is real analytic on $I$, there exists $\epsilon > 0$ such that
\begin{align*}
\sum_{i=1}^m \xi_i f(\theta_i + z) &= \sum_{i=1}^m \xi_i \sum_{n=0}^{\infty} \frac{f^{(n)}(\theta_i)}{n!} z^n 
\\ &= \sum_{n=0}^{\infty} \bigg( \sum_{i=1}^m \xi_i \frac{f^{(n)}(\theta_i)}{n!} \bigg) z^n \quad \text{for all $z \in (-\epsilon,\epsilon)$.}
\end{align*}
This gives us
\begin{align*}
\sum_{n=0}^{\infty} \bigg( \sum_{i=1}^m \xi_i \frac{f^{(n)}(\theta_i)}{n!} \bigg) z^n = 0 \quad \text{for all $z \in \Lambda(\mathbb{R}_+) \cap (-\epsilon,\epsilon)$.}
\end{align*}
Since $\Lambda$ is continuous and non-constant with $\Lambda(x) = 0$ for some $x \in \mathbb{R}_+$, there exists a sequence $(z_n)_{n \in \mathbb{N}} \subset \Lambda(\mathbb{R}_+) \cap (-\epsilon,\epsilon)$ with $z_n \neq 0$, $n \in \mathbb{N}$ and $z_n \rightarrow 0$. Therefore, the identity theorem for power series applies and yields
\begin{align*}
\sum_{i=1}^{m} \xi_i f^{(n)}(\theta_i) = 0 \quad \text{for all $n \in \mathbb{N}$,}
\end{align*}
and it follows that $B \xi = 0$. Since $\det B \neq 0$, we deduce that $\xi_1,\ldots,\xi_m = 0$, which proves (\ref{dim-m}). Since $m \in \mathbb{N}$ was arbitrary, we conclude (\ref{dim-f-Lambda-infinite}), which finishes the proof.
\end{proof}

\section{Results about measures}\label{app-measures}

In this appendix we provide results about measures which we require in this paper; in particular uniqueness results about Laplace transforms. As the uniqueness theorem for two-sided Laplace transforms (see Theorem~\ref{thm-Laplace-epsilon}) was not immediately available in the literature, we provide a self-contained proof.

\begin{definition}
For a finite measure $\mu$ on $(\bbr^d,\calb(\bbr^d))$ we define its Fourier transform
\begin{align*}
\calf_{\mu} : \bbr^d \to \bbc, \quad \calf_{\mu}(u) := \int_{\bbr} e^{i \langle u,x\rangle} \mu(dx).
\end{align*}
\end{definition}

\begin{theorem}\label{thm-Fourier}
Let $\mu$ and $\nu$ be two finite measure on $(\bbr^d,\calb(\bbr^d))$ such that $\calf_{\mu} = \calf_{\nu}$. Then we have $\mu = \nu$.
\end{theorem}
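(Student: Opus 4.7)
My plan is to deduce $\mu = \nu$ from equality of their actions on a sufficiently rich class of test functions, namely the Schwartz space $\calS(\bbr^d)$, and then conclude by regularity of finite Borel measures. The key tool is a multiplication (Parseval-type) identity obtained via Fubini's theorem. No part of this is deep; the result is classical.

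First, for $\phi \in \calS(\bbr^d)$, define $\hat{\phi}(x) := \int_{\bbr^d} e^{i\langle u,x\rangle}\phi(u)\,du$. Since $\mu$ is finite and $\phi$ is integrable, Fubini's theorem yields
\begin{align*}
\int_{\bbr^d}\hat{\phi}(x)\,d\mu(x) = \int_{\bbr^d}\int_{\bbr^d}e^{i\langle u,x\rangle}\phi(u)\,du\,d\mu(x) = \int_{\bbr^d}\phi(u)\calf_\mu(u)\,du,
\end{align*}
and the analogous identity with $\nu$ in place of $\mu$. The hypothesis $\calf_\mu = \calf_\nu$ forces the right-hand sides to agree, so $\int\hat{\phi}\,d\mu = \int\hat{\phi}\,d\nu$ for every $\phi \in \calS(\bbr^d)$. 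Since the Fourier transform is a bijection of $\calS(\bbr^d)$ onto itself, every $\psi \in \calS(\bbr^d)$ arises as some $\hat{\phi}$, whence $\int\psi\,d\mu = \int\psi\,d\nu$ for all Schwartz $\psi$.

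Next I would pass from Schwartz functions to $C_c(\bbr^d)$ by mollification. Given $f \in C_c(\bbr^d)$, let $g_\epsilon$ denote a Gaussian approximate identity of scale $\epsilon > 0$, and set $\psi_\epsilon := f * g_\epsilon$. Because $f$ has compact support and $g_\epsilon$ is Schwartz, a straightforward estimate shows that $\psi_\epsilon$ and all its derivatives decay faster than any polynomial, so $\psi_\epsilon \in \calS(\bbr^d)$; moreover, $\psi_\epsilon \to f$ uniformly on $\bbr^d$ as $\epsilon \downarrow 0$ by uniform continuity of $f$. Since $\mu$ and $\nu$ are finite, uniform convergence gives $\int\psi_\epsilon\,d\mu \to \int f\,d\mu$ and similarly for $\nu$, hence $\int f\,d\mu = \int f\,d\nu$ for every $f \in C_c(\bbr^d)$.

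Finally, since both measures are finite Borel measures on $\bbr^d$, they are regular, and therefore uniquely determined by their action on $C_c(\bbr^d)$: indicators of open sets can be monotonely approximated by $C_c$ functions, so the two measures agree on open sets, hence on all Borel sets by a $\pi$--$\lambda$ argument. This yields $\mu = \nu$. The only technicalities worth a line are the Fubini justification (immediate from finiteness of $\mu$ and absolute integrability of $\phi$) and the verification that $f * g_\epsilon$ lies in $\calS(\bbr^d)$; neither presents a genuine obstacle.
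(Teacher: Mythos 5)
Your argument is correct. Note, however, that the paper does not actually prove Theorem~\ref{thm-Fourier}: its ``proof'' is a one-line citation of the literature (Klenke, Satz~15.6), so there is no in-paper argument to compare yours against; what you have written is a standard self-contained derivation of the classical uniqueness theorem for Fourier transforms of finite measures. Each step withstands scrutiny: the Fubini application is justified because $\mu$ is finite and $\phi$ is integrable; the bijectivity of the Fourier transform on the Schwartz space turns the multiplication identity into $\int\psi\,d\mu=\int\psi\,d\nu$ for every Schwartz function $\psi$; the convolution of a compactly supported continuous $f$ with a Gaussian mollifier is indeed Schwartz and converges to $f$ uniformly, so finiteness of the measures lets you pass to all of $C_c(\bbr^d)$; and two finite Borel measures that integrate every $C_c$ function identically must coincide, since indicators of open sets are increasing limits of $C_c$ functions and the open sets form a $\pi$-system generating $\calb(\bbr^d)$ (equality of total masses, needed for the $\pi$--$\lambda$ step, follows from $\calf_\mu(0)=\calf_\nu(0)$ or from exhausting $\bbr^d$ by $C_c$ functions). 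The only cosmetic remark is that your final reduction could be shortened by testing directly against a family of Gaussians, but the mollification route is perfectly sound.
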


\begin{proof}
See, for example, \cite[Satz~15.6]{Klenke}.
\end{proof}

\begin{definition}
For a finite measure $\mu$ on $(\bbr_+,\calb(\bbr_+))$ we define its Laplace transform
\begin{align*}
\call_{\mu} : \bbr_+ \to \bbr_+, \quad \call_{\mu}(\lambda) := \int_{\bbr_+} e^{-\lambda x} \mu(dx).
\end{align*}
\end{definition}

\begin{theorem}\label{thm-Laplace-plus}
Let $\mu$ and $\nu$ be two finite measure on $(\bbr_+,\calb(\bbr_+))$ such that $\call_{\mu} = \call_{\nu}$. Then we have $\mu = \nu$.
\end{theorem}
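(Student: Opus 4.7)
The plan is to reduce Theorem~\ref{thm-Laplace-plus} to the uniqueness theorem for Fourier transforms (Theorem~\ref{thm-Fourier}) via analytic continuation into the complex right half-plane. Define the complex Laplace transform
\begin{align*}
\tilde{\call}_{\mu}(z) := \int_{\bbr_+} e^{-zx}\,\mu(dx), \qquad z \in H := \{ z \in \bbc : {\rm Re}(z) \geq 0 \}.
\end{align*}
Since $|e^{-zx}| = e^{-{\rm Re}(z) x} \leq 1$ for $x \geq 0$ and $z \in H$, and $\mu$ is finite, this integral converges absolutely, and dominated convergence shows that $\tilde{\call}_{\mu}$ is continuous on $H$. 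Similarly, define $\tilde{\call}_{\nu}$.

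Next I would verify that $\tilde{\call}_{\mu}$ is holomorphic on the open half-plane $H^{\circ} = \{{\rm Re}(z) > 0\}$. This follows either by differentiating under the integral sign (the estimate $|x e^{-zx}| \leq x e^{-\delta x}$ is uniform on $\{{\rm Re}(z) \geq \delta\}$ for any $\delta > 0$, and $x e^{-\delta x}$ is bounded on $\bbr_+$ by a constant, which is $\mu$-integrable) or alternatively by Morera's theorem applied to any triangle in $H^{\circ}$, swapping integrals via Fubini.

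The key step is to apply the identity theorem for holomorphic functions on the connected open set $H^{\circ}$. By hypothesis $\tilde{\call}_{\mu}$ and $\tilde{\call}_{\nu}$ coincide on $(0,\infty) \subset H^{\circ}$, which has accumulation points inside $H^{\circ}$; hence $\tilde{\call}_{\mu} = \tilde{\call}_{\nu}$ throughout $H^{\circ}$. By continuity on $H$, the equality extends to the closed half-plane, in particular to the imaginary axis.

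Finally, setting $z = -iu$ for $u \in \bbr$ yields
\begin{align*}
\tilde{\call}_{\mu}(-iu) = \int_{\bbr_+} e^{iux}\,\mu(dx) = \calf_{\bar{\mu}}(u),
\end{align*}
where $\bar{\mu}$ denotes the trivial extension of $\mu$ to $(\bbr,\calb(\bbr))$ by $\bar{\mu}(\bbr \setminus \bbr_+) = 0$, and analogously for $\nu$. Thus $\calf_{\bar{\mu}} = \calf_{\bar{\nu}}$, and Theorem~\ref{thm-Fourier} gives $\bar{\mu} = \bar{\nu}$, whence $\mu = \nu$. The only mildly technical point is the holomorphy of $\tilde{\call}_{\mu}$ on $H^{\circ}$, which is routine; once this is in place, the identity theorem and Theorem~\ref{thm-Fourier} finish the argument cleanly.
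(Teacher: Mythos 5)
Your proof is correct. The paper itself gives no argument for Theorem~\ref{thm-Laplace-plus}, citing Klenke instead, so you cannot be matching or diverging from a written proof here; but your route --- extend the Laplace transform holomorphically to the open right half-plane, invoke the identity theorem to propagate the equality from $(0,\infty)$, pass to the imaginary axis by continuity, and conclude via the Fourier uniqueness theorem (Theorem~\ref{thm-Fourier}) --- is precisely the strategy the authors carry out in detail for the two-sided version (Theorem~\ref{thm-Laplace-epsilon}, via Lemma~\ref{lemma-holomorphic}), only with the vertical strip replaced by the half-plane. All the technical points check out: the bound $|e^{-zx}|\leq 1$ for $x\geq 0$, ${\rm Re}\,z\geq 0$ justifies absolute convergence and continuity up to the boundary; the estimate $x e^{-\delta x}\leq (e\delta)^{-1}$ gives holomorphy on $\{{\rm Re}\,z>\delta\}$ for every $\delta>0$; and $(0,\infty)$ has accumulation points in the open half-plane, so the identity theorem applies. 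No gaps.
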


\begin{proof}
See, for example, \cite[Satz~15.6]{Klenke}.
\end{proof}

\begin{definition}
Let $\mu$ be a finite measure on $(\bbr,\calb(\bbr))$, and let $\epsilon > 0$ such that
\begin{align}\label{ex-auf-streifen}
\int_{\bbr} e^{-\lambda x} \mu(dx) < \infty \quad \text{for all $\lambda \in (-\epsilon,\epsilon)$.}
\end{align}
Then we define its Laplace transform
\begin{align*}
\call_{\mu}^{\epsilon} : (-\epsilon,\epsilon) \to \bbr_+, \quad \call_{\mu}^{\epsilon}(\lambda) := \int_{\bbr} e^{-\lambda x} \mu(dx).
\end{align*}
\end{definition}

\begin{lemma}\label{lemma-holomorphic}
Let $\mu$ be a finite measure on $(\bbr,\calb(\bbr))$, let $\epsilon > 0$ such that (\ref{ex-auf-streifen}) is satisfied, and let $G \subset \bbc$ be the open set
\begin{align}\label{def-G}
G := \{ z \in \bbc : {\rm Re} \, z \in (-\epsilon,\epsilon) \}.
\end{align}
Then the function
\begin{align}\label{def-Laplace-two}
L_{\mu} : G \to \bbc, \quad L_{\mu}(z) := \int_{\bbr} e^{-zx} \mu(dx)
\end{align}
is holomorphic.
\end{lemma}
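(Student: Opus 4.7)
The plan is to verify well-definedness on $G$ and then apply Morera's theorem. First I would check that for each $z = a + ib \in G$, the integral defining $L_\mu(z)$ is absolutely convergent: since $|e^{-zx}| = e^{-ax}$ and $a \in (-\epsilon,\epsilon)$, this is exactly the finiteness asserted in (\ref{ex-auf-streifen}) applied with $\lambda = a$. So $L_\mu$ is a well-defined function $G \to \bbc$.

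The key technical step is a locally uniform domination. Given $z_0 \in G$, pick $c < d$ in $(-\epsilon,\epsilon)$ with $\operatorname{Re} z_0 \in (c,d)$, and let $S_{c,d} := \{z \in \bbc : \operatorname{Re} z \in [c,d]\}$. For every $z \in S_{c,d}$ and every $x \in \bbr$, the real map $a \mapsto e^{-ax}$ is monotone in $a$, hence
\begin{align*}
|e^{-zx}| \;=\; e^{-(\operatorname{Re} z)\,x} \;\leq\; e^{-cx} + e^{-dx},
\end{align*}
and the right-hand side is $\mu$-integrable by hypothesis (\ref{ex-auf-streifen}) applied to $\lambda = c$ and $\lambda = d$. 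This single dominant will support both the continuity of $L_\mu$ on $S_{c,d}$ (via dominated convergence, using that $z \mapsto e^{-zx}$ is continuous for each $x$) and the Fubini swap below.

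With continuity in hand, I would invoke Morera's theorem. For any closed triangle $T$ with trace contained in $S_{c,d}$, Fubini's theorem applies on the product of arc length on $T$ (a finite measure) and $\mu$, because the dominant $e^{-cx}+e^{-dx}$ is integrable in $x$ uniformly along $T$. Therefore
\begin{align*}
\oint_T L_\mu(z)\, dz \;=\; \oint_T \!\int_{\bbr} e^{-zx}\, \mu(dx)\, dz \;=\; \int_{\bbr}\! \oint_T e^{-zx}\, dz\; \mu(dx) \;=\; 0,
\end{align*}
where the inner contour integral vanishes because $z \mapsto e^{-zx}$ is entire for each fixed $x \in \bbr$ (Cauchy's theorem for a triangle). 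Since every $z_0 \in G$ lies in the interior of some such strip $S_{c,d} \subset G$, and triangles form a neighborhood basis, Morera's theorem yields that $L_\mu$ is holomorphic on $G$.

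I do not anticipate a real obstacle: the only subtlety is making sure the dominant $e^{-cx}+e^{-dx}$ is indeed uniform in $z$ over the chosen compact substrip (which reduces to the monotonicity of $a \mapsto e^{-ax}$ noted above), so both the continuity of $L_\mu$ and the Fubini exchange are rigorously justified. An entirely equivalent route would be to differentiate under the integral sign and identify $L_\mu'(z) = -\int_{\bbr} x e^{-zx}\,\mu(dx)$, using $|x| \leq C_\eta(e^{\eta x}+e^{-\eta x})$ for small $\eta > 0$ to obtain an integrable dominant; but the Morera argument avoids that bookkeeping.
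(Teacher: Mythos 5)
Your proposal is correct. The domination step is the heart of the matter and you get it right: for $\operatorname{Re} z \in [c,d] \subset (-\epsilon,\epsilon)$ one has $|e^{-zx}| = e^{-(\operatorname{Re}z)x} \leq e^{-cx} + e^{-dx}$ by monotonicity of $a \mapsto e^{-ax}$ in $a$ for each sign of $x$, and both terms are $\mu$-integrable by (\ref{ex-auf-streifen}); this gives well-definedness, continuity by dominated convergence, and the justification of the Fubini exchange, after which Cauchy--Goursat for the entire function $z \mapsto e^{-zx}$ and Morera's theorem finish the argument. The paper takes a shorter route: it verifies exactly three hypotheses --- integrability of $e^{-z\cdot}$ for each $z \in G$, holomorphy of $z \mapsto e^{-zx}$ for each $x$, and a locally uniform integrable dominant on compact balls --- and then cites a standard theorem on holomorphy of parameter-dependent integrals (Elstrodt, Satz IV.5.8). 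The hypothesis the paper must check nontrivially, namely the local dominant, is precisely your strip estimate, so the mathematical content coincides; what your version buys is self-containedness (you essentially reprove the cited theorem in this special case via Morera plus Fubini), at the cost of a somewhat longer write-up. Both are complete proofs.
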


\begin{proof}
We define the mapping
\begin{align*}
f : G \times \bbr \to \bbc, \quad f(z,x) := e^{-zx}.
\end{align*}
Then we easily verify that the following conditions are fulfilled:
\begin{enumerate}
\item[(a)] $f(z,\cdot) \in L^1$ for all $z \in G$.

\item[(b)] For all $x \in \bbr$ the mapping $f(\cdot,x) : G \to \bbc$ is holomorphic.

\item[(c)] For each compact ball $K \subset G$ there is a nonnegative function $g_K \in L^1$ such that $|f(z,\cdot)| \leq g_K$ for all $z \in K$.
\end{enumerate}
Therefore, the function $L_{\mu}$ is holomorphic by virtue of \cite[Satz~IV.5.8]{Elstrodt}.
\end{proof}

\begin{theorem}\label{thm-Laplace-epsilon}
Let $\mu$ and $\nu$ be two finite measures on $(\bbr,\calb(\bbr))$, and let $\epsilon > 0$ such that
\begin{align*}
\int_{\bbr} e^{-\lambda x} \mu(dx) < \infty \quad \text{and} \quad \int_{\bbr} e^{-\lambda x} \nu(dx) < \infty \quad \text{for all $\lambda \in (-\epsilon,\epsilon)$,}
\end{align*}
and $\call_{\mu}^{\epsilon} = \call_{\nu}^{\epsilon}$. Then we have $\mu = \nu$.
\end{theorem}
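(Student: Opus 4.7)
The plan is to extend each Laplace transform to a holomorphic function on a complex strip containing the real interval $(-\epsilon,\epsilon)$ and then invoke the identity theorem for holomorphic functions to transfer agreement on $(-\epsilon,\epsilon)$ to agreement on the whole imaginary axis, where the Laplace transforms coincide with Fourier transforms up to a sign. Uniqueness then follows from Theorem~\ref{thm-Fourier}.

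Concretely, I would first apply Lemma~\ref{lemma-holomorphic} to the measures $\mu$ and $\nu$ to obtain holomorphic functions $L_\mu,L_\nu:G\to\bbc$ on the open strip $G = \{z\in\bbc : \mathrm{Re}\,z \in (-\epsilon,\epsilon)\}$, defined by $L_\mu(z) = \int_{\bbr} e^{-zx}\mu(dx)$ and analogously for $\nu$. By construction, the restrictions to the real interval $(-\epsilon,\epsilon)$ coincide with the two-sided Laplace transforms, so by hypothesis $L_\mu(\lambda) = L_\nu(\lambda)$ for all $\lambda\in(-\epsilon,\epsilon)$. Since $(-\epsilon,\epsilon)$ is a subset of the connected open set $G$ with an accumulation point in $G$, the identity theorem for holomorphic functions (applied to the holomorphic function $L_\mu - L_\nu$ on $G$) yields $L_\mu \equiv L_\nu$ on all of $G$.

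In particular, the equality holds along the imaginary axis, which lies in $G$ because $\mathrm{Re}(it) = 0 \in (-\epsilon,\epsilon)$ for every $t\in\bbr$. For such $z = it$ one has
\begin{align*}
L_\mu(it) = \int_{\bbr} e^{-itx}\mu(dx) = \calf_\mu(-t),
\end{align*}
and likewise $L_\nu(it) = \calf_\nu(-t)$. Therefore $\calf_\mu(u) = \calf_\nu(u)$ for all $u\in\bbr$, and Theorem~\ref{thm-Fourier} implies $\mu = \nu$, completing the proof.

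There is no substantive obstacle; the only point that requires a little care is verifying that the identity theorem applies on the connected strip $G$ via an accumulation point, which is immediate since $(-\epsilon,\epsilon)$ is an interval of positive length inside $G$.
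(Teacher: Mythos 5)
Your proof is correct and follows essentially the same route as the paper: holomorphic extension to the strip via Lemma~\ref{lemma-holomorphic}, the identity theorem to propagate equality from $(-\epsilon,\epsilon)$ to all of $G$, evaluation on the imaginary axis to recover the Fourier transforms, and Theorem~\ref{thm-Fourier}. You even track the sign $L_\mu(it)=\calf_\mu(-t)$ a bit more carefully than the paper does; the conclusion is unaffected either way.
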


\begin{proof}
We define the open set $G \subset \bbc$ by (\ref{def-G}) and the functions $L_{\mu},L_{\nu} : G \to \bbc$ according to (\ref{def-Laplace-two}). Then $L_{\mu}$ and $L_{\nu}$ are holomorphic by Lemma~\ref{lemma-holomorphic}. Since $\call_{\mu}^{\epsilon} = \call_{\nu}^{\epsilon}$, we have $L_{\mu}|_H = L_{\nu}|_H$, where $H \subset G$ denotes the subset
\begin{align*}
H := \{ z \in \bbc : {\rm Re} \, z \in (-\epsilon,\epsilon) \text{ and } {\rm Im} \, z = 0 \}.
\end{align*}
By the identity theorem for holomorphic functions (see, for example, \cite[Satz~8.1.3]{Remmert}) we deduce that $L_{\mu} = L_{\nu}$. For all $u \in \bbr$ we have $iu \in G$, and hence
\begin{align*}
\calf_{\mu}(u) = \int_{\bbr} e^{iux} \mu(dx) = L_{\mu}(iu) = L_{\nu}(iu) = \int_{\bbr} e^{iux} \nu(dx) = \calf_{\nu}(u),
\end{align*}
showing that $\calf_{\mu} = \calf_{\nu}$. By Theorem~\ref{thm-Fourier}, we deduce that $\mu = \nu$.
\end{proof}

Our last auxiliary result of this appendix states that an equivalence class of a Lebesgue space can have at most one continuous representative.

\begin{lemma}\label{lemma-one-repr}
Let $I = [c,d]$ be an interval with $c,d \in \bbr$ and $c < d$, and let $F$ be an absolutely continuous measure on $(\bbr,\calb(\bbr))$ such that
\begin{align*}
\frac{dF}{d\lambda}(x) > 0 \quad \text{for $F$-almost all $x \in I$.} 
\end{align*}
Let $f,g : \bbr \to \bbr$ be two functions such that $f|_I$ and $g|_I$ are continuous, and we have
\begin{align}\label{f-gleich-g-as}
f(x) = g(x) \quad \text{for $\lambda$-almost all $x \in \bbr$.}
\end{align}
Then we have $f|_I = g|_I$.
\end{lemma}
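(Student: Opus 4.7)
The plan is to argue by contradiction, reducing to the standard observation that a continuous function on an interval which vanishes on a dense set (in particular, almost everywhere with respect to Lebesgue measure) must vanish identically.

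Set $h := (f-g)|_I$, which is continuous on $I$ by hypothesis, and note that the assumption $f = g$ $\lambda$-a.e.\ on $\bbr$ restricts to $h = 0$ $\lambda$-a.e.\ on $I$. Suppose for contradiction that $h(x_0) \neq 0$ for some $x_0 \in I$. Let $\varepsilon := |h(x_0)|/2 > 0$ and, by continuity of $h$ at $x_0$, choose $\delta > 0$ with $|h(x)| \geq \varepsilon$ on $U := (x_0 - \delta, x_0 + \delta) \cap I$. Since $c < d$, after possibly shrinking $\delta$ we may arrange that $U$ contains a non-degenerate open subinterval of $I$: take $(x_0, x_0 + \delta)$ if $x_0 < d$, or $(x_0 - \delta, x_0)$ if $x_0 = d$. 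This subinterval has strictly positive Lebesgue measure and $h$ is bounded away from zero on it, contradicting $h = 0$ $\lambda$-a.e.\ on $I$. Hence $h \equiv 0$ on $I$, i.e.\ $f|_I = g|_I$.

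The positivity hypothesis on the density $dF/d\lambda$ plays no role under the literal reading of the statement, but it is precisely what guarantees that $F$ and $\lambda$ have the same null sets on $I$; thus the same argument applies under the weaker hypothesis $f = g$ $F$-a.e., which is the natural setting in which the lemma is invoked in the proof of Proposition~\ref{prop-Fc-Fd}. In that reading the contradiction takes the form $F(U) = \int_U (dF/d\lambda)\,d\lambda > 0$, since $\lambda(U) > 0$ and the density is strictly positive $\lambda$-a.e.\ on $I$. I do not anticipate any real obstacle; the only point deserving care is the distinction between an interior $x_0$ and a boundary point $x_0 \in \{c,d\}$, which is dispatched by the trivial case analysis above.
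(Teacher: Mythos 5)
Your proof is correct and follows essentially the same route as the paper's: a contradiction argument using continuity of $f-g$ to produce a nondegenerate subinterval of positive Lebesgue measure on which $f \neq g$, contradicting the $\lambda$-a.e.\ equality. Your side remark is also accurate: under the literal hypothesis (\ref{f-gleich-g-as}) the density plays no role, and it only becomes relevant if one reads the assumption as $F$-a.e.\ equality, which is indeed how the lemma is used in Proposition~\ref{prop-Fc-Fd}.
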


\begin{proof}
By the continuity of $f$ and $g$, it suffices to prove that $f(x) = g(x)$ for all $x \in (c,d)$. Suppose, on the contrary, there exists $x \in (c,d)$ such that $f(x) \neq g(x)$. By the continuity of $f$ and $g$, there exists $\delta > 0$ such that $(x-\delta,x+\delta) \subset (c,d)$ and $f(y) \neq g(y)$ for all $y \in (x-\delta,x+\delta)$. Then, setting $\rho := \frac{dF}{d\lambda}$, we have
\begin{align*}
F((x-\delta,x+\delta)) = \int_{x-\delta}^{x+\delta} \rho(x) dx > 0,
\end{align*}
which contradicts (\ref{f-gleich-g-as}).
\end{proof}

\section{Results about linearly independent functions}\label{app-lin-ind}

In this appendix, we collect results about linearly independent functions.
Let $\calx$ be an infinite set. We denote by $\ell(\calx)$ the vector space of all functions $f : \calx \to \bbr$. For $f,g \in \ell(\calx)$ we define the line segment $[\![ f,g ]\!] \subset \ell(\calx)$ as
\begin{align*}
[\![ f,g ]\!] := \{ f + t(g - f) : t \in [0,1] \}.
\end{align*}

\begin{proposition}\label{prop-lin-ind-exp}
Let $f,g \in \ell(\calx)$ be such that the set $(g-f)(\calx)$ is infinite. Then the following statements are true:
\begin{enumerate}
\item The subspace $U \subset \ell(X)$ given by
\begin{align*}
U := \langle \exp(h) : h \in [\![ f,g ]\!] \rangle
\end{align*}
is infinite dimensional.

\item If $[\![ f,g ]\!] \subset \langle g-f \rangle$, then the subspace $V \subset \ell(X)$ given by
\begin{align*}
V := \langle \exp(h) - 1 : h \in [\![ f,g ]\!] \rangle
\end{align*}
is infinite dimensional.
\end{enumerate}
\end{proposition}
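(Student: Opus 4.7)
The strategy reduces both parts to a single analytic lemma that I would state and prove first: if $\lambda_0,\ldots,\lambda_n\in\bbr$ are pairwise distinct and $F(y) := \sum_{i=0}^n c_i e^{\lambda_i y}$ vanishes on an infinite subset $M\subset\bbr$, then $c_0=\cdots=c_n=0$. Indeed, if $M$ is unbounded (say above), pick $y_j\in M$ with $y_j\to+\infty$, let $\lambda_k$ be the largest $\lambda_i$ with $c_i\neq 0$, and divide by $e^{\lambda_k y_j}$: this forces $c_k=\lim_{j\to\infty}F(y_j)e^{-\lambda_k y_j}=0$, a contradiction. If $M$ is bounded it has a finite accumulation point, whence $F\equiv 0$ as an entire function and all $c_i$ vanish by the classical linear independence of distinct real exponentials. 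Now set $\varphi := g-f$; the hypothesis that $\varphi(\calx)$ is infinite gives $\varphi\not\equiv 0$ and supplies the infinite set $M := \varphi(\calx)$ required by the lemma.

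For part (1), every $h\in[\![f,g]\!]$ has the form $h=f+t\varphi$, so $\exp(h)=\exp(f)\cdot\exp(t\varphi)$. Pointwise multiplication by the nowhere-vanishing function $\exp(f)$ is a linear automorphism of $\ell(\calx)$, so it suffices to show that $\langle \exp(t\varphi) : t\in[0,1]\rangle$ is infinite dimensional. Given pairwise distinct $t_1,\ldots,t_n\in[0,1]$ and a relation $\sum_i c_i\exp(t_i\varphi)=0$ in $\ell(\calx)$, evaluating at $x\in\calx$ and setting $y:=\varphi(x)$ yields $\sum_i c_i e^{t_i y}=0$ on the infinite set $M$, and the lemma forces $c_1=\cdots=c_n=0$.

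For part (2), the inclusion $[\![f,g]\!]\subset\langle\varphi\rangle$ together with $\varphi\neq 0$ forces $f=\alpha\varphi$ and $g=\beta\varphi$ for scalars $\alpha,\beta$ with $\beta-\alpha=1$. Consequently $[\![f,g]\!]=\{s\varphi : s\in I\}$ where $I:=[\alpha,\alpha+1]$ is an interval of length one, so $V=\langle\exp(s\varphi)-1 : s\in I\rangle$. Pick pairwise distinct $s_1,\ldots,s_n\in I\setminus\{0\}$, which is possible since $I\setminus\{0\}$ is uncountable. If $\sum_i c_i(\exp(s_i\varphi)-1)=0$ in $\ell(\calx)$, evaluating at $x\in\calx$ with $y:=\varphi(x)$ gives
\begin{equation*}
-\sum_{i=1}^n c_i \;+\; \sum_{i=1}^n c_i\, e^{s_i y} \;=\; 0 \qquad\text{for all } y\in M.
\end{equation*}
Since the exponents $0,s_1,\ldots,s_n$ are pairwise distinct, the lemma forces $c_1=\cdots=c_n=0$.

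The only substantive step is the exponential-sum lemma itself; both parts reduce to it after an elementary reparametrization. The mildly subtle point in (2) is noticing that the hypothesis $[\![f,g]\!]\subset\langle g-f\rangle$ is exactly what lets us realize the whole segment as $\{s\varphi : s\in I\}$, so that the constant $-1$ contributes an extra exponent $0$, which is distinct from the nonzero $s_i$'s and thereby permits invoking the lemma.
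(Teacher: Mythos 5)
Your proof is correct, but it runs on a different engine than the paper's. The paper never proves a general statement about exponential sums vanishing on infinite subsets of $\bbr$; instead it restricts attention to \emph{dyadic} parameters $j2^{-m}$, so that the functions $\exp(j2^{-m}h)$ become powers $\exp(2^{-m}h)^{j}$ of a single function, and then it evaluates at finitely many points $x_0,\dots,x_d\in\calx$ where $h$ takes distinct values and invokes a Vandermonde determinant. That is a purely finite, algebraic argument, but it forces an extra selection step in part (2), where one must locate sufficiently many dyadic multiples $j2^{-n}\lambda h$ inside the segment before appending the constant function $\exp(0)=1$. Your single lemma --- a sum $\sum_i c_ie^{\lambda_i y}$ with distinct real exponents vanishing on an infinite set $M\subset\bbr$ has all coefficients zero, proved by the asymptotic argument when $M$ is unbounded and by Bolzano--Weierstrass plus the identity theorem for entire functions when $M$ is bounded --- handles \emph{arbitrary} distinct parameters at once, which lets you take any distinct $t_i\in[0,1]$ in part (1) and any distinct nonzero $s_i$ in part (2), with the constant $-1$ absorbed as the exponent $0$. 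What you pay is the use of complex analysis (or at least real-analytic identity-theorem machinery) where the paper gets by with linear algebra; what you gain is a cleaner and slightly stronger statement and the elimination of the dyadic bookkeeping. Your reduction of the hypothesis $[\![f,g]\!]\subset\langle g-f\rangle$ to $f=\alpha\varphi$, $g=(\alpha+1)\varphi$ is also a tidier way of exploiting that assumption than the paper's intersection-with-a-dyadic-lattice argument. The only point worth making explicit is the symmetric case in your lemma when $M$ is unbounded below rather than above (use the smallest exponent with nonzero coefficient and $y_j\to-\infty$); as written, ``say above'' glosses over this, but it is an immediate symmetry and not a gap.
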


\begin{proof}
We set $h := g-f$. First, we will show that for each $m \in \bbn$ the functions
\begin{align}\label{lin-ind-dyadic}
\exp(j 2^{-m} h), \quad j=-2^m,\ldots,2^m
\end{align}
are linearly independent in $\ell(X)$. Indeed, let $c_{-2^m},\ldots,c_{2^{m}} \in \bbr$ be such that
\begin{align*}
\sum_{k=-2^m}^{2^m} c_j \exp(j 2^{-m} h) = 0.
\end{align*}
Defining $d := 2^{m+1}$ and the vector $\gamma \in \bbr^{d+1}$ as $\gamma_j := c_{j - 2^m}$ for $j=0,\ldots,d$ we obtain
\begin{align*}
\exp(2^{-m} h) \sum_{j=0}^{d} \gamma_j \exp(2^{-m} h)^j = 0.
\end{align*}
Since $h(\calx)$ is infinite by assumption, there exist elements $x_0,\ldots,x_d \in \calx$ such that $h(x_i)$, $i=0,\ldots,d$ are pairwise different. We obtain
\begin{align*}
\sum_{j=0}^{d} \gamma_j \exp(2^{-m} h(x_i))^j = 0, \quad i=1,\ldots,d.
\end{align*}
Defining the Vandermonde matrix $A \in \bbr^{(d+1) \times (d+1)}$ as $A_{ij} := \exp(2^{-m} h(x_i))^j$ for $i,j=0,\ldots,d$ we obtain $A \cdot \gamma = 0$. Since $\exp(2^{-m} h(x_i))$, $i=0,\ldots,d$ are pairwise different, we deduce that $\gamma = 0$, and hence $c_{-2^m} = \ldots = c_{2^{m}} = 0$, showing the linear independence of the functions (\ref{lin-ind-dyadic}). Now, we are ready to prove the two statements:
\begin{enumerate}
\item Let $m \in \bbn$ be arbitrary, and let $c_{0},\ldots,c_{2^{m}} \in \bbr$ be such that
\begin{align*}
\sum_{j = 0}^{2^m} c_j \exp(f + j 2^{-m} h) = 0.
\end{align*}
Then we have
\begin{align*}
\exp(f) \sum_{j = 0}^{2^m} c_j \exp(j 2^{-m} h) = 0,
\end{align*}
and hence $c_0 = \ldots = c_{2^m} = 0$ by the linear independence of the functions (\ref{lin-ind-dyadic}). Since $m \in \bbn$ was arbitrary, we deduce that subspace $U$ is infinite dimensional.

\item Let $m \in \bbn$ be arbitrary. Since $[\![ f,g ]\!] \subset \langle h \rangle$, there exist $\lambda \in \bbr \setminus \{ 0 \}$ and $n \in \bbn$ such that the set
\begin{align*}
[\![ f,g ]\!] \cap \{ j 2^{-n} \lambda h : j = 1,\ldots,2^n \}
\end{align*}
has at least $m$ elements. Hence, by the linear independence of the functions (\ref{lin-ind-dyadic}) there are $h_1,\ldots,h_n \in [\![ f,g ]\!] \setminus \{ 0 \}$ such that with $h_0 := 0$ the functions
\begin{align}\label{lin-ind-line}
\exp(h_i), \quad i=0,\ldots,n
\end{align}
are linearly independent in $\ell(X)$. Now, let $c_1,\ldots,c_m \in \bbr$ be such that
\begin{align*}
\sum_{i=1}^m c_i ( \exp(h_i) - 1 ) = 0.
\end{align*}
Then we have
\begin{align*}
- \bigg( \sum_{i=1}^m c_i \bigg) \exp(h_0) + \sum_{i=1}^m c_i \exp(h_i) = 0,
\end{align*}
which implies $c_1 = \ldots = c_m = 0$ by the linear independence of the functions (\ref{lin-ind-line}). Since $m \in \bbn$ was arbitrary, we deduce that the subspace $V$ is infinite dimensional.
\end{enumerate}
\end{proof}

\begin{proposition}\label{prop-lin-ind-exp-ps}
Let $f \in \ell(X)$ be such that there is a sequence $(x_n)_{n \in \bbn} \subset X$ with $f(x_n) \neq 0$ for all $n \in \bbn$ and $f(x_n) \to 0$ as $n \to \infty$. Then, for all $m \in \bbn$ and all pairwise different $t_1,\ldots,t_m \in \bbr$ the functions
\begin{align}\label{lin-ind-space}
\exp(t_j f), \quad j=1,\ldots,m
\end{align}
are linearly independent in $\ell(X)$.
\end{proposition}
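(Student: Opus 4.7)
The plan is to reduce the statement to the classical fact that real exponentials with pairwise distinct exponents are linearly independent, by exploiting the accumulation property of $f(x_n)$ at $0$ together with the identity theorem for real analytic functions, in the spirit of the argument used in Proposition~\ref{prop-lin-ind-of-span-inf}.

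More precisely, suppose $c_1,\ldots,c_m \in \bbr$ satisfy
\begin{align*}
\sum_{j=1}^m c_j \exp(t_j f) = 0 \quad \text{in $\ell(\calx)$.}
\end{align*}
Evaluating at each $x_n$ yields $\sum_{j=1}^m c_j \exp(t_j f(x_n)) = 0$ for all $n \in \bbn$. I would then introduce the auxiliary function
\begin{align*}
g : \bbr \to \bbr, \quad g(y) := \sum_{j=1}^m c_j \exp(t_j y),
\end{align*}
which is real analytic (in fact entire). By construction, $g(f(x_n)) = 0$ for all $n$, and by hypothesis the points $f(x_n)$ are nonzero and converge to $0$. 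Hence the zero set of $g$ has $0$ as an accumulation point, so the identity theorem for real analytic functions forces $g \equiv 0$ on $\bbr$.

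It remains to conclude from $g \equiv 0$ that $c_1 = \ldots = c_m = 0$. For this I would differentiate $g$ repeatedly at $y = 0$, which gives $\sum_{j=1}^m c_j t_j^k = 0$ for every $k \in \bbn_0$. Taking $k = 0,\ldots,m-1$ yields the linear system $M c = 0$ with $M \in \bbr^{m \times m}$ the Vandermonde matrix $M_{kj} = t_j^k$; since $t_1,\ldots,t_m$ are pairwise distinct, $\det M \neq 0$, so $c = 0$, as required.

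I do not see a genuine obstacle here: both the identity theorem step and the Vandermonde step are standard, and the only input from the hypothesis is that $0$ is an accumulation point of $f(\calx) \setminus \{0\}$, which is precisely what allows the identity theorem to be invoked. The step that needs the most care is ensuring that the convergence $f(x_n) \to 0$ with $f(x_n) \neq 0$ really does force $g \equiv 0$ on a neighborhood of $0$ (hence on all of $\bbr$ by analyticity), but this is immediate from the standard identity theorem for power series, exactly as in the proof of Proposition~\ref{prop-lin-ind-of-span-inf}.
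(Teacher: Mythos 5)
Your proposal is correct and follows essentially the same route as the paper's proof: both reduce to the auxiliary entire function $g(y) = \sum_{j=1}^m c_j \exp(t_j y)$, use that its zeros $f(x_n)$ are nonzero and accumulate at $0$ to invoke the identity theorem (the paper phrases this via vanishing power series coefficients, you via vanishing derivatives at $0$, which is the same thing), and conclude with the nonsingularity of the Vandermonde matrix built from the pairwise distinct $t_j$.
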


\begin{proof}
This proof has some natural similarity to that in \cite[Thm.~7.1]{Tappe-Levy}. In the sequel, let $c_1,\ldots,c_m \in \bbr$ be such that
\begin{align*}
\sum_{j=1}^m c_j \exp(t_j f) = 0.
\end{align*}
We define the function
\begin{align*}
g : \bbr \to \bbr, \quad g(z) := \sum_{j=1}^m c_j \exp(t_j z).
\end{align*}
Then $g$ has the power series representation
\begin{align*}
g(z) = \sum_{i=0}^{\infty} a_i z^i, \quad z \in \bbr,
\end{align*}
where the coefficients are given by
\begin{align*}
a_i = \sum_{j=1}^m c_j t_j^i, \quad i \in \bbn_0.
\end{align*}
Indeed, by the power series representation of the exponential function, for each $z \in \bbr$ we have
\begin{align*}
g(z) = \sum_{j=1}^m c_j \exp(t_j z) = \sum_{j=1}^m c_j \sum_{i=0}^{\infty} \frac{(t_j z)^i}{i!} = \sum_{i=0}^{\infty} \frac{1}{i!} \bigg( \sum_{j=1}^m c_j t_j^i \bigg) z^i = \sum_{i=0}^{\infty} a_i z^i.
\end{align*}
We define the sequence $(z_n)_{n \in \bbn} \subset \bbr \setminus \{ 0 \}$ as $z_n := f(x_n)$ for $n \in \bbn$. Then we have $z_n \to 0$ as $n \to \infty$ and $g(z_n) = 0$ for all $n \in \bbn$. The identity theorem for power series applies and yields $a_i = 0$ for all $i \in \bbn_0$. Defining the Vandermonde matrix $A \in \bbr^{m \times m}$ as $A_{ji} := t_j^i$ for $i = 0,\ldots,m-1$ and $j=1,\ldots,m$, and the vector $c := (c_1,\ldots,c_m)^{\top} \in \bbr^m$, we obtain $A^{\top} \cdot c = 0$. Since $t_1,\ldots,t_m$ are pairwise different by assumption, we deduce that $c = 0$, which proves the linear independence of the functions (\ref{lin-ind-space}).
\end{proof}

\end{appendix}

\end{document}